\documentclass[11pt,letterpaper]{article}

\usepackage[T1]{fontenc}
\usepackage{lmodern}
\usepackage[margin=1in]{geometry}
\usepackage{amsmath,amssymb,mathtools}
\usepackage{dsfont}
\usepackage{amsthm}
\usepackage{microtype}
\usepackage{booktabs}
\usepackage{array}
\usepackage{enumitem}
\usepackage{xcolor}
\usepackage{url}
\usepackage{hyperref}
\newcommand{\doi}[1]{\href{https://doi.org/#1}{\nolinkurl{#1}}}
\newtheorem{theorem}{Theorem}
\newtheorem{lemma}{Lemma}
\newtheorem{proposition}{Proposition}
\newtheorem{corollary}{Corollary}
\theoremstyle{definition}
\newtheorem{definition}{Definition}
\theoremstyle{remark}
\newtheorem{remark}{Remark}

\usepackage[nameinlink,capitalise,noabbrev]{cleveref}
\usepackage{xspace}

\hypersetup{
  colorlinks=true,
  linkcolor=blue!45!black,
  citecolor=green!35!black,
  urlcolor=blue!55!black,
  pdftitle={Leakage Resilience Below Half: Shamir Secret Sharing and Generalized Reed-Solomon Codes},
  pdfauthor={Anonymous Authors},
  pdfsubject={STOC submission draft},
  pdfkeywords={Shamir secret sharing, local leakage resilience, Reed--Solomon codes, list decoding, Gowers norms}
}

\setlist[itemize]{leftmargin=1.35em,itemsep=0.25em,topsep=0.35em}
\setlist[enumerate]{leftmargin=1.55em,itemsep=0.25em,topsep=0.35em}

\newcommand{\Fp}{\mathbb F_p}
\newcommand{\C}{\mathbb C}

\newcommand{\E}{\mathbb E}
\newcommand{\T}{\mathsf{T}}
\renewcommand{\vec}[1]{\boldsymbol{#1}}
\newcommand{\Prb}{\mathop{\rm Pr}}
\newcommand{\SD}{\mathop{\rm SD}}
\newcommand{\wt}{\mathop{\rm wt}}
\newcommand{\supp}{\mathop{\rm supp}}
\newcommand{\distH}{d_{\rm H}}
\newcommand{\Share}{\mathsf{share}}

\newcommand{\GRS}{\mathsf{GRS}}
\newcommand{\List}{\mathsf{List}}
\newcommand{\poly}{\mathsf{poly}}
\newcommand{\eps}{\varepsilon}
\newcommand{\zetaGap}{\zeta}
\newcommand{\one}{\mathds{1}}

\newcommand{\abs}[1]{\left|#1\right|}
\newcommand{\norm}[1]{\left\lVert#1\right\rVert}

\newcommand{\floor}[1]{\left\lfloor#1\right\rfloor}
\newcommand{\ceil}[1]{\left\lceil#1\right\rceil}
\newcommand{\comp}{\mathcal C}

\crefname{equation}{Eq.}{Eqs.}
\Crefname{equation}{Equation}{Equations}

\title{Local Leakage Resilience of Shamir Secret Sharing:\\ Below Half Threshold Rate}
\author{Hai H. Nguyen\thanks{\textcolor{black}{\texttt{hai.h.nguyen@nus.edu.sg}}}}
\date{}

\begin{document}
\pagenumbering{gobble}
\hypersetup{pageanchor=false}
\maketitle

\begin{abstract}
We study the local leakage resilience of Shamir secret sharing, where
an adversary leaks a bounded number of bits from every share. For \(n\)
parties and reconstruction threshold \(k\) over a prime field \(\Fp\)
with \(p>n\), we prove the following.
\begin{itemize}
  \item There is an absolute constant \(c_{\mathrm{bh}}<1/2\) such
    that, for \(k\ge c_{\mathrm{bh}}n\), Shamir sharing is resilient
    to arbitrary one-bit leakage from every share. The best previous
    threshold was \(0.668n\).
  \item For every fixed \(0<\delta<1/2\) and \(k\ge(1/2+\delta)n\),
    Shamir sharing is resilient to \(\floor{\frac{\delta}{24}\log_2n}\) bits
    of arbitrary local leakage per share. By a recent attack, this
    number of bits is optimal up to a constant factor depending on
    \(\delta\).
  \item For every fixed threshold rate \(c>0\) and \(k\ge cn\), Shamir
    sharing is resilient to quadratic-residue leakage and to almost all
    local leakage functions of up to \(\eta_c\log_2p\) bits, for a
    constant \(\eta_c>0\).
\end{itemize}
% All results hold with negligible error for all sufficiently large \(n\),
% with no upper bound on \(p\).

All three results follow from one analytic framework, which we develop
for generalized Reed--Solomon codes: local leakage from a uniform
codeword is statistically close to the same leakage from independent
uniform field elements. The framework expresses the leakage distance
as the norm of a function on the leakage outputs. The Efron--Stein
decomposition splits this function into orthogonal components indexed
by supports of dual codewords. Two tools bound the components: list
recovery of the dual code, which is ordinary list decoding for one-bit
leakage, and a generalized von Neumann inequality for MDS systems of
linear forms that retains linearly many Gowers norms. The first tool
gives the result above half, the second gives the constant-rate
result, and applying both to different ranges of supports in the same
decomposition gives the result below half.
\end{abstract}

\clearpage
\setcounter{tocdepth}{2}
\tableofcontents
\clearpage
\pagenumbering{arabic}
\hypersetup{pageanchor=true}
\section{Introduction}
\label{sec:introduction}

Shamir secret sharing~\cite{Shamir79} is a fundamental cryptographic
primitive with many applications. In an \((n,k)\)-Shamir scheme, a secret is encoded in
\(n\) evaluations, called shares, of a random polynomial of degree below \(k\) whose
constant coefficient is the secret. Any \(k\) shares determine the
secret, and any fewer reveal nothing about it. 
% We call \(k/n\) the
% \emph{threshold rate}, or simply the \emph{rate}. 
The algebraic
structure also allows efficient computation on shared secrets, which makes the
scheme a central building block of secure multiparty computation
(MPC)~\cite{BGW88,CCD88}. Classical MPC protocols tolerate passive
corruption of fewer than half the parties, or active corruption of
fewer than one third, with reconstruction thresholds near \(n/2\) and
\(n/3\) respectively.

What happens if, instead of learning a few shares in full, an
adversary learns a little about \emph{every} share? The question is
motivated by side-channel attacks, which extract information from
physical implementations through observations such as running time
and power consumption~\cite{Kocher96,KocherJaffeJun99}. Benhamouda,
Degwekar, Ishai, and Rabin~\cite{BDIR18} formalized \emph{locally
leakage-resilient secret sharing}.\footnote{The notion also appears
implicitly in the work of Goyal and Kumar~\cite{GK18}.} In this
model the adversary obtains a bounded number of bits from each share
through a function of its choice, and local leakage resilience
requires the joint leakage to be statistically independent of the
secret. The guarantee is strong because the leakage functions are
arbitrary: even one-bit leakage permits any Boolean function of each
share, not merely a bit of its representation.

% The difficulty is in the collective effect of these observations.
% For each fixed secret, any \(k-1\) Shamir shares are independent
% uniform field elements. Thus any \(k-1\) leakage outputs already
% have exactly the same joint distribution for every secret. Any
% distinguishing information in the full collection must come from
% dependencies involving more coordinates. 
% Whatever distinguishes secrets must come from dependencies among
% \(k\) or more coordinates, each seen through one bit.

Local leakage can be surprisingly powerful. For suitable parameters
over binary extension fields, Reed--Solomon repair
algorithms~\cite{GuruswamiWootters16} even recover the entire secret
from one bit of leakage per share. Over prime fields, Benhamouda et
al.~\cite{BDIR18} proved resilience to arbitrary one-bit local
leakage when \(k\ge0.907n\), and substantial effort has gone into
lowering this threshold: independent improvements gave
\(0.8675n\)~\cite{MPSW21} and \(0.85n\)~\cite{BDIR21}, followed by
\(0.78n\)~\cite{MNPW22}, \(0.69n\)~\cite{KleinKomargodski23}, and
\(0.668n\)~\cite{Kasser24}.\footnote{The last two assume a field of
size \(2^{o(n)}\).} 
Below half, resilience is known only for restricted leakage
classes: physical-bit leakage with high probability over randomly
chosen evaluation points in sufficiently large prime
fields~\cite{MNPSW21}, quadratic-residue and almost all one-bit local
leakage~\cite{Nguyen24}, and sufficiently unbalanced
leakage~\cite{KleinKomargodski23}. Previous guarantees against arbitrary
one-bit local leakage all require threshold rates strictly above half.
% The central question is therefore:
% \begin{quote}
% \itshape
% Is Shamir sharing resilient to arbitrary one-bit local leakage when
% the reconstruction threshold reaches, or falls below, half of
% the parties?
% \end{quote}

Benhamouda et al.~\cite{BDIR18,BDIR21} conjectured that Shamir sharing over prime fields is resilient to arbitrary one-bit local leakage at every positive constant rate.\footnote{Their conjecture applies more generally to secret-sharing schemes based on MDS codes.} Despite the subsequent improvements, this conjecture remained open even at rate half.
There is also a technical obstacle. Maji et al.~\cite{MPSW21}
showed that the bound obtained by summing Fourier magnitudes in
earlier analyses cannot establish resilience below rate half,
already for quadratic-residue leakage.\footnote{For a share \(x\in\Fp\),
this one-bit leakage outputs \(1\) if \(x=y^2\) for some \(y\in\Fp\),
and \(0\) otherwise.}

\emph{Summary of our results.}
We prove that $(n,k)$-Shamir sharing is resilient whenever \(k\ge c_{\mathrm{bh}}n\), for an absolute constant \(c_{\mathrm{bh}}<1/2\), establishing the conjecture in a range that extends below half. We also obtain logarithmic-bit leakage resilience above half (by any fixed constant), and guarantees against restricted leakage families at every positive constant rate.

% We answer this question affirmatively, proving security whenever
% \(k\ge c_{\mathrm{bh}}n\) for an absolute constant
% \(c_{\mathrm{bh}}<1/2\).
% We also address two complementary questions:
% how many bits of leakage per share Shamir sharing withstands above
% half, and which leakage families it withstands at every constant
% threshold rate.
% The next subsection provides the precise statements.
% Our approach
% establishes a pseudorandomness property of generalized
% Reed--Solomon codes under local observations, from which the
% secret-sharing guarantees follow. The results below give the
% precise bounds in these three regimes.

\subsection{Our contributions}
\label{sec:contributions}

We first introduce notation.
Consider $(n,k)$-Shamir sharing with $n$ parties, reconstruction threshold $k$, and distinct nonzero evaluation points  \(\vec \alpha = (\alpha_1,\ldots,\alpha_n)\) over a prime field $\Fp$, where $p >n$.
To share a secret $s\in\Fp$, this scheme samples a uniformly random polynomial $P \in \Fp[X]$ of degree less than $k$ such that $P(0) = s$ and gives party $i$ the share $P(\alpha_i)$.
An \(m\)-bit local leakage function is a tuple of maps $\vec \tau = (\tau_1, \tau_2, \ldots, \tau_n)$, where \(\tau_i\colon\Fp\to\{0,1\}^m\).
For \(\vec x=(x_1,\ldots,x_n)\in\Fp^n\), write
\[
  \vec\tau(\vec x):=(\tau_1(x_1),\ldots,\tau_n(x_n)).
\]
The \emph{leakage distribution} induced by \(\vec\tau\) on secret
\(s\) is the distribution of
\(
  (\tau_1(P(\alpha_1)),\ldots, \tau_n(P(\alpha_n))\bigr),
\)
where \(P\) is sampled as above.
The scheme is \(\eps\)-resilient to $m$-bit local leakage if, for every such
$m$-bit local leakage function, the leakage distributions of any two secrets have
statistical distance at most \(\eps\). 
We call \(k/n\) the \emph{threshold rate} and \(m/\log_2p\) the
\emph{leakage rate}, the fraction of each share's bit length revealed
by the leakage.
All following results hold for every
choice of evaluation points and all sufficiently large \(n\), with
constants and the required lower bound on \(n\) depending only on the
fixed parameters. 
In particular, there is no upper bound on \(p\).

\paragraph{Below half.} Our first result crosses the $1/2$ rate for one-bit local leakage.
\begin{theorem}[Below half]
\label{thm:intro-main}
There is an absolute constant \(c_{\mathrm{bh}}<1/2\) such that every
\((n,k)\)-Shamir scheme with \(k\ge c_{\mathrm{bh}}n\) is
\(\exp(-\Omega(n))\)-resilient to one-bit local leakage.
\end{theorem}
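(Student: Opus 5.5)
We reduce \cref{thm:intro-main} to a pseudorandomness statement for generalized Reed--Solomon codes. The target is that for $C=\GRS_{n,k}$ and any one-bit $\vec\tau$, the distribution of $\vec\tau(\vec x)$ for uniform $\vec x\in C$ is $\exp(-\Omega(n))$-close to that of $\vec\tau(\vec u)$ for uniform $\vec u\in\Fp^n$. Fixing the secret $s$ restricts the codeword to a coset of a $\GRS_{n,k-1}$ code, which is again a GRS code up to a shift. A shift can be absorbed into the $\tau_i$. So it suffices to prove the closeness statement for every coset, and then apply the triangle inequality between any two secrets.

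Let $f(\vec b)=\Pr[\vec\tau(\vec x)=\vec b]/\Pr[\vec\tau(\vec u)=\vec b]-1$ on $\{0,1\}^n$, viewed in $L^2$ of the product measure induced by the $\tau_i(u_i)$. By Cauchy--Schwarz, twice the statistical distance is at most $\norm{f}_2$. Take the Efron--Stein decomposition $f=\sum_S f_S$. Expanding the codeword indicator in characters of the dual code shows $f_S$ is nonzero only when $S$ is the support of a nonzero dual codeword. The dual is a GRS code of dimension $n-k$, so $|S|\ge k+1$. Moreover,
\[
f_S(\vec b)=\sum_{\vec y\in C^\perp,\ \supp\vec y=S}\ \prod_{i\in S}\widehat{g_{i,b_i}}(y_i),
\]
with $g_{i,b}=\one[\tau_i=b]/\Pr[\tau_i(u)=b]$ minus its mean. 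By orthogonality, $\norm{f}_2^2=\sum_S\norm{f_S}_2^2$, and we bound the terms in two ranges of $|S|$.

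\emph{Large supports, $|S|\ge\theta n$.} Here we use list decoding of the dual code. For one-bit leakage, $\norm{f_S}_2^2$ is controlled by sums over pairs of dual codewords of products of $|\hat g(y_i)|$. Since the Fourier mass of a balanced-ish Boolean function concentrates, we bound the number of dual codewords $\vec y$ with $|\widehat{g_i}(y_i)|$ large on many coordinates. That count is the number of codewords agreeing with a ``popular frequency'' list at many positions. The Johnson bound for the dual GRS code, which has distance $k+1$, makes this list polynomial, giving $\sum_{|S|\ge\theta n}\norm{f_S}^2\le \exp(-\Omega(n))$. This is the step that alone gives rate above $1/2$.

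\emph{Small supports, $k+1\le|S|<\theta n$.} Here we use the generalized von Neumann inequality for MDS systems of linear forms. It bounds $|\E\prod_{i\in S} g_i(L_i)|$ by a product of Gowers $U^{s}$ norms of linearly many of the $g_i$, where the $L_i$ parametrize the restriction of the code to $S$. Since $|S|<\theta n$ and $k\ge c n$, the relevant complexity $s$ stays bounded. For arbitrary Boolean $g$ the Gowers norm is at most $1$ but strictly less with a constant gap after centering. A product of linearly many such factors gives $\exp(-\Omega(|S|))$. Summing over the at most $\binom{n}{|S|}$ supports must still give $\exp(-\Omega(n))$.

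The main obstacle is parameter balancing. We need a single $\theta$ such that the list-decoding count is good for $|S|\ge\theta n$ even when $k<n/2$, while the Gowers product beats the entropy $\binom{n}{|S|}$ for $|S|<\theta n$. Neither range works alone below half, so $c_{\mathrm{bh}}$ comes from a small perturbation argument near $1/2$: we expect only a tiny margin $\zetaGap$. We would first establish a quantitative uniform gap $\norm{g}_{U^2}\le 1-\zetaGap$ for all centered Boolean $g$ on $\Fp$ with $p>n$, handling near-constant $\tau_i$ separately since they contribute little. We would then take $c_{\mathrm{bh}}=1/2-\Theta(\zetaGap)$.
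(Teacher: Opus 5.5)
There is a genuine gap, and it starts at your first step: bounding twice the statistical distance by $\norm f_2$ and then bounding $\sum_S\norm{f_S}_2^2$ cannot work at or below half.

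Take $\tau_i(x)=\one[x=0]$ for every $i$. These are near-constant maps, exactly the kind you planned to set aside. Take the zero-secret code of dimension $k-1$. The all-ones output has probability $p^{-(k-1)}$ under the code and $p^{-n}$ under uniform input. So $\norm f_2^2$, which is a $\chi^2$-divergence, is at least $p^{n-2k+2}-1\ge p^2-1$ whenever $k\le n/2$. The statistical distance is small only because this event is rare; an $L^1$ bound sees that, and an $L^2$ bound does not. This mass sits on supports of size above $2(k-1)$, which lie inside your large-support range, so no list-decoding count can handle that range below half. Concretely, with $j\le k-1$ retained Fourier factors the radius is $n-j\ge n-k+1>k-1$, beyond the dual's capacity. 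Fixing $j$ coordinates of the MDS dual already leaves $p^{n-k+1-j}$ words, which always outweighs the decay $2^{-(k-1-j)}$.

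Separately, the Johnson bound is too weak even above half. At dual relative distance near $1/2$ it reaches only radius about $0.29n$, whereas the transfer needs radius near $n/2$. This is why the paper uses capacity-achieving, field-size-independent list decoding (\cref{thm:rs-list-decoding}).

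Your small-support range fails for entropy reasons. Near $|S|\approx n/2$ there are about $2^n$ supports. A per-support Gowers bound for arbitrary Boolean maps gives only $\kappa^{\Theta(n)}$, with $\kappa$ an absolute constant near one; the indicator of a half-interval already has $U^2$ norm about $3^{-1/4}$. So the summed bound grows exponentially, even after squaring. In $L^2$ there is a further problem: the normalized $g_{i,b}$ have unbounded sup norms for biased maps, and the von Neumann inequality charges those on every non-retained coordinate.

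The paper assigns the tools the other way round and mixes the two norms (\cref{prop:master-inequality}).
\begin{itemize}
\item \emph{Supports of size $k\le r\le(1-\varepsilon)n$} are handled in $L^2$ by list decoding. The $n-r$ zero coordinates of a weight-$r$ dual word count as agreements with the list $\{0\}$ (\cref{lem:fixed-support-list-recovery}). This lowers the required radius to $r-j$ with $j\le k-1$. Capacity list decoding at $R=\floor{k-1-\zetaGap n}$ then covers every $r\le k+R-1\approx(2c-\zetaGap)n$. All supports of one size are bounded together, at a cost of $\binom nj/\binom rj$.
\item \emph{The $e^{nh(\varepsilon)+O(\log n)}$ supports above $(1-\varepsilon)n$} are bounded one at a time in $L^1$, with $f_i=2(\tau_i-\theta_i)$. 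The tool is the MDS von Neumann inequality with $t=3$; at rate half $q_2(n,k)\le0$, so $U^2$ retains no factor. It is combined with the uniform bound $\norm{f_i}_{U^3}\le(3/4)^{1/8}$ of \cref{thm:qual-local-bounds}.
\item \emph{Biased maps} are handled by splitting on $B=\sum_{i\in A}\mu_i^2$ (\cref{lem:qual-padded-correlation}). For small $B$, the sup-norm loss $\prod(1+|\mu_i|)\le e^{\sqrt{nB}}$ is absorbed. For large $B$, Cauchy--Schwarz over two disjoint blocks of $k-1$ independent coordinates gives $2^{n-2k+2}\prod\sqrt{1-\mu_i^2}$.
\end{itemize}
Both exponents are negative at $c=1/2$, zero gap and small $\varepsilon$. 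Continuity then gives $c_{\mathrm{bh}}<1/2$.
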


The previous best bound for arbitrary one-bit leakage
was \(0.668n\), under the field-size restriction
\(p=2^{o(n)}\)~\cite{Kasser24}. Our theorem crosses half without
this restriction.\footnote{This allows a \(\lambda\)-bit prime \(p\)
with \(n=O(\lambda)\), a common parameter regime in MPC.}
% Earlier below-half guarantees for Shamir sharing
% concerned restricted leakage classes: physical bits with randomly
% chosen evaluation points~\cite{MNPSW21}, quadratic-residue and almost
% all Boolean leakage~\cite{Nguyen24}, and sufficiently unbalanced
% leakage~\cite{KleinKomargodski23}. 
% Below-half results are also known
% for ramp sharing~\cite{DoronEtAl26}, which allows a gap between the
% privacy and reconstruction thresholds.
In particular, the theorem covers \(k=\ceil{n/2}\), the reconstruction
threshold at the classical \(n/2\) passive-corruption bound in MPC.
The proof gives the explicit, unoptimized constant
\(c_{\mathrm{bh}}=1/2-10^{-14}\) (\cref{cor:explicit-below-half}).
The full BDIR conjecture remains open, in particular at threshold
rate \(1/3\). At smaller thresholds, Nielsen and Simkin~\cite{NielsenSimkin20} give
an inefficient one-bit leakage attack when \(k\) is of order $n /\log p$, which is
\(n/\log n\) when \(p=\poly(n)\).
Closing this gap remains an open problem.

\paragraph{Asymptotically optimal leakage above half.}
Progress on the reconstruction threshold has concentrated on one-bit
leakage. A second basic question is how many bits per share Shamir
sharing withstands. We show that, at every threshold rate above half,
the answer is logarithmic in \(n\): for polynomial-size fields
\(p=\poly(n)\), this is a constant fraction of each share's bit length.

\begin{theorem}[Logarithmic leakage above half]
\label{thm:intro-optimal}
Fix \(0<\delta<1/2\). Any \((n,k)\)-Shamir sharing with
\(k\ge(1/2+\delta)n\) is
\(\exp(-\Omega_\delta(n^{1-\delta/12}))\)-resilient to
\(\floor{(\delta/24)\log_2 n}\)-bit local leakage.
\end{theorem}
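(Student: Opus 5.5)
The plan is to reduce to a pseudorandomness statement for generalized Reed--Solomon codes and then bound the Efron--Stein components by list recovery of the dual code, as the abstract describes.

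\textbf{Step 1 (reduction).} By the triangle inequality it suffices to show that, for every \(m\)-bit \(\vec\tau\), the leakage \(\vec\tau(\vec c)\) from a uniform codeword \(\vec c\) of the relevant code is \(\eps\)-close to \(\vec\tau(\vec u)\) with \(\vec u\) uniform in \(\Fp^n\). Fixing the secret makes the shares a uniform element of a coset of a GRS code of dimension \(k-1\) (polynomials vanishing at \(0\), shifted by \(s\)). Absorbing the shift into the \(\tau_i\), we may take the code to be \(C=\GRS_{k-1}\), whose dual \(C^\perp\) is GRS of dimension \(n-k+1\le(1/2-\delta)n+1\) and MDS with minimum distance \(d^\perp=k\ge(1/2+\delta)n\).

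\textbf{Step 2 (Fourier/Efron--Stein expansion).} Write \(L=\{0,1\}^m\) and \(f(\ell)=\Pr[\vec\tau(\vec c)=\ell]/\Pr[\vec\tau(\vec u)=\ell]-1\), so that \(2\SD\le\norm f_{2}\) in \(L^2(\mu)\), where \(\mu\) is the product leakage distribution. Poisson summation gives \(\Pr[\vec\tau(\vec c)=\ell]=\sum_{\vec y\in C^\perp}\prod_i\widehat{\one_{\tau_i^{-1}(\ell_i)}}(y_i)\). Grouping dual codewords by support \(S\) yields the Efron--Stein components \(f=\sum_{S\ne\emptyset}f_S\), with \(f_S\) supported on supports \(S\) of dual codewords, so \(|S|\ge k\). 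Orthogonality then gives \(\norm f_2^2=\sum_S\norm{f_S}_2^2\).

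\textbf{Step 3 (bounding components).} Normalize \(g_{i,\ell}=\one_{\tau_i^{-1}(\ell)}/\sqrt{\mu_i(\ell)}\), centered. Then \(\norm{f_S}_2^2\) is a sum over \(\ell_S\) of \(\bigl|\sum_{\vec y\in C^\perp,\ \supp\vec y=S}\prod_{i\in S}\hat g_{i,\ell_i}(y_i)\bigr|^2\).

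For a fixed choice of leakage values, each local function has Fourier mass at most \(1\) in \(\ell_2\). Only dual codewords \(\vec y\) whose coordinates \(y_i\) lie mostly in the "heavy" Fourier coefficients of \(g_{i,\ell_i}\) contribute substantially. At most \(r=2^{2t}\) frequencies per coordinate have \(|\hat g|\ge 2^{-t}\); heavy-list size controlled by Parseval. So the number of dual codewords agreeing with these lists on a \(\ge(1/2-\delta/2)\) fraction is bounded by a list recovery bound for the MDS dual code. Since the dual has rate \(<1/2-\delta\), Johnson-type list recovery gives polynomial list size as long as \(r\cdot R\) stays below agreement squared, which holds for \(r\le n^{\delta/12}\)-ish. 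This is where \(m=(\delta/24)\log_2n\) enters: the alphabet per coordinate is \(2^m=n^{\delta/24}\), and the heavy lists have size \(\poly(2^m)\). Codewords not in the list have many coordinates with light coefficients \(\le 2^{-t}\), contributing a factor \(2^{-t\cdot\Omega(\delta n)}\).

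Summing over \(2^{m|S|}\) leakage outputs and over supports, \(\binom{n}{|S|}\), the light part yields roughly \(\sum_S 2^{m|S|}(\text{poly list}) \,n^{-\Omega(\delta)|S|}\), and choosing the threshold \(t\) so that \(2^{-t}\approx n^{-\delta/12}\) leaves \(\exp(-\Omega(n^{1-\delta/12}))\) after balancing list size against the number of heavy coordinates.

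\textbf{Main obstacle.} The hard step is the list-recovery count with per-coordinate lists of size \(\poly(2^m)\) that grows with \(n\). We need the count uniformly over \(S\) with \(|S|\) between \(k\) and \(n\), and the combinatorial sum over \(\ell_S\) must not swamp the decay. I would first try a Guruswami--Sudan-style list recovery bound for RS codes, valid with agreement \(>\sqrt{R\cdot r}\), applied to the dual punctured to \(S\). I would then use a dyadic pigeonholing on Fourier magnitudes to pass from the weighted sum to counts, accepting the loss that forces the exponent \(1-\delta/12\).
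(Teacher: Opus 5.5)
Your Steps 1--2 match the paper's setup: reduction to the zero-secret GRS code, Efron--Stein components indexed by supports of dual words, and \(2\,\SD\le\|f\|_2\). The gap is in Step 3, where the list-recovery input cannot work. For an RS code of dimension \(h\), the Guruswami--Sudan/Johnson condition for list recovery is agreement \(A>\sqrt{(h-1)\sum_i|S_i|}=\sqrt{(h-1)\ell n}\); your condition drops the factor \(n\). Here \(h-1\approx(1/2-\delta)n\), and you use only about \((1/2-\delta/2)n\) agreements. So the condition already fails for \(\ell=1\), since it needs agreement fraction \(\sqrt{1/2-\delta}\). For \(\ell=n^{\Theta(\delta)}\) it requires \(\ell<n/(h-1)\), a constant. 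Johnson-type bounds with growing lists give a constant leakage rate only when the dual rate tends to zero, i.e.\ \(k=n-o(n)\), which is the regime of earlier work. The argument instead needs list recovery up to capacity: agreement \((n-k+1)+\zeta n\), i.e.\ radius \(\lfloor k-1-\zeta n\rfloor\). It must hold for every fixed evaluation tuple, with list size independent of \(p\) and with explicit dependence on the input-list size. This is \cref{thm:rs-list-decoding}, extracted from Jeronimo and DKT: \(n^{c_\zeta\ell^{3/\zeta}}\) for \(n\ge c_\zeta\ell^{9/\zeta}\). The choice \(Q=2^m\le n^{\delta/24}\) is made so that it applies with \(\zeta=\delta/2\) and \(\ell=Q-1\), giving \(\log\List_{C^\perp}(R,Q-1)\le c\,n^{1/4}\log n\).

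Your heavy/light treatment of the squared components is also missing its key mechanism. For each output tuple \(\vec b_S\), the term \(\bigl|\sum_{\vec y}\prod_{i\in S}\widehat g_{i,b_i}(y_i)\bigr|^2\) is the square of a sum over up to \(p^{n-k+1}\) dual words. Bounding the light words one at a time leads to \(\ell_1\)-type Fourier sums that grow with \(p\), which is the known obstacle, while the theorem allows every \(p>n\). Summing explicitly over the \(2^{m|S|}=n^{\delta|S|/24}\) output tuples, as your sketch does, costs \(\exp(\Theta(\delta n\log n))\). For small \(\delta\), your own light-coordinate decay of order \(n^{-\Theta(\delta^2 n)}\) does not repay this. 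The paper's \cref{lem:component-bound} avoids both problems. It applies the leakage operator only on \(k-1\) coordinates \(I\subseteq A\) and keeps the characters on \(J=A\setminus I\). Two dual words with support \(A\) that agree on \(J\) coincide, so Parseval on \(J\) gives \(\|\T(f_C^{=A})\|_2^2\le\sum_{\vec y}\prod_{i\in I}\|\T_i\chi_{y_i}\|_2^2\). This has no cross terms, and the sum over outputs is absorbed into \(L^2(\nu_i)\) norms. Two local facts (\cref{lem:local-fourier-energy}) then finish the argument:
\begin{itemize}
\item \(\sum_{a\ne0}\|\T_i\chi_a\|_2^2\le Q-1\). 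A coordinatewise-linear extremal argument (\cref{lem:weights-to-lists}) then converts the \(j=n-R\) retained factors into list indicators of size \(Q-1\), with no pigeonholing loss.
\item The uniform bound \(\|\T_i\chi_a\|_2^2\le\eta_Q\le1-1/(16Q^2)\) holds on the remaining \(2R-n\ge\delta n/2\) factors.
\end{itemize}
The exponent \(n^{1-\delta/12}\) comes from \(\eta_Q^{\delta n/4}\le\exp(-\delta n/(64Q^2))\) with \(Q^2\le n^{\delta/12}\), not from balancing a heavy threshold.
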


Previous guarantees of a constant leakage rate required a threshold
rate tending to one. Benhamouda et al.~\cite[Corollary~4.13]{BDIR21}
obtain it for \(n<p\le 2n\) and \(k>n-n^{1/4}\). Our threshold rate stays
fixed above half, and the field may be arbitrarily large.

The number of leaked bits is tight up to a constant factor depending on
\(\delta\). Boudgoust and Simkin~\cite{BoudgoustSimkin26} give an
efficient attack using \(\log_2 k+O(1)\) bits per share over arbitrary
finite fields.

The proof reduces \(m\)-bit leakage resilience to list recovery of the
dual code with input lists of size \(2^m-1\); for one bit this is
ordinary list decoding. Letting \(m\) grow requires the dependence of
the list size on the input-list size, which we extract from the proofs
of Jeronimo~\cite{Jeronimo26} and of Dao, Kominers, and Thaler
(DKT)~\cite{DKT26} in \cref{app:growing-lists,app:dkt-lists}.

The criterion of Maji, Paskin-Cherniavsky, Suad, and Wang~\cite[Technical
Lemma~1]{MPSW21} is also a list-recovery condition on the dual code, and
combined with the same bound it yields \cref{thm:intro-optimal} for
every fixed \(m\). Its parameters, however, degrade doubly exponentially
in \(m\): its input lists have size \(2^m T^2\) with \(T=2^{\Theta(m/\sigma)}\)
and \(\sigma=O(4^{-m})\), because the underlying \(\ell_1\) bound sums
over all \(2^{mn}\) leakage outputs. This direct application gives only
\(O(\log\log n)\) bits.

\paragraph{Restricted leakage families at every constant threshold rate.}
Our third result concerns every positive constant threshold rate,
including below half, for two restricted leakage families:
quadratic-residue leakage, where party \(i\) reveals whether
\(a_iP(\alpha_i)+b_i\) is a square, for arbitrary \(a_i\in\Fp^*\) and
\(b_i\in\Fp\); and almost all tuples of local leakage functions
revealing \(\Theta(\log p)\) bits per share.

\begin{theorem}[Structured leakage at constant threshold rate]
\label{thm:intro-structured}
Fix \(0<c<1\). There are constants \(\eta_c,a_c,b_c>0\) such that
any \((n,k)\)-Shamir sharing with \(k\ge cn\) is
\begin{enumerate}
  \item \(\exp(-a_cn\log p)\)-resilient to quadratic-residue
    leakage (with arbitrary affine shifts), and
  \item for every integer \(1\le m\le\eta_c\log_2p\),
    \(\exp(-a_cn\log p)\)-resilient to all but an
    \(\exp(-b_c\sqrt p)\) fraction of \(m\)-bit local leakage functions.
\end{enumerate}
\end{theorem}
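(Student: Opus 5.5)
We reduce the statement to a pseudorandomness property of the underlying generalized Reed--Solomon code, following the framework sketched in the abstract. Fix two secrets; each leakage distribution is the pushforward under \(\vec\tau\) of a uniform codeword of a coset of \(\GRS_{k-1}\) (degree \(<k\) polynomials with \(P(0)\) fixed). By the triangle inequality it suffices to show that, for a uniform codeword \(\vec x\) of such a coset, the law of \(\vec\tau(\vec x)\) is \(\eps/2\)-close to the law of \(\vec\tau(\vec u)\) with \(\vec u\) uniform in \(\Fp^n\). We write the density ratio of these two laws as a function \(F\) on the output space \(\prod_i\{0,1\}^m\), so the leakage distance is bounded by \(\norm{F-1}_2\). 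We then take the Efron--Stein decomposition of \(F\) with respect to the product measure \(\prod_i \tau_i(U)\). By Poisson summation over the dual code \(C^\perp\) (an MDS code of dimension \(n-k+1\) with minimum distance \(k\)), the component indexed by \(S\subseteq[n]\) vanishes unless \(S\) is the support of a dual codeword. Hence only \(|S|\ge k\ge cn\) contribute, and the \(S\)-component is a sum over dual codewords \(\vec y\) with \(\supp \vec y=S\) of products \(\prod_{i\in S}\widehat{g_{i,z_i}}(y_i)\). Here \(g_{i,z}=\one[\tau_i=z]-\Pr[\tau_i=z]\) are the mean-zero leakage indicators.

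\textbf{Quadratic-residue part.} Here \(\widehat{g_{i,z}}(y)\) are essentially Gauss/Jacobi-type sums: the Legendre symbol composed with an affine map has Fourier coefficients of magnitude \(p^{-1/2}\) at every nonzero \(y\), up to a phase. So \(\sum_{\vec y:\supp=S}\prod|\hat g|\le |C^\perp_S|\,p^{-|S|/2}\). For MDS codes, the number of dual codewords of support exactly \(S\) is at most \(p^{|S|-k+1}\). The total is therefore at most \(\sum_{|S|\ge k}\binom{n}{|S|}p^{|S|-k+1-|S|/2}\), which is \(\exp(-\Omega(n\log p))\) only when \(k>n/2\). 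This is exactly the \(\ell_1\) barrier of MPSW21. To go below half we must instead bound the \(\ell_2\) norm of each component, exploiting cancellation among phases. We do this with the generalized von Neumann inequality for MDS systems of linear forms. Each component's squared norm is an average of \(\prod_{i\in S} g_i(L_i(\vec t))\) over a system of \(|S|\) linear forms in \(|S|-k+1\) variables in general position. The inequality bounds this by a product of Gowers \(U^s\) norms of the \(g_i\), with \(s=O(1/c)\) and linearly many factors retained. For the Legendre symbol, Weil's bound gives \(\norm{\chi(a\cdot+b)}_{U^s}\le (s\,p^{-1/2})^{1/2^s}\)-type decay. Retaining \(\Omega(|S|)\) factors yields \(p^{-\Omega_c(|S|)}\) per component. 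Summing over \(\le 2^n\) supports gives \(\exp(-a_c n\log p)\), using \(p>n\) to absorb \(2^n\) once \(a_c\) is adjusted (for \(p\) near \(n\) this needs \(\log p\gtrsim\) constant, which holds).

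\textbf{Random leakage part.} For a uniformly random tuple of maps \(\tau_i\colon\Fp\to\{0,1\}^m\) with \(2^m\le p^{\eta_c}\), each indicator \(\one[\tau_i=z]\) is a random set of density \(2^{-m}\). By a union/concentration argument (moment method on Gowers norms, or Chernoff over the \(p^{O(s)}\) additive configurations), all balanced functions \(g_{i,z}\) satisfy \(\norm{g_{i,z}}_{U^s}\le p^{-1/4}\) except with probability \(\exp(-\Omega(\sqrt p))\) per \(i\), hence with \(\exp(-b_c\sqrt p)\) failure overall after a union bound over \(n<p\) parties and \(2^m\) outputs. We then run the same von Neumann bound, now over output tuples \(\vec z\). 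The extra factor \(2^{m|S|}\le p^{\eta_c|S|}\) from summing over outputs must be dominated by the \(p^{-\Omega_c(|S|)}\) gain, which fixes \(\eta_c\).

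\textbf{Main obstacle.} The crux is the von Neumann step: showing that an MDS system of \(|S|\) forms in \(|S|-k+1\) variables, with \(k\ge cn\), admits a Cauchy--Schwarz scheme of bounded depth \(s=s(c)\) that keeps a product of \(\Omega(|S|)\) Gowers norms, rather than just one, with no loss depending on \(n\) beyond \(2^{O(n)}\). I would try partitioning the forms into \(\Theta(|S|/(|S|-k))\)-sized blocks. Each block is handled by the standard one-form-at-a-time Cauchy--Schwarz (true complexity \(O(1/c)\) by the MDS property). The blocks are then combined using the independence structure, and I expect the bookkeeping of this combination to be the delicate part.
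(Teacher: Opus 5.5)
Your outline has the paper's architecture: reduction to the code, Efron--Stein on the outputs, Poisson summation leaving only supports $|S|\ge k$, a von Neumann-type bound with linearly many Gowers factors, Weil for quadratic residues, and concentration for random maps. The gap is the step you call the crux. It is left unproven, and as you formulate it, it is false.

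The average attached to a component is $\E_{\vec X\sim U_C}\bigl[\prod_{i\in S}g_{i,z_i}(X_i)\bigr]$. This lives on the \emph{primal} code: $|S|$ coordinate forms in $k-1$ variables, any $k-1$ of them independent. Your system, $|S|$ forms in $|S|-k+1$ variables, is the shortened dual code. On it, no bounded-depth Cauchy--Schwarz scheme retains even one Gowers factor when $|S|$ is near $k$. For $|S|=k$ the system is $L_i(t)=w_it$ in one variable, and for even $k$ one has $\E_t\prod_{i\in S}\chi(w_it)=\pm(1-1/p)$, although $\|\chi\|_{U^s}$ is tiny. Your depth $s=O(1/c)$ is the primal complexity, while your block size $\Theta(|S|/(|S|-k))$ is the dual one and is unbounded as $|S|\to k$.

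With the correct system, the needed inequality is \cref{cor:iterated-cs-count}, and no block-by-block bookkeeping is required. Write $S=J\sqcup I_1\sqcup\cdots\sqcup I_t$ with $|J|+|I_a|\le k-1$. Treat $J$ as one vector-valued map $L_J$ carrying $F=\prod_{j\in J}f_j$ on $\Fp^J$. Apply Cauchy--Schwarz $t$ times, at step $a$ translating $\vec z$ by a uniform element of $\ker L_{I_a}$. This leaves the $I_a$ factors unchanged, so they drop out. Independence of $I_a\cup J$ gives $L_J(\ker L_{I_a})=\Fp^J$, so the process ends at $\|F\|_{U^t(\Fp^J)}^{2^t}=\prod_{j\in J}\|f_j\|_{U^t}^{2^t}$. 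Fixing $t$ with $tc>1$ yields $|J|=\lfloor (t(k-1)-|S|)/(t-1)\rfloor\ge\beta n-3$, with $\beta=(tc-1)/(t-1)$, for every $|S|\ge k$.

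The random-map step also uses a threshold that cannot hold. Keeping only $h_1=0$ in the Gowers average gives $\|g\|_{U^s}^{2^s}\ge p^{-1}\bigl(\E|g|^2\bigr)^{2^{s-1}}$ for every $g$. Hence $\|g_{i,z}\|_{U^s}\ge p^{-1/2^s}\sqrt{\nu_i(z)(1-\nu_i(z))}$. For $s\ge3$ this exceeds $p^{-1/4}$ whenever $\nu_i(z)$ is close to $2^{-m}\ge p^{-\eta_c}$, which is the case for almost every map. Use $\rho=p^{-1/2^{s+2}}$ instead. Bounded differences then gives failure probability $\exp(-\Omega_s(p\rho^{2^{s+1}}))=\exp(-\Omega_s(\sqrt p))$, and $\eta_c$ must be of order $\beta/2^{s+4}$.

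Finally, the Maji et al.\ barrier comes from the triangle inequality over dual words inside a component, not from using $L^1$ on the outputs. The paper bounds every component in $L^1$ (cutoff $r_0=k-1$ and \cref{lem:output-indicator-components}). This also avoids the weights $1/\prod_i\nu_i(z_i)$ that your $L^2$ route carries, which would require a lower bound on the output probabilities.
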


Nguyen~\cite{Nguyen24} proved resilience to quadratic-residue and
random one-bit leakage using higher-order Fourier analysis, but the
bounds require a field size exponential in \(n\) at constant threshold
rate. Unlike
\cref{thm:intro-optimal}, the leakage rate in part~2 stays constant
for every prime \(p>n\), not only polynomial-size fields, and below
half. For superpolynomial-size fields, this separates worst-case from
average-case leakage: the attack of Boudgoust and
Simkin~\cite{BoudgoustSimkin26} limits the worst-case leakage rate to
\(O(\log k/\log p)=o(1)\).

At smaller thresholds, quadratic-residue leakage can be insecure:
for \(n=p-1\) and \(k=O(\sqrt n)\), Kopparty's algorithm
for recovering a polynomial from noisy quadratic-character
values~\cite{Kopparty26} yields an attack with the same shifted
quadratic-residue leakage (\cref{sec:sqrt-threshold-field}).

\paragraph{From Shamir sharing to codes.}
For distinct evaluation points \(\vec\alpha\in\Fp^n\) and nonzero
multipliers \(\vec v\in(\Fp^*)^n\), the generalized Reed--Solomon
code of dimension \(d\) is
\[
  \GRS_{\vec\alpha,\vec v}(n,d)
  =\bigl\{(v_iQ(\alpha_i))_{i=1}^n:\ Q\in\Fp[X],\ \deg Q<d\bigr\}.
\]
The shares of the zero secret in an $(n,k)$-Shamir sharing are uniform on
\(C_0=\GRS_{\vec\alpha,\vec\alpha}(n,k-1)\), and the shares of secret
\(s\) are uniform on the translate \(s\vec1+C_0\), where
\(\vec1=(1,\ldots,1)\). Call a linear code
\(C\subseteq\Fp^n\) \(\eps\)-resilient to \(m\)-bit local leakage if,
under every tuple of \(m\)-bit leakage maps, a uniform codeword and a
vector of \(n\) independent uniform field elements have leakage
distributions within statistical distance \(\eps\). We prove all
three results in this code form
(\cref{thm:grs-below-half,thm:grs-constant-bits,thm:structured-leakage}),
as a pseudorandomness property of GRS codes under local leakage, and
the guarantees for Shamir sharing follow: for arbitrary \(m\)-bit
local leakage, resilience of the code implies resilience of the scheme
with twice the error, and conversely with error larger by a factor \(O(n)\)
(\cref{lem:shamir-from-code}).

\paragraph{Technical Highlights.}
We develop an analytical framework in which all three theorems are
proved. It combines the Efron--Stein orthogonal decomposition over the
product space of leakage outputs with Fourier analysis over the
uniform measure on the inputs. Its key feature is flexibility: the
framework organizes the leakage analysis into orthogonal components,
and each component
can be analyzed either through its \(L^2\)-norm using linear Fourier
analysis, or through its \(L^1\)-norm using higher-order Fourier
analysis. Previous analyses each took a single route. Benhamouda et
al.~\cite{BDIR18, BDIR21} and their successors~\cite{MNPSW21, MPSW21} bound an \(\ell_1\) sum of Fourier magnitudes over the dual code; Klein and
Komargodski~\cite{KleinKomargodski23, Kasser24} bound an \(\ell_2\) quantity
obtained from Pinsker's inequality; Nguyen~\cite{Nguyen24} bounds
Gowers norms of the leakage functions. The framework brings these
analytic approaches together, allowing different estimates on different
components. Their combination within the same decomposition is what
crosses half.
% \Cref{sec:overview} gives the details.

\paragraph{Concurrent and independent work.}
Sun and Wootters~\cite{SunWootters26Leakage} independently prove
one-bit local leakage resilience of Shamir sharing at a constant
threshold rate below half, obtaining the same qualitative
guarantee as \cref{thm:intro-main}. They also prove resilience to
any fixed number of leaked bits per share above half.
The approaches
share an important feature: both use list recovery of the dual code
for small supports and Gowers-norm estimates for large supports.
Our work additionally establishes logarithmic-bit resilience above half
and the restricted-leakage guarantees at every positive constant
threshold rate (\cref{thm:intro-optimal,thm:intro-structured}).
Sun and Wootters also obtain existential and quantum-algorithmic results
for worst-case optimal polynomial intersection, which we do not
study here.

\subsection{Technical overview}
\label{sec:overview}

By the reduction described above (\cref{lem:shamir-from-code}), it suffices to prove the code form
of our results: local leakage cannot distinguish a uniform codeword
from independent uniform field elements. Fix a GRS code
\(C=\GRS_{\vec\alpha,\vec v}(n,k-1)\) and leakage maps
\(\tau_i\colon\Fp\to\{0,1\}^m\). Let \(\vec U_C\) and
\(\vec U_{\Fp^n}\) be uniformly distributed on \(C\) and \(\Fp^n\),
respectively. The goal is to show that the statistical distance between \(\vec\tau(\vec U_C)\)
and \(\vec\tau(\vec U_{\Fp^n})\), which is called \emph{leakage distance}, is small. 

\paragraph{The framework.}
Under independent uniform inputs, the leakage outputs are independent. 
Thus \(\vec\tau(\vec U_{\Fp^n})\) has a product distribution
\(\nu=\prod_i\nu_i\), where \(\nu_i\) is the distribution of
\(\tau_i(U_{\Fp})\).
This allows us to use Fourier analysis on general product spaces (\cite[Chapter~8]{ODonnell14}). In particular, the
Efron--Stein decomposition is a standard way to group a Fourier
expansion into orthogonal components indexed by coordinate sets,
without choosing an explicit basis. We first express the leakage
distance as a norm under \(\nu\), then apply this decomposition
and use the code's linear structure to identify the components.
\Cref{sec:decomposition} develops this approach in the following steps.

\emph{Step 1: Write the leakage distance as a norm.}
For each leakage output \(\vec b = (b_1, b_2, \ldots, b_n)\), we compare its probability under leakage
from a uniform codeword with \(\nu(\vec b)\). To express this ratio,
start with the corresponding ratio on the inputs:
\[
  f_C(\vec x)=\frac{\Prb[\vec U_C=\vec x]}{\Prb[\vec U_{\Fp^n}=\vec x]}
  =\frac{p^n}{|C|}\one[\vec x\in C],
\]
whose mean under uniform input is one. Leakage forgets which input
produced \(\vec b\), so the output ratio is the average of this input
ratio over all compatible inputs. This naturally gives the Markov operator
(\cref{def:leakage-operator})
\[
  (\T f)(\vec b)=\E\bigl[f(\vec U_{\Fp^n})\bigm|\vec\tau(\vec U_{\Fp^n})=\vec b\bigr].
\]
Its one-coordinate version is
\((\T_if)(b)=\E[f(U_{\Fp})\mid\tau_i(U_{\Fp})=b]\), and
independence gives \(\T=\bigotimes_i\T_i\). Thus, we have
(\cref{lem:leakage-distance})
\[
  (\T f_C)(\vec b)
  =\frac{\Prb[\vec\tau(\vec U_C)=\vec b]}{\nu(\vec b)}.
\]
The leakage distance is therefore the average deviation of this
ratio from one:
\[
  \SD\bigl(\vec\tau(\vec U_C),\vec\tau(\vec U_{\Fp^n})\bigr)
  =\tfrac12\E_{\vec b\sim\nu}\bigl|(\T f_C)(\vec b)-1\bigr|
  =\tfrac12\|\T f_C-1\|_{1,\nu}.
\]
Thus we need to bound how far \(\T f_C\) is from the constant
function \(1\), averaging over the product distribution \(\nu\).
Here and below, we only consider outputs with \(\nu(\vec b)>0\).

\emph{Step 2: Separate the contributions of different coordinate sets.}
The Efron--Stein decomposition separates a function on a product space into
orthogonal components indexed by coordinate sets \(A \subseteq [n]\)
(\cref{def:efron-stein}). A component depends only on \(A\) and
disappears when any coordinate in \(A\) is averaged out: it records
an interaction requiring all those coordinates.
Apply this decomposition to \(\T f_C\) under the leakage distribution
\(\nu\). Its constant component ($A = \emptyset$) is one, so \(\T f_C-1\) is the sum of its nonconstant
components. To identify them, we can instead decompose \(f_C\)
under the uniform distribution on \(\Fp^n\) and then apply \(\T\).
Because leakage acts separately on each coordinate, the two operations commute
(\cref{lem:product-expectation}):
\[
  (\T f_C)^{=A}=\T(f_C^{=A}),
  \qquad
  \T f_C-1=\sum_{\emptyset\ne A\subseteq[n]}\T(f_C^{=A}).
\]
For all distinct coordinate sets \(A\) and \(B\), the corresponding
output components are orthogonal under \(\nu\). We can therefore
analyze the dependence on each coordinate set separately.

\emph{Step 3: Identify the components using the dual code.}
To make these components explicit, use the linear structure of the
code. Fourier characters detect its linear constraints: the character
\(\chi_{\vec y}(\vec x)=\prod_i\chi_{y_i}(x_i)\), where
\(\chi_a(x)=\exp(2\pi iax/p)\), is identically one on \(C\) when
\(\vec y\in C^\perp\), and otherwise averages to zero over \(C\).
Thus the Fourier coefficients of \(f_C\) are one on \(C^\perp\)
and zero elsewhere. Grouping these characters by their
nonzero coordinates gives exactly the Efron--Stein components
(\cref{lem:dual-code-formula,lem:fourier-components}):
\[
  f_C^{=A}=\sum_{\vec y\in C^\perp:\ \supp(\vec y)=A}\chi_{\vec y}.
\]
Since \(C^\perp\) has minimum distance \(k\), all components with
\(0<|A|<k\) vanish. We therefore only need to bound components with
\(|A|\ge k\).

\emph{Step 4: Choose how to bound the components.}
The two tools below provide different kinds of estimates: list
recovery bounds squared \(L^2\) norms, while Gowers norms bound
\(L^1\) norms. We need not choose one tool for every support.
Using \(\|\cdot\|_{1,\nu}\le\|\cdot\|_{2,\nu}\) and orthogonality, we can
bound a group of components by the square root of the sum of their
squared \(L^2\) norms. For the others, the triangle inequality lets
us add individual \(L^1\) bounds. Choosing a cutoff
\(k-1\le r_0\le n\) gives (\cref{prop:leakage-decomposition})
\[
  \SD\bigl(\vec\tau(\vec U_C),\vec\tau(\vec U_{\Fp^n})\bigr)
  \le\frac12\Bigl(\sum_{k\le|A|\le r_0}\|\T(f_C^{=A})\|_{2,\nu}^2\Bigr)^{1/2}
   +\frac12\sum_{|A|>r_0}\|\T(f_C^{=A})\|_{1,\nu} .
\]
Above half we take \(r_0=n\), using only the first
part. For the restricted leakage families we take \(r_0=k-1\), using
only the second. Below half we choose an intermediate cutoff
close to \(n\) and combine both tools.

\paragraph{Above half: \(L^2\) norms and list recovery of the dual code.}
Here we take \(r_0=n\) and then bound the sum of squared component norms.
The first question is how the leakage acts on a component, which is a
sum of characters \(\chi_{\vec y}\) over dual words with support
\(A\). Applying \(\T\) to every coordinate loses the orthogonality
among them, but applying it to only \(k-1\) coordinates \(I\subseteq A\)
does not: two dual words with support \(A\) that agree outside \(I\)
differ on at most \(k-1\) coordinates, hence coincide, so their
characters on \(A\setminus I\) are distinct and still orthogonal; and
applying the remaining \(\T_i\) afterwards can only shrink an \(L^2\)
norm by the contraction property of Markov operators. This gives (\cref{lem:component-bound})
\[
  \|\T(f_C^{=A})\|_{2,\nu}^2
  \le\sum_{\vec y\in C^\perp:\ \supp(\vec y)=A}
     \prod_{i\in I}\|\T_i\chi_{y_i}\|_{2,\nu_i}^2 .
\]
So each dual word is charged \(k-1\) one-coordinate factors
\(\|\T_i\chi_a\|_{2,\nu_i}^2\). There are two facts about them
(\cref{lem:local-fourier-energy}). Their sum over \(a\ne0\) is at most
\(Q-1\), where \(Q=2^m\). And each is at most
\(\eta_Q<1\). 
% For one bit,
% \(\eta_2=1/2\) by conjugate symmetry, with no condition on \(p\)
% beyond oddness.

Now comes the counting problem: a sum over all dual words with support
\(A\), summed over \(A\), of products of \(k-1\) factors. Spending all
\(k-1\) factors on decay leaves a bare count of dual words, which is
\(p^{\,n-k+1}\) and useless. The trick is to split the factors. Keep
\(j\) of them, bound the other \(k-1-j\) by \(\eta_Q\), and look at
the sum that remains. It is linear in the weights
\(w_i(a)=\|\T_i\chi_a\|_{2,\nu_i}^2\) of each coordinate separately, with
nonnegative coefficients, under the constraints \(w_i(a)\in[0,1]\) and
\(\sum_{a\ne0}w_i(a)\le Q-1\). Such a sum is maximized at an extreme point,
which puts weight one on \(Q-1\) frequencies and zero elsewhere
(\cref{lem:weights-to-lists}). After this replacement the sum counts
dual words that agree with lists \(V_i\subseteq\Fp^*\) of size \(Q-1\)
on at least \(j\) coordinates (\cref{lem:component-list-recovery}).
This count is bounded by the list size at radius \(n-j\), in the notation
\(\List_{C^\perp}(R,\ell)\) (\cref{def:list-size}):
\[
  \sum_{|A|\ge k}\|\T(f_C^{=A})\|_{2,\nu}^2
  \le\eta_Q^{\,k-1-j}\,\List_{C^\perp}(n-j,Q-1).
\]
The lists live on Fourier frequencies.
For one bit they are singletons and this is list decoding.

The dual of a GRS code is a GRS code, so either of the
list-recovery results of Jeronimo and DKT~\cite{Jeronimo26,DKT26} makes
\(\List_{C^\perp}(R,\ell)\) polynomial in \(n\) for fixed input-list
size \(\ell\) and fixed gap \(\zeta>0\), at radius
\(R\le k-1-\zeta n\) (\cref{thm:rs-list-decoding}).
With \(j=n-R\) there are at least \(2R-n\)
factors left for decay, and the bound is
\(\frac12\eta_Q^{R-n/2}\sqrt{\List_{C^\perp}(R,Q-1)}\)
(\cref{thm:list-recovery-transfer}). This also says exactly where the
argument stops: it needs \(R>n/2\) with \(R<k\), hence \(k>n/2\).
Below half no choice of \(j\le k-1\) reaches a radius covered by
this polynomial-list bound. Above \(1/2+\delta\), a linear number of
decay factors remain, so their product absorbs the polynomial list
size. Increasing the leakage alphabet has two costs: the local decay
becomes weaker, and the list-recovery bound becomes larger. The same
linear number of factors can absorb both costs as long as the
logarithm of the list bound stays smaller than the decay exponent.
Keeping the dependence on \(Q\) explicit in these proofs
(\cref{app:growing-lists,app:dkt-lists}) allows \(Q\) to grow to
\(n^{\delta/24}\), giving the logarithmic leakage in
\cref{thm:intro-optimal}.

\paragraph{Every constant threshold rate: \(L^1\) norms and iterated Cauchy--Schwarz.}
For the restricted families, we take \(r_0=k-1\) and bound each
component in \(L^1\). By a simple calculation, the
\(L^1\) norm of each component is a sum of absolute values of averages (\cref{lem:output-indicator-components}),
each of the following form:
\[
  \Lambda_\Psi(\vec f)
  =\E_{\vec z}
       \left[\prod_{i\in A}f_i(L_i(\vec z))\right].
\]
Here \(\vec z\in\Fp^{k-1}\) is the vector of independently and
uniformly chosen polynomial coefficients, and \(L_i(\vec z)\) is
the \(i\)-th coordinate of the resulting codeword. We write
\(\Psi=(L_i)_{i\in A}\), where \(|A|\ge k\). The functions \(f_i\)
are bounded by one and determined by the leakage maps and outputs
(see \cref{sec:iterated-cs}).

Such averages arise naturally in additive combinatorics. For indicator
functions, they give normalized counts of configurations described
by linear forms, such as arithmetic progressions~\cite{TaoVu06,GreenTao10}.
Repeated applications of Cauchy--Schwarz bound these averages in terms
of Gowers norms~\cite{Gowers01,GreenTao10}.

Nguyen~\cite{Nguyen24} uses the scalar generalized von Neumann inequality
to bound such averages. For a suitable Gowers order \(t\), it gives
\[
  |\Lambda_\Psi(\vec f)|
  \le\min_{i\in A}\|f_i\|_{U^t}.
\]
This retains only one small factor to compensate for exponentially
many averages, leading to the large field sizes in that analysis.
Nguyen~\cite[Remark~1]{Nguyen24} conjectured that a better understanding
of this inequality for the linear forms of MDS codes and the leakage
functions arising here could lead to smaller-field bounds.

We observe that Manners's formulation~\cite{Manners21}
allows us to retain a product of Gowers norms. The idea is to group
several independent forms into one vector-valued map. The Gowers norm
of the corresponding product function then factors into the individual
Gowers norms. The coordinate forms of a GRS code are MDS: any \(k-1\)
of them are independent. This structure gives, at every constant
threshold rate \(c>0\), a constant Gowers order \(t\) and a set
\(J\subseteq A\) such that
\[
  |\Lambda_\Psi(\vec f)|
  \le\prod_{i\in J}\|f_i\|_{U^t},
  \qquad |J|=\Omega_c(n).
\]
See \cref{lem:iterated-cs,cor:iterated-cs-count}.
The key saving comes from multiplying linearly many factors: even
when the retained Gowers norms are close to one, a fixed gap below
one gives \(\exp(-\Omega(n))\) decay for each average. This will be
crucial for the result below half, where the norms need not
decrease with \(p\).

For quadratic-residue leakage and almost all local leakage maps,
the local Gowers norms are at most \(p^{-\Omega_c(1)}\), by character-sum
estimates and concentration, respectively
(\cref{sec:quadratic-leakage,sec:random-bit-leakage}).
The product bound therefore gives the stronger
\(\exp(-\Omega_c(n\log p))\) decay. This absorbs the sum over
supports and leakage outputs even when the number of leaked bits is
a sufficiently small constant multiple of \(\log p\). Hence \(p>n\)
suffices, giving \cref{thm:intro-structured}; the parameter calculation
is in \cref{prop:gowers-leakage-transfer}.

\emph{Why not arbitrary leakage at every constant rate?}
For arbitrary one-bit leakage, the Gowers norms need not be small
enough for the resulting exponential decay to outweigh the
exponentially many terms. This becomes more restrictive at smaller
threshold rates, where the bound retains fewer Gowers-norm factors.

\paragraph{Below half: both tools, and a uniform bound for one bit.}
We combine the two tools to handle arbitrary one-bit leakage below
half, choosing a cutoff \(r_0\) close to \(n\), using \(L^2\) bounds
for the smaller supports and \(L^1\) bounds for the larger ones.

\emph{Smaller supports.} The dual words contributing to these
components have many zero coordinates. These zeros provide additional
agreements with the list \(\{0\}\) when applying list decoding.
Fewer agreements must therefore come from the Fourier factors,
leaving more factors available to provide decay. A modification of
the earlier counting argument uses these extra agreements and bounds
all supports of a given size together
(\cref{lem:fixed-support-list-recovery,lem:list-decoding-energy}).
The resulting estimate makes their total squared \(L^2\) contribution
exponentially small for supports that omit a fixed fraction of the
coordinates, at threshold rates sufficiently close to half
(\cref{lem:elementary-low-band}).

\emph{Larger supports.} These supports have too few zero coordinates
for the counting argument, so we turn to the linear-form averages. 
For one-bit leakage, the \(L^1\) norm of each component is the
absolute value of a single such average (\cref{lem:binary-components}).
Its factors \(f_i\) are obtained by representing each leaked bit as
\(\pm1\) and subtracting its mean under uniform input. Using Fourier
symmetry and the fact that these functions take at most two values, we
prove that
\[
  \|f_i\|_{U^3}\le\kappa_0<1
\]
for an absolute constant \(\kappa_0\)
(\cref{thm:qual-local-bounds}). The product bound retains linearly many
such factors even for full support at rate half
(\cref{cor:iterated-cs-count}).

The Gowers-norm bound controls only some of the factors in the average.
The other factors are bounded by their maximum absolute values, which
can be as large as two. Multiplying these bounds can therefore offset
the saving from the Gowers norms.
When the total squared bias is small, this loss is controlled;
otherwise, the smaller variances give exponential decay by
Cauchy--Schwarz and MDS independence, following a principle similar to that in
\cite[Section~6]{KleinKomargodski23}.
Together, these estimates make each large-support contribution
exponentially small, uniformly over the leakage maps
(\cref{lem:qual-padded-correlation}). Taking the cutoff sufficiently
close to \(n\) makes the number of remaining supports small enough
for this decay to survive summation (\cref{prop:qual-high-sum}).

\emph{Putting it together.} At the limiting parameters of rate half
and zero decoding gap, both estimates leave a strict exponential
margin. By continuity, this margin survives at a fixed rate slightly
below half and a small positive decoding gap, with both ranges
of supports still covered. The list-decoding theorem is used only at
that positive gap. This proves \cref{thm:intro-main};
\cref{sec:splice} gives the parameter choices.

\subsection{Related work}

\paragraph{New constructions of leakage-resilient secret sharing.}
Following~\cite{GK18,BDIR18}, a line of work has developed secret-sharing
schemes designed to withstand local leakage. General compilers support
arbitrary monotone access structures while providing strong leakage
guarantees with controlled share-size overhead~\cite{SV19,AggarwalEtAl19}.
Other constructions address adaptive leakage involving bounded groups
of colluding parties~\cite{KMS19,ChattopadhyayEtAl20}, or improve share
sizes~\cite{CKOS22}. Such constructions need not preserve the algebraic
structure of the underlying scheme, including linearity, which is
important in applications such as MPC. In contrast, our work establishes
leakage resilience of Shamir scheme.

\paragraph{Local leakage resilience of existing linear schemes.}
Shamir and additive secret sharing are the most common linear schemes widely used in applications.
For additive \(n\)-out-of-\(n\) sharing over prime fields,
Maji et al.~\cite{MNPSWYY22} give a tight analysis of the
parity-of-parity attack, establishing an optimal lower bound on
the number of shares needed for resilience to one-bit leakage.
Koga and Abe~\cite{KogaAbe25} obtain tight one-bit leakage bounds
and further bounds for larger output alphabets.

\paragraph{Attacks on local leakage resilience.}
Nielsen and Simkin present one-bit attacks
at sublinear thresholds of order \(n/\log n\)~\cite{NielsenSimkin20};
Hoffmann and Simkin develop robust attacks that can tolerate noise in a
constant fraction of the leakage outputs~\cite{HoffmannSimkin23}.
Other attacks exploit particular leakage functions or prime-field
repair procedures~\cite{Nguyen24,ConEtAl24}. Boudgoust and Simkin
give a perfectly correct attack on schemes with linear reconstruction
using \(\log_2 k+O(1)\) leaked bits per share over arbitrary finite
fields~\cite{BoudgoustSimkin26}. The shifted quadratic-residue attack
from~\cite{Kopparty26} is discussed in \cref{sec:structured-leakage}. None of these
statements establishes impossibility at threshold rate half
for exactly one-bit leakage. Our logarithmic-bit guarantee against
arbitrary maps is above half; the below-half
result concerns one bit and does not cover the full sublinear-threshold range.

\paragraph{Physical-bit, Hamming-weight, and noisy leakage.}
Physical-bit leakage reveals selected bits in a fixed binary
representation of each share, rather than the output of an arbitrary
bounded-output function. Over prime fields, Maji et al.~\cite{MNPSW21}
prove resilience to multiple physical bits per share for every
reconstruction threshold \(k\ge2\), using random evaluation points
and sufficiently large fields. Subsequent work studies physical-bit
leakage over extension fields~\cite{MNPY24,Nguyen25}.
Hwang et al.~\cite{HwangEtAl25} characterize secure evaluation points
for \(n\)-out-of-\(n\) Shamir sharing over prime fields, while
Venkitesh~\cite{Venkitesh26} provides partial derandomization in certain
extension-field regimes. For public-coefficient inner-product masking,
Faust et al.~\cite{FaustEtAl25} establish physical-bit leakage guarantees
over Mersenne prime fields.

Hamming-weight leakage reveals the number of ones in a share's binary
representation. It has been analyzed for inner-product masking over
binary extension fields~\cite{BHMY26} and for additive sharing over
cyclic groups whose moduli are close to powers of
two~\cite{BHMSY26Additive,BHMSY26Energy}. Noisy-leakage models instead
constrain the leakage channel through a noise assumption rather than
its output length~\cite{ProuffRivain13,DucDziembowskiFaust14,FaustEtAl24,JahandidehEtAl25,GuptaMahdavifar26}.
These results exploit the prescribed leakage function, the
representation of shares, or the noise; they do not give resilience
to arbitrary local leakage.

% \paragraph{Efron--Stein decomposition.}
% The Efron--Stein decomposition is a standard orthogonal decomposition
% on product probability spaces~\cite{ODonnell14}. It separates the
% dependence of a function according to subsets of coordinates and is
% preserved by conditional averaging on each coordinate. Related uses
% of the decomposition occur in secure non-interactive
% simulation~\cite{KMN22}. Here we average over inputs giving the same
% leakage output. The code's Fourier expansion identifies the components
% with dual words of a prescribed support, and our transfer bounds
% those components by a dual-code count.

\paragraph{List decoding of Reed--Solomon codes.}
Sudan's interpolation algorithm and
its extension by Guruswami and Sudan established the classical
algorithmic approach and the Johnson-radius benchmark
\cite{Sudan97,GuruswamiSudan99}. Progress beyond that benchmark includes
random puncturing, generalized Singleton bounds, and improved
large-error-radius and list-recovery guarantees
\cite{RudraWootters14,ShangguanTamo20,FerberKwanSauermann22,GuoEtAl24}.
For a code of rate \(\mathsf R\), the large-alphabet capacity target
is an error fraction \(1-\mathsf R-\zeta\), for any fixed gap
\(\zeta>0\). Brakensiek, Gopi, and Makam proved that generic
Reed--Solomon codes reach this target over sufficiently large
fields~\cite{BGM23}; \cite{AGGLZ25} reduced the field-size requirement
for random evaluation points to a sufficiently large constant times
the block length. These guarantees concern generic or random codes,
not every evaluation tuple.

The recent results of~\cite{BCPZZ26,Jeronimo26} reach capacity for
arbitrarily fixed evaluation points over prime fields.
The former supplies a new interpolation-based algorithmic approach;
Jeronimo's theorem gives the field-independent list bound
\(n^{c_{\zeta,\ell}}\) for every fixed input-list size \(\ell\),
every prime \(p\ge n\), and sufficiently large \(n\).
Dao, Kominers, and Thaler~\cite[Theorem~G.2]{DKT26} also obtain
list-recovery guarantees for fixed evaluation points, providing
another route to the field-independent bound used here.
We use \(\ell=1\) for ordinary list decoding and \(\ell=2^m-1\)
for \(m\)-bit leakage. For growing \(m\), we keep the dependence
on \(\ell\) explicit in Jeronimo's proof and in DKT's finite
estimates (\cref{app:growing-lists,app:dkt-lists}). The field-independent bound matters
because our security bound must not deteriorate when \(p\) grows. Coordinate scaling
extends these guarantees to generalized Reed--Solomon dual codes.

Random linear codes provide a different route to capacity for list
decoding~\cite{GHSZ02,GHK11,LiWootters21} and list
recovery~\cite{LiShagrithaya25,DoronEtAl25,GoyalGuruswami26}; folded
Reed--Solomon codes also achieve list-recovery
capacity~\cite{GuruswamiRudra08}. For fixed alphabets, list-decoding
capacity is expressed using the appropriate entropy function and
should not be identified with the large-alphabet limit \(1-\mathsf R\).
Our transfer applies coding guarantees to the \emph{dual} code:
the same statement uses ordinary list decoding for one-bit outputs
and list recovery with input lists of size \(2^m-1\) for \(m\)-bit outputs.
This is not direct recovery of shares from the leakage. A preimage of
a bit can contain \(\Theta(p)\) symbols; the lists in
the transfer instead arise by replacing the Fourier weights
\(\|\T_i\chi_a\|_2^2\) by list indicators in a weighted counting
bound.

\paragraph{Repairing Reed--Solomon codes.}
Repair bandwidth measures the amount of data downloaded to recover
an erased symbol and is a central parameter in distributed
storage~\cite{DimakisEtAl10}. For Reed--Solomon codes, Guruswami and
Wootters~\cite{GuruswamiWootters16} develop bandwidth-efficient linear
repair schemes, and Tamo, Ye, and Barg~\cite{TamoYeBarg17} achieve the
cut-set bound over suitable large extension fields. Over prime fields,
Con, Shutty, Tamo, and Wootters~\cite{ConEtAl24} use exponential sums
to construct nonlinear repair procedures for low-rate Reed--Solomon
codes that download only a constant number of bits from each helper.
These results connect repair to local leakage: viewing the secret
as an erased evaluation, a repair procedure yields a reconstruction
attack whenever each helper's message fits the permitted leakage budget.

\paragraph{Discrepancy, zero-error list recovery, and OPI.}
The survey of Resch and Venkitesh~\cite{ReschVenkitesh25} discusses
the relation between local leakage and zero-error list recovery
above capacity. Doron et al.~\cite{DoronEtAl26} prove near-expected
codeword counts for random linear codes and derive local leakage
resilience for ramp sharing with arbitrary positive linear
reconstruction thresholds and a sublinear threshold gap.
Their polynomial field sizes are smaller than the block length,
and their codes and privacy/reconstruction tradeoff differ from
the GRS threshold setting here. This prior connection is distinct
from the dual-code transfer in \cref{sec:list-transfer}.

Sun and Wootters~\cite{SunWootters26} connect Fourier estimates for
local leakage to worst-case optimal polynomial intersection (OPI) and obtain improvements over the
semicircle satisfaction guarantee. Horinaga and Yamakawa~\cite{HorinagaYamakawa26}
improve both existential and quantum-algorithmic bounds; Jo~\cite{Jo26OPI}
gives a quantum sampler based on complete list decoding.

\paragraph{Organization.}
\Cref{sec:preliminaries} gives the preliminaries and the Shamir reduction.
\Cref{sec:decomposition} develops the orthogonal decomposition of the
leakage distance. \Cref{sec:list-transfer} proves the list-recovery
transfer and logarithmic-bit leakage resilience above half.
\Cref{sec:structured-leakage} uses the product bound for Gowers norms
to establish the restricted-leakage results at every constant threshold
rate. \Cref{sec:main-result} combines the two tools to go below half.
\Cref{app:growing-lists} gives the quantitative list-recovery bound
used for logarithmic-bit leakage; \cref{app:dkt-lists} gives an
alternative derivation from DKT.
% \Cref{app:context} gives a more detailed comparison
% with prior results and their leakage models.

\section{Preliminaries}
\label{sec:preliminaries}
\label{sec:model}

Throughout, \(p\) is an odd prime and \(\Fp\) the field with \(p\)
elements; the main theorems take \(p>n\). Logarithms are natural unless
a base is shown. Vectors are written in bold. Write
\([n]=\{1,\ldots,n\}\), \(\supp(\vec y)=\{i:y_i\ne0\}\),
\(\wt(\vec y)=|\supp(\vec y)|\), and
\(\distH(\vec x,\vec y)=\wt(\vec x-\vec y)\). The binary entropy is
\(h(t)=-t\log t-(1-t)\log(1-t)\), with \(h(0)=h(1)=0\). For a finite
set \(S\), \(U_S\) denotes the uniform distribution on \(S\); a
uniform random vector is written \(\vec U_S\).

\begin{definition}[Statistical distance]
For distributions \(P,Q\) on a finite set \(\Omega\), the statistical distance between them is defined as 
\[\SD(P,Q)=\frac12\sum_{x\in\Omega}|P(x)-Q(x)|.\]
\end{definition}

\subsection{Generalized Reed--Solomon codes}
\label{sec:code-view}

\begin{definition}[Linear codes and Reed--Solomon codes]
A linear code \(C\) of length \(n\) over \(\Fp\) is a subspace of
\(\Fp^n\); its elements are codewords. Its parameters
\([n,k,d]_p\) are its length, dimension, and minimum Hamming
distance between distinct codewords. Its rate is \(k/n\). Its dual is
\(C^\perp=\{\vec y\in\Fp^n:\sum_ix_iy_i=0\text{ for every }\vec x\in C\}\).

For distinct evaluation points \(\vec\alpha\in\Fp^n\), \(1\le k\le n\),
and multipliers \(\vec v\in(\Fp^*)^n\), the generalized Reed--Solomon
code of length \(n\) and dimension \(k\) is
\[
  \GRS_{\vec\alpha,\vec v}(n,k)
  =\{(v_iQ(\alpha_i))_{i=1}^n:\deg Q<k\}.
\]
Taking \(\vec v=\vec1\) gives an Reed--Solomon code. A code of
dimension \(k\) is maximum-distance-separable (MDS) if its minimum
distance is \(n-k+1\), or equivalently if projection onto any \(k\)
coordinates is a bijection onto \(\Fp^k\).
\end{definition}

GRS codes are MDS, and their duals are GRS codes on the same evaluation
points: for \(C=\GRS_{\vec\alpha,\vec v}(n,k-1)\) with \(2\le k\le n\),
\(C^\perp\) is a GRS code of dimension \(n-k+1\) and minimum distance
\(k\). Rescaling coordinate \(i\) by \(v_i^{-1}\) yields an ordinary
Reed--Solomon code of the same dimension and
preserves Hamming distances.

\subsection{Shamir secret sharing}
\label{sec:shamir}

\begin{definition}[Shamir secret sharing]
\label{def:shamir-sharing}
Let \(\vec\alpha\in(\Fp^*)^n\) have distinct coordinates. The
\((n,k,\vec\alpha)\)-Shamir scheme with \(n\) parties and
reconstruction threshold \(k\) is defined as follows.
\begin{enumerate}
\item \emph{Sharing:} to share \(s\in\Fp\), \(\Share_{\vec\alpha}(s)\)
  samples \(Z_1,\ldots,Z_{k-1}\) independent and uniform in \(\Fp\),
  forms \(P(X)=s+\sum_{j=1}^{k-1}Z_jX^j\), and gives party \(i\) the
  share \(P(\alpha_i)\).
\item \emph{Reconstruction:} any \(k\) shares determine \(s\) by
  Lagrange interpolation.
\end{enumerate}
\end{definition}

The shares of the zero secret are uniform on
\(C_0=\GRS_{\vec\alpha,\vec\alpha}(n,k-1)\), and the shares of secret
\(s\) are uniform on \(s\vec1+C_0\). For every secret, any \(k-1\)
shares are independent uniform field elements, so they reveal nothing
about the secret.

\subsection{Local leakage resilience}
\label{sec:leakage-model}

\begin{definition}[\(m\)-bit local leakage]
For a positive integer \(m\), an \(m\)-bit local leakage function is a
tuple \(\vec\tau=(\tau_1,\ldots,\tau_n)\) of maps
\(\tau_i\colon\Fp\to\{0,1\}^m\). For \(\vec x\in\Fp^n\) write
\(\vec\tau(\vec x)=(\tau_1(x_1),\ldots,\tau_n(x_n))\).
\end{definition}

\begin{definition}[Leakage resilience of a code]
\label{def:code-leakage-resilience}
A linear code \(C\subseteq\Fp^n\) is \(\eps\)-resilient to \(m\)-bit
local leakage if
\(\SD\bigl(\vec\tau(\vec U_C),\vec\tau(\vec U_{\Fp^n})\bigr)\le\eps\)
for every \(m\)-bit local leakage function \(\vec\tau\). We call this
statistical distance the leakage distance.
\end{definition}

Thus local leakage cannot distinguish a uniform codeword from a vector
of independent uniform field elements.

\begin{definition}[Leakage resilience of Shamir sharing]
\label{def:leakage-resilience}
The \((n,k,\vec\alpha)\)-Shamir scheme is \(\eps\)-resilient to
\(m\)-bit local leakage if
\(\SD\bigl(\vec\tau(\Share_{\vec\alpha}(s)),
           \vec\tau(\Share_{\vec\alpha}(s'))\bigr)\le\eps\)
for every \(m\)-bit local leakage function \(\vec\tau\) and all
secrets \(s,s'\in\Fp\).
\end{definition}

When no error bound is stated, resilience means that the leakage
distance is negligible in \(n\), namely smaller than every inverse
polynomial for sufficiently large \(n\).

\begin{remark}[Code monotonicity]
\label{lem:code-monotonicity}
Translations can be absorbed into the leakage maps, and statistical
distance is convex in its first argument~\cite{BDIR18,BDIR21}. Hence if
\(C\subseteq\Fp^n\) is \(\eps\)-resilient to \(m\)-bit local leakage,
the same bound against \(\vec U_{\Fp^n}\) holds for every mixture of
uniform distributions on translates of \(C\); in particular, every
linear code containing \(C\) is \(\eps\)-resilient.
\end{remark}

\begin{remark}[Threshold monotonicity]
\label{lem:threshold-monotonicity}
Increasing the reconstruction threshold preserves leakage
resilience~\cite{BDIR18,BDIR21}: if the \((n,k,\vec\alpha)\)-Shamir
scheme is \(\eps\)-resilient to \(m\)-bit local leakage, so is the
\((n,k',\vec\alpha)\)-scheme for every \(k\le k'\le n\).
\end{remark}

\begin{lemma}[Equivalence of code and Shamir leakage resilience]
\label{lem:shamir-from-code}
\label{lem:code-from-shamir}
Let \(\vec\alpha\in(\Fp^*)^n\) have distinct coordinates, let
\(\vec v\in(\Fp^*)^n\), and let \(2\le k\le n\). Resilience to
\(m\)-bit local leakage of \(\GRS_{\vec\alpha,\vec v}(n,k-1)\) and of
the \((n,k,\vec\alpha)\)-Shamir scheme are equivalent up to a linear
factor in the error. More precisely, code error \(\eps\) gives Shamir
error at most \(2\eps\), and Shamir error \(\eps\) gives code error
at most \((n-k+1)\eps\).
\end{lemma}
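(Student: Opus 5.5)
The plan is to reduce everything to two elementary facts. First, coordinatewise bijections of \(\Fp^n\) (translations by a fixed vector and scalings by nonzero constants) can be absorbed into the leakage maps. Second, a chain of nested GRS codes interpolates between \(C_0\) and \(\Fp^n\).

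\emph{Absorbing coordinate maps.} Suppose \(\phi_i\colon\Fp\to\Fp\) is a bijection for each \(i\) and \(\vec\phi=(\phi_1,\ldots,\phi_n)\). Then \(\vec\tau(\vec\phi(\vec X))=\vec\tau'(\vec X)\) with \(\tau'_i=\tau_i\circ\phi_i\), which is again an \(m\)-bit local leakage function. Moreover \(\vec\phi(\vec U_{\Fp^n})\) is again uniform on \(\Fp^n\). Taking \(\phi_i(x)=(v_i/\alpha_i)x\) maps \(C_0=\GRS_{\vec\alpha,\vec\alpha}(n,k-1)\) bijectively onto \(\GRS_{\vec\alpha,\vec v}(n,k-1)\). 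Hence the leakage distances of these two codes agree, as suprema over all \(\vec\tau\). So we may work with \(C_0\), and more generally resilience of \(\GRS_{\vec\alpha,\vec v}(n,d)\) does not depend on \(\vec v\).

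\emph{Code \(\Rightarrow\) Shamir.} The shares of \(s\) are \(s\vec1+\vec U_{C_0}\). Translating by \(s\vec1\) is a coordinate bijection, so \(\vec\tau(s\vec1+\vec U_{C_0})=\vec\tau^{s}(\vec U_{C_0})\) with \(\tau^s_i(x)=\tau_i(s+x)\). Also \(\vec\tau^s(\vec U_{\Fp^n})\) has the same law as \(\vec\tau(\vec U_{\Fp^n})\). Code resilience therefore puts each secret's leakage distribution within \(\eps\) of \(\vec\tau(\vec U_{\Fp^n})\), and the triangle inequality gives \(2\eps\) between any two secrets.

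\emph{Shamir \(\Rightarrow\) code.} Let \(D_d=\GRS_{\vec\alpha,\vec1}(n,d)\) denote the evaluations of polynomials of degree \(<d\), for \(k-1\le d\le n\); note \(D_n=\Fp^n\) since the \(\alpha_i\) are distinct. Shamir sharing with threshold \(d+1\) has zero-secret shares uniform on \(\GRS_{\vec\alpha,\vec\alpha}(n,d)\). The mixture over a uniform secret \(s\) of its shares is uniform on \(D_{d+1}\), since \(s+\sum_{j\ge1}Z_jX^j\) is then a uniform polynomial of degree \(\le d\). By threshold monotonicity (\cref{lem:threshold-monotonicity}), the \((n,d+1,\vec\alpha)\)-scheme is \(\eps\)-resilient for every \(d\ge k-1\). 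Convexity of statistical distance, averaging over \(s\) against \(s=0\), then gives
\[
\SD\bigl(\vec\tau(\vec U_{\GRS_{\vec\alpha,\vec\alpha}(n,d)}),\vec\tau(\vec U_{D_{d+1}})\bigr)\le\eps .
\]
By the scaling step, \(\GRS_{\vec\alpha,\vec\alpha}(n,d)\) may be replaced by \(D_d\), since the map \(x_i\mapsto\alpha_ix_i\) is absorbed into \(\vec\tau\) and fixes nothing else we use. We chain the bounds for \(d=k-1,\ldots,n-1\): this is \(n-k+1\) steps from \(D_{k-1}\) to \(D_n=\Fp^n\). The triangle inequality then yields code error \((n-k+1)\eps\) for \(D_{k-1}\), and hence for \(\GRS_{\vec\alpha,\vec v}(n,k-1)\).

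The only delicate point is bookkeeping in this chain. Each comparison must be between a uniform codeword of one code and the uniform mixture of its translates forming the next code, with the same leakage tuple throughout. The rescaling from \(\vec\alpha\)-multipliers to \(\vec1\)-multipliers must therefore be applied uniformly to both distributions being compared at each step. Since the rescaling is a fixed coordinatewise bijection, this is consistent.
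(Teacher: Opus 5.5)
Your code-to-Shamir direction and the structure of the converse (threshold monotonicity, convexity over a uniform secret, a chain of \(n-k+1\) nested GRS codes ending at \(\Fp^n\)) match the paper's proof. The chaining step, however, fails as written. At threshold \(d+1\) you correctly obtain, for every \(\vec\tau\),
\[
  \SD\bigl(\vec\tau(\vec U_{\vec\alpha D_d}),\vec\tau(\vec U_{D_{d+1}})\bigr)\le\eps,
  \qquad \vec\alpha D_d=\GRS_{\vec\alpha,\vec\alpha}(n,d),
\]
where products are coordinatewise. You cannot then replace \(\vec\alpha D_d\) by \(D_d\) while keeping \(D_{d+1}\). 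A scaling absorbed into \(\vec\tau\) acts on both arguments. Applying \(x_i\mapsto\alpha_i^{-1}x_i\) to both gives the pair \(\bigl(D_d,\vec\alpha^{-1}D_{d+1}\bigr)\), where \(\vec\alpha^{-1}D_{d+1}=D_d+\Fp\vec\alpha^{-1}=\GRS_{\vec\alpha,\vec\alpha^{-1}}(n,d+1)\). For \(d+1<n\) this code is not \(D_{d+1}\). Indeed \(\vec\alpha^{-1}\notin D_{d+1}\), since otherwise \(XQ(X)-1\), a nonzero polynomial of degree at most \(d+1<n\), would vanish at all \(n\) points \(\alpha_i\). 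So after rescaling both sides, consecutive pairs do not share an endpoint. Your closing claim that rescaling both distributions ``is consistent'' is exactly where the argument breaks. The pairwise bound \(\SD(\vec\tau(\vec U_{D_d}),\vec\tau(\vec U_{D_{d+1}}))\le\eps\) that your chain needs would instead follow from resilience of the Shamir scheme with evaluation points \(\vec\alpha^{-1}\), which is not the hypothesis.

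There are two easy repairs. The first is to let the multipliers drift, as the paper does. Set \(E_d=\GRS_{\vec\alpha,\vec v/\vec\alpha^{\,d-k+1}}(n,d)\) and rescale the comparison at threshold \(d+1\) by \(\vec v/\vec\alpha^{\,d-k+2}\). This turns it into \(\SD(\vec\tau(\vec U_{E_d}),\vec\tau(\vec U_{E_{d+1}}))\le\eps\), and these pairs do link up from \(E_{k-1}=\GRS_{\vec\alpha,\vec v}(n,k-1)\) to \(E_n=\Fp^n\). The second is to keep your \(D_d\) but induct on the scale-invariant quantity \(\operatorname{err}(D)=\sup_{\vec\tau}\SD(\vec\tau(\vec U_D),\vec\tau(\vec U_{\Fp^n}))\). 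The triangle inequality through \(\vec U_{D_{d+1}}\) gives \(\operatorname{err}(\vec\alpha D_d)\le\eps+\operatorname{err}(D_{d+1})\). Your scaling observation gives \(\operatorname{err}(D_d)=\operatorname{err}(\vec\alpha D_d)\). Together with \(\operatorname{err}(D_n)=0\), this yields \(\operatorname{err}(D_{k-1})\le(n-k+1)\eps\). The underlying point is that scaling invariance holds for the distance to \(\vec U_{\Fp^n}\), but not for the distance between two proper codes.
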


\begin{proof}
Code to Shamir is the reduction of~\cite{BDIR18,BDIR21}. Coordinate
scalings by nonzero elements can be absorbed into the leakage maps, so
we may take \(C_0=\GRS_{\vec\alpha,\vec\alpha}(n,k-1)\). If \(C_0\) is
\(\eps\)-resilient, applying its guarantee to the maps
\(x\mapsto\tau_i(x+s)\) gives
\(\SD\bigl(\vec\tau(\Share_{\vec\alpha}(s)),\vec\tau(\vec U_{\Fp^n})\bigr)\le\eps\)
for every secret \(s\), and the triangle inequality gives Shamir error
at most \(2\eps\). For the zero-secret code \(C_0\), this direction
holds for any leakage class closed under coordinatewise translations.
For arbitrary multipliers, closure under nonzero coordinate scalings
is also needed.

Conversely, suppose the Shamir scheme is \(\eps\)-resilient at
threshold \(k\), hence at every larger threshold by
\cref{lem:threshold-monotonicity}. For \(k-1\le r<n\) and multipliers
\(\vec v\), put \(C=\GRS_{\vec\alpha,\vec v}(n,r)\) and
\(C'=\GRS_{\vec\alpha,\vec w}(n,r+1)\) with \(w_i=v_i/\alpha_i\). At
threshold \(r+1\), multiplying a sharing of zero coordinatewise by
\(\vec w\) gives \(\vec U_C\), and doing the same to a sharing of a
uniform secret gives \(\vec U_{C'}\), since all \(r+1\) coefficients
are then independent and uniform. Hence, with
\(\sigma_i(x)=\tau_i(w_ix)\) and sharings at threshold \(r+1\),
convexity gives
\[
  \SD\bigl(\vec\tau(\vec U_C),\vec\tau(\vec U_{C'})\bigr)
  \le\frac1p\sum_{s\in\Fp}
     \SD\bigl(\vec\sigma(\Share_{\vec\alpha}(0)),
              \vec\sigma(\Share_{\vec\alpha}(s))\bigr)
  \le\eps .
\]
Starting with the given multipliers \(\vec v\), these steps form the
nested chain
\[
\begin{aligned}
  \GRS_{\vec\alpha,\vec v}(n,k-1)
  &\subseteq\GRS_{\vec\alpha,\vec v/\vec\alpha}(n,k)
  \subseteq\cdots\\
  &\subseteq
  \GRS_{\vec\alpha,\vec v/\vec\alpha^{\,n-k+1}}(n,n)
  =\Fp^n,
\end{aligned}
\]
where division and powers of the multipliers are coordinatewise.
The triangle inequality over the \(n-k+1\) steps gives code error
at most \((n-k+1)\eps\).
\end{proof}

\subsection{Fourier analysis and Gowers norms}
\label{sec:fourier-gowers}

For \(f\colon\Fp\to\C\), write \(\norm f_q=(\E_x|f(x)|^q)^{1/q}\) for
\(1\le q<\infty\) and \(\norm f_\infty=\max_x|f(x)|\).

\begin{definition}[Fourier transform]
Let \(e_p(t)=\exp(2\pi it/p)\) and \(\chi_a(x)=e_p(ax)\) for
\(a\in\Fp\); the index \(a\) is a frequency. The normalized Fourier
transform is \(\widehat f(a)=\E_x[f(x)\overline{\chi_a(x)}]\). For
vectors, \(\chi_{\vec y}(\vec x)=e_p(\sum_iy_ix_i)\) and
\(\widehat f(\vec y)=\E_{\vec x\in\Fp^n}
 [f(\vec x)\overline{\chi_{\vec y}(\vec x)}]\).
\end{definition}

Orthogonality of the characters gives Fourier inversion and
Parseval's identity:
\[
  f(x)=\sum_{a\in\Fp}\widehat f(a)\chi_a(x),
  \qquad
  \norm f_2^2=\sum_{a\in\Fp}|\widehat f(a)|^2 .
\]

\paragraph{Averages over a linear code.}
For a linear code \(C\subseteq\Fp^n\), a set \(A\subseteq[n]\), and
functions \(\vec f_A=(f_i\colon\Fp\to\C)_{i\in A}\), write
\[
  \Gamma_C(\vec f_A)
  =\E_{\vec X\sim U_C}\Bigl[\prod_{i\in A}f_i(X_i)\Bigr],
\]
with \(\Gamma_C(\vec f_\emptyset)=1\). We use the Poisson summation
formula for a linear code~\cite[Lemma~4.16]{BDIR21}.

\begin{lemma}[Dual-code formula]
\label{lem:dual-code-formula}
Let \(C\subseteq\Fp^n\) be a linear code. For every
\(\vec x\in\Fp^n\),
\[
  \frac{p^n}{|C|}\one[\vec x\in C]
  =\sum_{\vec y\in C^\perp}\chi_{\vec y}(\vec x).
\]
\end{lemma}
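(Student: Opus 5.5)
The plan is to expand the indicator of \(C\) in characters and evaluate the resulting character sums by orthogonality. First, for a fixed \(\vec x\in\Fp^n\), I would examine the map \(\phi_{\vec x}\colon C^\perp\to\C^*\) given by \(\vec y\mapsto\chi_{\vec y}(\vec x)=e_p(\sum_i y_ix_i)\). It is a group homomorphism from the additive group \(C^\perp\) to the unit circle, because \(\chi_{\vec y+\vec y'}(\vec x)=\chi_{\vec y}(\vec x)\chi_{\vec y'}(\vec x)\). The standard fact about characters then gives
\[
  \sum_{\vec y\in C^\perp}\chi_{\vec y}(\vec x)=
  \begin{cases}|C^\perp| & \text{if }\phi_{\vec x}\text{ is trivial},\\ 0&\text{otherwise.}\end{cases}
\]
The nontrivial case follows by shifting the summation variable by some \(\vec y_0\) with \(\phi_{\vec x}(\vec y_0)\ne1\): this multiplies the sum by \(\phi_{\vec x}(\vec y_0)\) and leaves it unchanged, so the sum must vanish.

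Second, I would identify exactly when \(\phi_{\vec x}\) is trivial. The condition is \(e_p(\langle\vec y,\vec x\rangle)=1\) for all \(\vec y\in C^\perp\). Because \(e_p(t)=1\) exactly when \(t=0\) in \(\Fp\), this says \(\langle\vec y,\vec x\rangle=0\) for all \(\vec y\in C^\perp\), i.e. \(\vec x\in(C^\perp)^\perp\).

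The step I expect to need the most care is the double-dual identity \((C^\perp)^\perp=C\) over \(\Fp\). I would get it from linear algebra: the bilinear form \(\sum_ix_iy_i\) is nondegenerate on \(\Fp^n\), so \(\dim C^\perp=n-\dim C\). Hence \(\dim(C^\perp)^\perp=\dim C\), and together with the obvious inclusion \(C\subseteq(C^\perp)^\perp\) this gives equality. The same dimension count yields \(|C^\perp|=p^{n-\dim C}=p^n/|C|\).

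Combining the three steps, the sum equals \(p^n/|C|\) when \(\vec x\in C\) and \(0\) otherwise, which is exactly the claimed formula of \cref{lem:dual-code-formula}.
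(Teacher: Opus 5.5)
Your proof is correct, but it reaches the identity from the opposite side to the paper.

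The paper computes the Fourier coefficients of the normalized indicator \(\frac{p^n}{|C|}\one_C\). Averaging \(\overline{\chi_{\vec y}}\) over \(\vec X\sim U_C\) gives \(\one[\vec y\in C^\perp]\) directly from the definition of the dual: the functional \(\vec x\mapsto\sum_i x_iy_i\) on \(C\) is either identically zero or onto \(\Fp\). Fourier inversion on \(\Fp^n\) then produces \(\sum_{\vec y\in C^\perp}\chi_{\vec y}(\vec x)\) with the normalization built in.

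So the paper applies character orthogonality on \(C\) and never needs the double-dual identity \((C^\perp)^\perp=C\) or the count \(|C^\perp|=p^n/|C|\). Both can be read off from the identity afterwards: evaluate at \(\vec x=0\) for the count, and note the right side equals \(|C^\perp|\ne0\) on \((C^\perp)^\perp\) for the inclusion \((C^\perp)^\perp\subseteq C\).

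You apply orthogonality on \(C^\perp\) instead. This avoids Fourier inversion, but you must supply those two linear-algebra facts as input. You justify both correctly via nondegeneracy of the dot product and the dimension count \(\dim C^\perp=n-\dim C\).

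The two arguments are the two dual faces of the same Poisson summation principle. The paper's version is slightly more economical because the normalization comes for free.
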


The identity follows from Fourier inversion and
\(\E_{\vec X\sim U_C}[\chi_{\vec y}(\vec X)]=\one[\vec y\in C^\perp]\).

\begin{definition}[Gowers norms]
\label{def:gowers-norms}
For a positive integer \(t\) and \(f\colon\Fp\to\C\),
\[
  \norm f_{U^t}^{2^t}
  =\E_{x,h_1,\ldots,h_t\in\Fp}\Bigl[
    \prod_{\vec\omega\in\{0,1\}^t}
      \comp^{|\vec\omega|}f(x+\vec\omega\cdot\vec h)\Bigr],
\]
where \(\vec h=(h_1,\ldots,h_t)\), \(\comp\) denotes complex
conjugation, \(|\vec\omega|=\omega_1+\cdots+\omega_t\), and the
\(t+1\) averaging variables are independent and uniform. The same
definition applies to functions on any finite abelian group, in
particular on \(\Fp^J\) for a finite set \(J\).
\end{definition}

The conjugations make these averages real and nonnegative; the
\(U^2\) and \(U^3\) norms average products of four and eight
translated values. For \(t\ge2\), \(\norm\cdot_{U^t}\) is a norm, and
\(|\E f|=\norm f_{U^1}\le\norm f_{U^2}\le\cdots\le\norm f_\infty\)~\cite{TaoVu06}.
Fourier orthogonality gives
\(\norm f_{U^2}^4=\sum_{a\in\Fp}|\widehat f(a)|^4\), and with
\(\Delta_hf(x)=f(x+h)\overline{f(x)}\) the definition gives
\(\norm f_{U^3}^8=\E_h\bigl[\norm{\Delta_hf}_{U^2}^4\bigr]\).

\subsection{List decoding and list recovery}
\label{sec:list-decoding}

\begin{definition}[List size]
\label{def:list-size}
For a code \(D\subseteq\Fp^n\), an integer radius \(0\le R\le n\), and
an input-list size \(\ell\ge1\),
\[
  \List_D(R,\ell)
  =\max_{\substack{S_1,\ldots,S_n\subseteq\Fp\\|S_i|\le\ell}}
    \bigl|\{\vec y\in D:|\{i:y_i\in S_i\}|\ge n-R\}\bigr| .
\]
Write \(\List_D(R)=\List_D(R,1)\).
\end{definition}

Thus \(\List_D(R,\ell)\le L\) means that every choice of input lists
admits at most \(L\) codewords agreeing with them in at least \(n-R\)
coordinates (list recovery). For \(\ell=1\),
\(\List_D(R)=\max_{\vec y\in\Fp^n}|\{\vec v\in D:\distH(\vec v,\vec y)\le R\}|\)
(list decoding). We use only these counting bounds, not decoding
algorithms.

We use a field-size-independent list-recovery bound supplied by
Jeronimo~\cite[Theorem~1.1]{Jeronimo26}, building on the decoding
method of Brakensiek et al.~\cite{BCPZZ26}, and also by Dao, Kominers,
and Thaler~\cite[Theorem~G.2]{DKT26}. For growing leakage alphabets,
we need the dependence on the input-list size explicit.
The following quantitative form is derived from Jeronimo's proof in
\cref{app:growing-lists} and from DKT's finite estimates in
\cref{app:dkt-lists}.

\begin{theorem}[Reed--Solomon list recovery, derived from~\cite{Jeronimo26,DKT26}]
\label{thm:rs-list-decoding}
Fix \(0<\zetaGap<1\). There is a constant \(c_{\zetaGap}\ge1\) such
that, for every integer \(\ell\ge1\), every
\(n\ge c_{\zetaGap}\ell^{9/\zetaGap}\), every prime \(p>n\), and every integer
\(1\le h\le(1-\zetaGap)n\),
\[
  \List_{\GRS_{\vec\alpha,\vec v}(n,h)}
    \bigl(\floor{n-h-\zetaGap n},\ell\bigr)
  \le n^{c_{\zetaGap}\ell^{3/\zetaGap}} .
\]
\end{theorem}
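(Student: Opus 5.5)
Since the paper says this theorem is derived from \cite{Jeronimo26,DKT26} with the dependence on \(\ell\) made explicit, the plan is a reduction plus a bookkeeping argument.

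\emph{Reduction to ordinary Reed--Solomon codes.} Rescaling coordinate \(i\) by \(v_i^{-1}\) maps \(\GRS_{\vec\alpha,\vec v}(n,h)\) bijectively onto \(\GRS_{\vec\alpha,\vec1}(n,h)\). It also maps each input list \(S_i\) to \(v_i^{-1}S_i\), which has the same size. Agreement with the lists is therefore preserved coordinatewise, so the list size \(\List\) is unchanged. We may thus assume \(\vec v=\vec1\), that is, an ordinary Reed--Solomon code of rate \(h/n\le 1-\zetaGap\) with arbitrary distinct evaluation points in \(\Fp\), \(p>n\).

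\emph{Invoking the capacity list-recovery bound.} The radius is \(\floor{n-h-\zetaGap n}\), so we need to count codewords (polynomials of degree \(<h\)) that agree with the lists on at least \(h+\zetaGap n\) coordinates. This is list recovery at distance \(\zetaGap\) from capacity for input-list size \(\ell\) in the large-alphabet sense. I would revisit Jeronimo's proof of~\cite[Theorem~1.1]{Jeronimo26}, which bounds the list by \(n^{c_{\zetaGap,\ell}}\) for \(n\ge n_0(\zetaGap,\ell)\) and every prime \(p\ge n\). The task is to track the dependence of \(c_{\zetaGap,\ell}\) and \(n_0\) on \(\ell\).

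The argument builds an interpolating polynomial whose multiplicity or degree parameters scale like \(\ell/\zetaGap\). It then uses a recursive or iterated subspace-restriction step with about \(1/\zetaGap\) rounds, each of which multiplies the degree of freedom by a factor polynomial in \(\ell\). I would aim to show that each round costs a factor \(n^{O(\ell^{3})}\) in the exponent's growth compounded over \(O(1/\zetaGap)\) rounds, giving an exponent \(c_\zetaGap\ell^{3/\zetaGap}\). The condition \(n\ge c_\zetaGap\ell^{9/\zetaGap}\) is what keeps the auxiliary parameters, which are of size \(\ell^{O(1/\zetaGap)}\), at most a small fraction of \(n\). This validates the counting and degree-comparison steps; for example, the requirement that the interpolation degree times the number of rounds be at most \(\zetaGap n/2\).

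As a cross-check, DKT~\cite[Theorem~G.2]{DKT26} should give the same shape via its finite estimates. Taking the worse of the constants from the two routes gives a single \(c_\zetaGap\).

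\emph{Main obstacle.} The hard step is the explicit extraction of the \(\ell\)-dependence from the source proofs. Those proofs state only existence of \(c_{\zetaGap,\ell}\), so every hidden constant must be audited, and it must be confirmed that the dependence is at most \(\ell^{3/\zetaGap}\) in the exponent rather than something tower-like. I would first try to isolate the one recursive inequality controlling the list size, of the form \(L_{j+1}\le n^{a\ell^3}L_j\) or \(\text{exponent}_{j+1}\le\ell^3\,\text{exponent}_j\) over \(1/\zetaGap\) levels, and solve it. The remaining work is uniformity over \(1\le h\le(1-\zetaGap)n\) and all \(p>n\), together with absorbing floors into the constant. The paper defers these details to \cref{app:growing-lists,app:dkt-lists}.
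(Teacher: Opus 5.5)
Your rescaling reduction from GRS to ordinary RS codes is the paper's first step, and your overall strategy of making the $\ell$-dependence explicit in Jeronimo's and DKT's proofs is also the paper's. The gap is that everything after the reduction is a plan to audit the sources rather than an argument, and the mechanism you plan to look for is not what produces the exponent $\ell^{3/\zeta}$. There is no recursion over $O(1/\zeta)$ rounds that multiplies the exponent by $\ell^3$ each time. Your two descriptions are also inconsistent: a factor $n^{O(\ell^3)}$ per round over $O(1/\zeta)$ rounds gives $n^{O(\ell^3/\zeta)}$, not $n^{\ell^{3/\zeta}}$.

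In the paper's derivation from Jeronimo, $\ell$ enters at exactly one place. The interpolation step (Jeronimo's Proposition~3.7, with multiplicity $t^3$) produces a nonzero $Q(X,Y_0,\ldots,Y_t)$ with $\deg_XQ\le t^3n$ and total $Y$-degree $B\le\lceil t^3A/(K-1)\rceil\le5t^3$. Every candidate $P$ satisfies $Q(X,P,P^{[1]},\ldots,P^{[t]})=0$, provided the total list mass obeys $M=\sum_i|S_i|\le b_\zeta(K-1)t^{\zeta/3}$. Since $M$ can be $n\ell$ and $K-1\le n$, this forces the derivative order $t$ to be of order $\ell^{3/\zeta}$; the paper takes $t=\lceil a_\zeta\ell^{3/\zeta}\rceil$. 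The list size is then $n^{O(t)}$. The length hypothesis $n\ge c_\zeta\ell^{9/\zeta}$ is just $5t^3\le n$, so that $B\le n<p$ and $t<K$. It is not a bound on ``interpolation degree times number of rounds''. DKT's route has the same shape, with $t=O_\zeta(\ell^{2/\zeta})$ and list size $n^{5t}$.

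The step you leave undone is a $p$-independent count of solutions of this differential equation that is explicitly $n^{O(t)}$; the sources state it only with an unspecified $c_{\zeta,\ell}$. The paper proves that at most $B^2(D_0+B(h-1)+1)\sum_{j=0}^t(ME)^j$ polynomials of degree below $h$ solve it with at least $h$ agreements, where $E=8B(h+1)^2$. The argument has three parts.
\begin{enumerate}
\item Pick a point $\alpha$ from a fixed set of $D_0+B(h-1)+1$ points of $\overline{\Fp}$ at which the separant $H=\partial Q/\partial Y_s$ is nonzero along the candidate.
\item Use the Hasse-derivative recurrence to write each Taylor coefficient $c_{s+j}$ at $\alpha$ as $N_j(\vec c)/H_\alpha(\vec c)^{2j-1}$ with $\deg N_j\le4Bj^2$. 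Here $p>\max\{h-1,B\}$ keeps the binomial coefficients and $H$ nonzero.
\item Cut the hypersurface $Q(\alpha,\vec c)=0$, of dimension at most $s\le t$, down to points by at most $s$ agreement equations of degree at most $E$. Each cut is proper because $h$ agreements determine $P$, so B\'ezout bounds each ordered choice of $j$ cuts by $BE^j$.
\end{enumerate}
Candidates on which $H$ vanishes identically are handled by repeating with $H$ in place of $Q$, at most $B$ times. With $D_0\le n^2$, $B\le n$, $M\le n^2$ and $E\le32n^3$, this gives at most $n^{18t}=n^{O_\zeta(\ell^{3/\zeta})}$. You have not identified that the exponent is linear in the derivative order, or that this order is fixed by the interpolation inequality, so your audit has no target. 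As written, the proposal does not establish the stated dependence on $\ell$.
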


The exponent is independent of \(p\), so there is no upper bound on
the field size; the gap \(\zetaGap\) is fixed, while \(\ell\) may grow
subject to the length condition. The statement for ordinary
Reed--Solomon codes gives the GRS version by replacing each input list
\(S_i\) with \(v_i^{-1}S_i\), which preserves list sizes and agreement
counts.

For \(C=\GRS_{\vec\alpha,\vec v}(n,k-1)\) with \(2\le k\le n\), the
dual has dimension \(n-k+1\), which is at most \((1-\zetaGap)n\) when
\(k-1\ge\zetaGap n\), and the radius in the theorem is then exactly
\(\floor{k-1-\zetaGap n}\). Hence, whenever \(k-1\ge\zetaGap n\) and
\(n\ge c_{\zetaGap}\ell^{9/\zetaGap}\),
\[
  \List_{C^\perp}\bigl(\floor{k-1-\zetaGap n},\ell\bigr)
  \le n^{c_{\zetaGap}\ell^{3/\zetaGap}} .
\]
Taking \(\ell=1\) gives the corresponding list-decoding bound.
All later estimates are uniform in the leakage maps, so no union bound
over maps is needed.

For comparison, \cite[Corollary~5.1]{BCPZZ26} gives list size
\(p^{O(1)}\) at fixed rate and gap when \(p\ge Cn\), which also
suffices for our arguments when \(\log p=o(n)\);
\cref{thm:rs-list-decoding} removes this upper restriction and allows
every prime \(p>n\).

\section{An orthogonal decomposition of the leakage distance}
\label{sec:decomposition}

This section writes the leakage distance as a norm of one function of
the outputs and decomposes that function into orthogonal components
indexed by sets of coordinates; \cref{sec:expansion} specializes to
one-bit leakage.

\subsection{Efron--Stein decomposition}
\label{sec:efron-stein}

Throughout this subsection, \(\Omega=\prod_{i=1}^n\Omega_i\) is a product
of finite sets and \(\mu=\prod_{i=1}^n\mu_i\) is a product probability
measure on \(\Omega\) in which every \(\mu_i\) has full support. For
\(f,g\colon\Omega\to\C\) write
\(\langle f,g\rangle_\mu=\E_\mu[f\overline g]\) and
\(\|f\|_{q,\mu}=(\E_\mu|f|^q)^{1/q}\) for \(1\le q<\infty\); we omit the
measure when it is clear from context.

\begin{definition}[Efron--Stein decomposition {\cite[Chapter~8]{ODonnell14}}]
\label{def:efron-stein}
Every \(f\colon\Omega\to\C\) has a unique expansion
\[
  f=\sum_{A\subseteq[n]}f^{=A}
\]
in which \(f^{=A}\) depends only on the coordinates in \(A\) and
\(\E_{x_i\sim\mu_i}[f^{=A}(\vec x)]=0\) for every \(i\in A\) and every
fixing of the other coordinates. The components are pairwise orthogonal
in \(L^2(\Omega,\mu)\); in particular
\[
  f^{=\emptyset}=\E_\mu[f]
  \qquad\text{and}\qquad
  \|f\|_{2,\mu}^2=\sum_{A\subseteq[n]}\|f^{=A}\|_{2,\mu}^2 .
\]
\end{definition}

Concretely, choose for each \(i\) an orthonormal basis of
\(L^2(\Omega_i,\mu_i)\) that contains the constant function \(1\). Their
products, one factor per coordinate, form an orthonormal basis of
\(L^2(\Omega,\mu)\), and \(f^{=A}\) is the part of the expansion of \(f\)
whose factors are nonconstant at exactly the coordinates in
\(A\)~\cite[Chapter~8]{ODonnell14}. Equivalently, let \(E_i^\mu\)
average coordinate \(i\) according to \(\mu_i\), leaving the others
fixed. Then
\[
  f^{=A}
  =\Bigl(\prod_{i\in A}(\mathrm{Id}-E_i^\mu)\Bigr)
   \Bigl(\prod_{i\notin A}E_i^\mu\Bigr)f.
\]
On \(\Fp^n\) with the uniform
measure and the character basis this reads as
follows~\cite[Proposition~8.36]{ODonnell14}; see
also~\cite[Proposition~7]{KMN22}.

\begin{lemma}[Fourier form of the components]
\label{lem:fourier-components}
For every \(f\colon\Fp^n\to\C\) and \(A\subseteq[n]\), under the uniform
measure on \(\Fp^n\),
\[
  f^{=A}
  =\sum_{\vec y\in\Fp^n\colon\supp(\vec y)=A}
   \widehat f(\vec y)\,\chi_{\vec y}.
\]
\end{lemma}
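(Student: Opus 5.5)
The plan is to verify that the proposed expression satisfies the two defining properties of the Efron--Stein decomposition (\cref{def:efron-stein}) and then invoke uniqueness. Write \(f=\sum_{\vec y\in\Fp^n}\widehat f(\vec y)\chi_{\vec y}\) by Fourier inversion on \(\Fp^n\), and group the characters according to their supports:
\[
  f=\sum_{A\subseteq[n]}g_A,\qquad
  g_A:=\sum_{\vec y\colon\supp(\vec y)=A}\widehat f(\vec y)\,\chi_{\vec y}.
\]
Every \(\vec y\in\Fp^n\) has exactly one support, so this is a genuine partition of the Fourier expansion. It therefore suffices to show that each \(g_A\) has the two properties required of \(f^{=A}\).

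First, \(g_A\) depends only on coordinates in \(A\). Indeed,
\[
  \chi_{\vec y}(\vec x)=\prod_i\chi_{y_i}(x_i),
\]
and \(\chi_0\equiv1\), so for \(\supp(\vec y)=A\) the factors with \(i\notin A\) are identically one.

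Second, \(g_A\) has mean zero in each coordinate of \(A\). Fix \(i\in A\) and the coordinates \(x_j\), \(j\ne i\). Averaging \(\chi_{\vec y}\) over uniform \(x_i\in\Fp\) gives
\[
  \E_{x_i}[\chi_{y_i}(x_i)]\prod_{j\ne i}\chi_{y_j}(x_j).
\]
Since \(y_i\ne0\) and \(p\) is prime, \(\E_{x_i}[e_p(y_ix_i)]=0\). Hence each term of \(g_A\) averages to zero in coordinate \(i\), and so does \(g_A\) by linearity.

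Uniqueness in \cref{def:efron-stein} then forces \(f^{=A}=g_A\).

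If one prefers not to lean on the uniqueness statement, the alternative route is the projection formula \(f^{=A}=\prod_{i\in A}(\mathrm{Id}-E_i)\prod_{i\notin A}E_i f\). Apply it to a single character. The operator \(E_i\) sends \(\chi_{\vec y}\) to itself when \(y_i=0\) and to \(0\) otherwise. Consequently \(\mathrm{Id}-E_i\) keeps \(\chi_{\vec y}\) exactly when \(y_i\ne0\), and kills it otherwise. The full product therefore keeps \(\chi_{\vec y}\) if and only if \(\supp(\vec y)=A\). Linearity then gives the claim.

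There is no real obstacle. The only point needing care is the vanishing of \(\sum_{x\in\Fp}e_p(ax)\) for \(a\ne0\), which uses that \(a\) is invertible modulo the prime \(p\). That is also the step where the field structure enters; over \(\mathbb Z_N\) the same argument works for any \(a\not\equiv0\).
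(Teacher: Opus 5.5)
Your proof is correct and matches what the paper does: the paper gives no separate proof, but obtains the lemma by specializing its product-orthonormal-basis description of the Efron--Stein components to the character basis, where \(\chi_0=1\) is the constant and \(\chi_a\) is nonconstant exactly when \(a\ne0\) (equivalently, by the projection formula \(f^{=A}=\prod_{i\in A}(\mathrm{Id}-E_i)\prod_{i\notin A}E_i f\)), citing O'Donnell's Proposition~8.36. Both of your routes are faithful versions of that standard argument: the uniqueness check, and the computation \(E_i\chi_{\vec y}=\one[y_i=0]\,\chi_{\vec y}\).
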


Leakage enters through conditional expectation.

\begin{definition}[Markov operator~{\cite[Section~4.4]{KMN22}}]
\label{def:leakage-operator}
Let \((X,Y)\) be a pair of random variables on finite sets
\(\Omega\times\Omega'\) with marginal distributions \(\mu\) and \(\nu\). Its
\emph{Markov operator} \(\T\colon L^2(\Omega,\mu)\to L^2(\Omega',\nu)\) is
\[
  (\T f)(y)=\E[f(X)\mid Y=y],\qquad y\in\supp(\nu).
\]
\end{definition}

The operator \(\T\) is linear, \(\T1=1\), and \(\E_\nu[\T f]=\E_\mu[f]\).
If \((X_i,Y_i)_{i\in[n]}\) are independent pairs with Markov operators
\(\T_i\), then the Markov operator of \(\vec X=(X_i)_i\) and
\(\vec Y=(Y_i)_i\) is the tensor product \(\T=\bigotimes_i\T_i\): for
\(f(\vec x)=\prod_i f_i(x_i)\) one has
\((\T f)(\vec y)=\prod_i(\T_i f_i)(y_i)\). We use two standard
properties~\cite[Proposition~5]{KMN22}.

\begin{lemma}[Markov operators and components]
\label{lem:product-expectation}
Let \(\T\) be the Markov operator of a pair \((X,Y)\) with marginal distributions
\(\mu,\nu\).
\begin{enumerate}
\item \(\|\T f\|_{2,\nu}\le\|f\|_{2,\mu}\) for every
  \(f\in L^2(\Omega,\mu)\).
\item If \((X,Y)\) is a product of independent pairs as above, then
  \((\T f)^{=A}=\T(f^{=A})\) for every \(f\) and every \(A\subseteq[n]\),
  where the components of \(f\) are taken under \(\mu\) and those of
  \(\T f\) under \(\nu\).
\end{enumerate}
\end{lemma}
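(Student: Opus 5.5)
The statement is \cref{lem:product-expectation}, and the plan is to prove its two parts separately, using only the definition of \(\T\) as a conditional expectation and the projection formula for the Efron--Stein components.

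\emph{Part 1 (contraction).} Fix \(y\in\supp(\nu)\) and apply Jensen's inequality, or Cauchy--Schwarz, to the conditional law of \(X\) given \(Y=y\). This gives
\[
  |(\T f)(y)|^2\le\E\bigl[|f(X)|^2\bigm|Y=y\bigr].
\]
Averaging over \(y\sim\nu\) and using the tower property yields \(\|\T f\|_{2,\nu}^2\le\E_\mu|f|^2\), which is the claim.

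\emph{Part 2 (commutation with components).} The first step is to show that \(\T\) intertwines the coordinate averaging operators, namely
\[
  \T E_i^\mu=E_i^\nu\T\qquad\text{for every } i.
\]
By linearity it suffices to check this on product functions \(f=\prod_j f_j\), since these span \(L^2(\Omega,\mu)\). For such \(f\) we have \(E_i^\mu f=(\E_{\mu_i}f_i)\prod_{j\ne i}f_j\), and therefore
\[
  \T E_i^\mu f=(\E_{\mu_i}f_i)\prod_{j\ne i}\T_jf_j .
\]
On the other side, \(E_i^\nu\T f=(\E_{\nu_i}\T_if_i)\prod_{j\ne i}\T_jf_j\). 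The two expressions agree because \(\E_{\nu_i}[\T_if_i]=\E_{\mu_i}[f_i]\), which is the tower property for the single pair \((X_i,Y_i)\). This is the one place where the independence of the pairs is used: it is what makes \(\T\) equal to the tensor product \(\bigotimes_i\T_i\).

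Since \(\T\) is linear, it then also commutes with each \(\mathrm{Id}-E_i\), and hence with any product of these operators. Applying this to the formula
\[
  f^{=A}=\Bigl(\prod_{i\in A}(\mathrm{Id}-E_i^\mu)\Bigr)\Bigl(\prod_{i\notin A}E_i^\mu\Bigr)f,
\]
together with the same formula for \(\T f\) under \(\nu\), gives \(\T(f^{=A})=(\T f)^{=A}\).

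Alternatively, one can argue from uniqueness. The function \(\T(f^{=A})\) depends only on the coordinates in \(A\), and each of its \(\nu_i\)-averages over \(i\in A\) vanishes by the intertwining identity. Moreover \(\sum_A\T(f^{=A})=\T f\). Uniqueness of the Efron--Stein expansion then identifies the components.

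The only delicate point is a measure-theoretic one. The space \(\Omega'\) must be restricted to \(\supp(\nu)=\prod_i\supp(\nu_i)\), so that each \(\nu_i\) has full support as the decomposition requires. Once that restriction is made, the argument is routine.
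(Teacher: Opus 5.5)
Your proposal is correct and matches the paper's proof: Part~1 is conditional Jensen, and Part~2 applies the intertwining identity \(\T E_i^\mu=E_i^\nu\T\) to the product formula for \(f^{=A}\). Your check of that identity on product functions via \(\E_{\nu_i}[\T_if_i]=\E_{\mu_i}[f_i]\) spells out a step the paper states in one line, and your uniqueness argument is a valid alternative ending.
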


\begin{proof}
Conditional Jensen gives
\[
  \|\T f\|_{2,\nu}^2
  =\E_Y\bigl|\E[f(X)\mid Y]\bigr|^2
  \le\E_Y\bigl[\E[|f(X)|^2\mid Y]\bigr]
  =\|f\|_{2,\mu}^2.
\]
For independent coordinate pairs, averaging coordinate \(i\) before
or after applying its Markov operator gives the same constant:
\(\T E_i^\mu=E_i^\nu\T\). Applying this identity to the product
formula for the components above gives
\[
  \T(f^{=A})
  =\Bigl(\prod_{i\in A}(\mathrm{Id}-E_i^\nu)\Bigr)
   \Bigl(\prod_{i\notin A}E_i^\nu\Bigr)\T f
  =(\T f)^{=A}. \qedhere
\]
\end{proof}

\subsection{The leakage distance as a norm on the output space}
\label{sec:transfer-expansion}

Fix a linear code \(C\subseteq\Fp^n\) and local maps
\(\tau_i\colon\Fp\to\mathcal B_i\) into finite sets; \(m\)-bit leakage
is the case \(\mathcal B_i=\{0,1\}^m\). Replacing each \(\mathcal B_i\)
by the image \(\tau_i(\Fp)\), we may assume that every output symbol
occurs, so that
\[
  \nu_i(b)=\Prb[\tau_i(U_{\Fp})=b]>0
  \quad\text{for all } b\in\mathcal B_i,
  \qquad
  \nu(\vec b)=\prod_{i=1}^n\nu_i(b_i).
\]
Thus \(\nu\) is the reference leakage distribution, namely the distribution
of \(\vec\tau(\vec U_{\Fp^n})\). Let \(\T_i\) be the Markov operator of
\cref{def:leakage-operator} for the pair
\(\bigl(U_{\Fp},\tau_i(U_{\Fp})\bigr)\), and \(\T\) the one for
\(\bigl(\vec U_{\Fp^n},\vec\tau(\vec U_{\Fp^n})\bigr)\); the pairs are
independent, so \(\T=\bigotimes_i\T_i\). Norms of functions on the
inputs are taken under the uniform measure, and norms of functions on
the outputs under \(\nu\), or under \(\nu_i\) for a function of one
coordinate.

The object we decompose is the \emph{normalized indicator}
\(f_C\colon\Fp^n\to\mathbb R\),
\[
  f_C(\vec x)
  =\frac{\Prb[\vec U_C=\vec x]}{\Prb[\vec U_{\Fp^n}=\vec x]}
  =\frac{p^n}{|C|}\,\one[\vec x\in C],
\]
the likelihood ratio of the two input distributions. Its conditional
mean given the leakage is the likelihood ratio of the two leakage
distributions.

\begin{lemma}[Leakage probabilities and statistical distance]
\label{lem:leakage-distance}
For every \(\vec b\in\prod_i\mathcal B_i\),
\[
  \Prb[\vec\tau(\vec U_C)=\vec b]=\nu(\vec b)\,(\T f_C)(\vec b),
  \qquad
  \Prb[\vec\tau(\vec U_{\Fp^n})=\vec b]=\nu(\vec b).
\]
Consequently,
\[
  \SD\bigl(\vec\tau(\vec U_C),\vec\tau(\vec U_{\Fp^n})\bigr)
  =\frac12\,\|\T f_C-1\|_{1,\nu}.
\]
\end{lemma}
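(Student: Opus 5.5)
The plan is to compute the conditional expectation defining \((\T f_C)(\vec b)\) directly. By \cref{def:leakage-operator}, for every \(\vec b\) with \(\nu(\vec b)>0\),
\[
  (\T f_C)(\vec b)=\E\bigl[f_C(\vec U_{\Fp^n})\bigm|\vec\tau(\vec U_{\Fp^n})=\vec b\bigr]
  =\frac{\E\bigl[f_C(\vec U_{\Fp^n})\,\one[\vec\tau(\vec U_{\Fp^n})=\vec b]\bigr]}{\Prb[\vec\tau(\vec U_{\Fp^n})=\vec b]}.
\]
The denominator equals \(\nu(\vec b)\), because the coordinates of \(\vec U_{\Fp^n}\) are independent and uniform. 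This independence gives \(\Prb[\vec\tau(\vec U_{\Fp^n})=\vec b]=\prod_i\Prb[\tau_i(U_{\Fp})=b_i]=\prod_i\nu_i(b_i)\), which is the second identity. After the reduction to images, every \(\nu_i(b_i)>0\), so the conditioning is well defined on all of \(\prod_i\mathcal B_i\).

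For the numerator, substitute \(f_C=\frac{p^n}{|C|}\one[\cdot\in C]\). This gives
\[
  p^{-n}\sum_{\vec x}\frac{p^n}{|C|}\one[\vec x\in C]\,\one[\vec\tau(\vec x)=\vec b]
  =\frac{|\{\vec x\in C:\vec\tau(\vec x)=\vec b\}|}{|C|}
  =\Prb[\vec\tau(\vec U_C)=\vec b].
\]
Multiplying through by \(\nu(\vec b)\) yields the first identity. Equivalently, this is the change of measure \(\E[f_C(\vec U_{\Fp^n})g(\vec U_{\Fp^n})]=\E[g(\vec U_C)]\) applied to \(g=\one[\vec\tau(\cdot)=\vec b]\).

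For the statistical distance, both leakage distributions are supported in \(\prod_i\mathcal B_i=\supp(\nu)\). Outputs outside the images have probability zero under both distributions, since \(\vec U_C\) takes values in \(\Fp^n\) and so \(\vec\tau(\vec U_C)\) lies in the product of the images. Applying the definition of \(\SD\) and substituting the two identities gives
\[
  \tfrac12\sum_{\vec b}\bigl|\nu(\vec b)(\T f_C)(\vec b)-\nu(\vec b)\bigr|
  =\tfrac12\E_{\vec b\sim\nu}\bigl|(\T f_C)(\vec b)-1\bigr|=\tfrac12\|\T f_C-1\|_{1,\nu}.
\]
There is no real obstacle. The only point needing care is the support bookkeeping: the sum defining \(\SD\) must range over exactly \(\supp(\nu)\) with no mass lost, and the reduction \(\mathcal B_i=\tau_i(\Fp)\) guarantees this.
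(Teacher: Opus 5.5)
Your proposal is correct and follows essentially the same route as the paper: you write \((\T f_C)(\vec b)\) as the ratio of \(\E[f_C(\vec U_{\Fp^n})\one[\vec\tau(\vec U_{\Fp^n})=\vec b]]=\Prb[\vec\tau(\vec U_C)=\vec b]\) to \(\nu(\vec b)\), use independence of the coordinates for the denominator, and substitute both identities into the definition of \(\SD\). Your extra remark on support bookkeeping matches the paper's convention that \(\mathcal B_i=\tau_i(\Fp)\), so every \(\nu(\vec b)>0\).
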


\begin{proof}
Write \(\vec U=\vec U_{\Fp^n}\). Since
\(\Prb[\vec U_C=\vec x]=p^{-n}f_C(\vec x)\),
\[
  \Prb[\vec\tau(\vec U_C)=\vec b]
  =\sum_{\vec x\colon\vec\tau(\vec x)=\vec b}p^{-n}f_C(\vec x)
  =\Prb[\vec\tau(\vec U)=\vec b]\;
   \E\bigl[f_C(\vec U)\bigm|\vec\tau(\vec U)=\vec b\bigr].
\]
The second factor is \((\T f_C)(\vec b)\) by definition, and the first
is \(\nu(\vec b)\) because the coordinates of \(\vec\tau(\vec U)\) are
independent with distributions \(\nu_i\). Substituting both identities into the
definition of statistical distance gives
\(\frac12\sum_{\vec b}\nu(\vec b)\,|(\T f_C)(\vec b)-1|\).
\end{proof}

\subsection{The orthogonal components}

For the rest of this section we assume that \(C^\perp\) has minimum
distance at least \(k\). This is the only property of \(C\) used below;
for \(C=\GRS_{\vec\alpha,\vec v}(n,k-1)\) it holds because \(C^\perp\)
is a GRS code of dimension \(n-k+1\).

Since \(\T\) is a product Markov operator,
\cref{lem:product-expectation} gives \((\T f_C)^{=A}=\T(f_C^{=A})\), so
it remains to identify the input components. By
\cref{lem:dual-code-formula},
\(\widehat f_C(\vec y)=\one[\vec y\in C^\perp]\) for every
\(\vec y\in\Fp^n\), and \cref{lem:fourier-components} then gives
\begin{equation}
\label{eq:code-components}
  f_C^{=A}
  =\sum_{\vec y\in\Fp^n\colon\supp(\vec y)=A}
     \widehat f_C(\vec y)\,\chi_{\vec y}
  =\sum_{\vec y\in C^\perp\colon\supp(\vec y)=A}\chi_{\vec y}.
\end{equation}
In particular \(f_C^{=\emptyset}=1\), and \(f_C^{=A}=0\) whenever
\(1\le|A|<k\), because every nonzero dual word has weight at least
\(k\). Since \(\T1=1\),
\[
  \T f_C-1=\sum_{A\subseteq[n]\colon|A|\ge k}\T(f_C^{=A}),
\]
and the summands are pairwise orthogonal in \(L^2(\nu)\), being the
components of \(\T f_C\) under \(\nu\) (\cref{lem:product-expectation}
and \cref{def:efron-stein}).

List decoding of \(C^\perp\) controls squared \(L^2\) norms of
components, which add across supports by orthogonality
(\cref{sec:list-transfer,sec:main-result}); Gowers-norm estimates
control \(L^1\) norms one support at a time
(\cref{sec:structured-leakage,sec:main-result}). The next proposition
allows any cutoff between the two regimes, and \cref{prop:full-l2-bound}
records the extreme cutoffs.

\begin{proposition}[Splitting the leakage distance]
\label{prop:leakage-decomposition}
For every integer \(r_0\) with \(k-1\le r_0\le n\),
\[
  \SD\bigl(\vec\tau(\vec U_C),\vec\tau(\vec U_{\Fp^n})\bigr)
  \le\frac12\Bigl(\sum_{k\le|A|\le r_0}\|\T(f_C^{=A})\|_2^2\Bigr)^{1/2}
   +\frac12\sum_{|A|>r_0}\|\T(f_C^{=A})\|_1 .
\]
\end{proposition}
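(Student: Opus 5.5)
We start from \cref{lem:leakage-distance}, which writes the leakage distance as \(\frac12\|\T f_C-1\|_{1,\nu}\). By the identity established just before the statement,
\[
  \T f_C-1=\sum_{|A|\ge k}\T(f_C^{=A}).
\]
Fix \(r_0\) with \(k-1\le r_0\le n\) and split this sum as \(g+h\), where
\[
  g=\sum_{k\le|A|\le r_0}\T(f_C^{=A}),
  \qquad
  h=\sum_{|A|>r_0}\T(f_C^{=A}).
\]
When \(r_0=k-1\), \(g\) is the empty sum; when \(r_0=n\), \(h\) is the empty sum. The triangle inequality in \(L^1(\nu)\) gives \(\|\T f_C-1\|_1\le\|g\|_1+\|h\|_1\).

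For \(h\) we apply the triangle inequality once more, term by term, which gives \(\|h\|_1\le\sum_{|A|>r_0}\|\T(f_C^{=A})\|_1\). This is exactly the second term of the bound.

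For \(g\) we first use \(\|g\|_{1,\nu}\le\|g\|_{2,\nu}\), which is Cauchy--Schwarz (or Jensen) under the probability measure \(\nu\). Next we expand \(\|g\|_2^2\). By \cref{lem:product-expectation}, part 2, each summand satisfies \(\T(f_C^{=A})=(\T f_C)^{=A}\), the Efron--Stein component of \(\T f_C\) under the product measure \(\nu\). Note that every \(\nu_i\) has full support, since we restricted to the images of the \(\tau_i\). By \cref{def:efron-stein}, components with distinct index sets are pairwise orthogonal in \(L^2(\nu)\). The cross terms therefore vanish, and
\[
  \|g\|_2^2=\sum_{k\le|A|\le r_0}\|\T(f_C^{=A})\|_2^2 .
\]
Taking square roots, multiplying by \(\frac12\), and combining with the bound for \(h\) yields the stated inequality.

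There is no real obstacle here. The only point that needs care is the orthogonality step: it applies to the output components because \(\T\) commutes with the Efron--Stein projections. This was already established in \cref{lem:product-expectation}.
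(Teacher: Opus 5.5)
Your proposal is correct and follows the paper's proof exactly: split \(\T f_C-1\) at \(|A|=r_0\), use the triangle inequality in \(L^1(\nu)\), bound the small-support part by its \(L^2\) norm and expand it via orthogonality of the Efron--Stein components of \(\T f_C\), and take the factor \(\frac12\) from \cref{lem:leakage-distance}. Your explicit remark that orthogonality holds because \(\T\) commutes with the Efron--Stein projections (\cref{lem:product-expectation}) is the same justification the paper relies on.
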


\begin{proof}
Split the decomposition of \(\T f_C-1\) at \(|A|=r_0\) and
apply the triangle inequality in \(L^1(\nu)\) to the two parts. For the
supports of size at most \(r_0\), Cauchy--Schwarz gives
\(\|\cdot\|_1\le\|\cdot\|_2\), and by orthogonality the squared
\(L^2\) norm of the sum is the sum of the squared norms. For the larger
supports, the triangle inequality bounds the \(L^1\) norm of the sum by
the sum of the \(L^1\) norms. \Cref{lem:leakage-distance} supplies the
factor \(\frac12\).
\end{proof}

The extreme cutoffs \(r_0=n\) and \(r_0=k-1\) give the following.

\begin{corollary}[The two extreme cutoffs]
\label{prop:full-l2-bound}
The following bounds hold.
\[
  \SD\bigl(\vec\tau(\vec U_C),\vec\tau(\vec U_{\Fp^n})\bigr)
  \le\frac12\Bigl(\sum_{|A|\ge k}\|\T(f_C^{=A})\|_2^2\Bigr)^{1/2}
  \ \ \text{and}\ \
  \SD\bigl(\vec\tau(\vec U_C),\vec\tau(\vec U_{\Fp^n})\bigr)
  \le\frac12\sum_{|A|\ge k}\|\T(f_C^{=A})\|_1 .
\]
\end{corollary}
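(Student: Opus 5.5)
The plan is to specialize \cref{prop:leakage-decomposition} to the two endpoint values of the cutoff \(r_0\). That proposition is stated for every integer \(r_0\) with \(k-1\le r_0\le n\), so both endpoints are admissible.

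\emph{First bound: take \(r_0=n\).} The range \(k\le|A|\le r_0\) becomes \(|A|\ge k\), since every \(A\subseteq[n]\) has \(|A|\le n\). The second sum runs over supports with \(|A|>n\). No subset of \([n]\) has that size, so this sum is empty and equals zero. What remains is exactly the \(L^2\) bound in the first display.

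\emph{Second bound: take \(r_0=k-1\).} Now the first sum ranges over \(k\le|A|\le k-1\), which is empty, so the square-root term vanishes. The second sum runs over \(|A|>k-1\), which is the same as \(|A|\ge k\) because support sizes are integers. This gives exactly the \(L^1\) bound.

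As a sanity check, the second bound can also be obtained directly without the proposition. Apply \cref{lem:leakage-distance} to write the distance as \(\frac12\|\T f_C-1\|_{1,\nu}\). Then expand \(\T f_C-1=\sum_{|A|\ge k}\T(f_C^{=A})\) and use the triangle inequality in \(L^1(\nu)\).

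Similarly, the first bound follows directly from \(\|\cdot\|_{1,\nu}\le\|\cdot\|_{2,\nu}\) together with the orthogonality of the components \(\T(f_C^{=A})\) in \(L^2(\nu)\), which holds by \cref{lem:product-expectation} and \cref{def:efron-stein}.

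There is no real obstacle here. The only point to check is that both endpoints lie in the admissible range of \(r_0\), and that the resulting empty sums are interpreted as zero.
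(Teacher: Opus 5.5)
Your proposal is correct and matches the paper, which obtains the corollary exactly by setting \(r_0=n\) (the large-support sum is empty) and \(r_0=k-1\) (the small-support sum is empty) in \cref{prop:leakage-decomposition}. Your direct sanity-check derivations simply reproduce that proposition's proof.
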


The large-support bound has an exact form for arbitrary output
alphabets in terms of centered output indicators.

\begin{lemma}[Output-indicator form of the components]
\label{lem:output-indicator-components}
For \(i\in[n]\) and \(b\in\mathcal B_i\), let
\[
  f_{i,b}(x)=\one[\tau_i(x)=b]-\nu_i(b).
\]
These functions have mean zero and \(\|f_{i,b}\|_\infty\le1\). For
every \(A\subseteq[n]\),
\[
  \|\T(f_C^{=A})\|_{1,\nu}
  =\sum_{\vec b_A\in\prod_{i\in A}\mathcal B_i}
    \left|\E_{\vec X\sim U_C}
      \left[\prod_{i\in A}f_{i,b_i}(X_i)\right]\right|.
\]
\end{lemma}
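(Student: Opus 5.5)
The plan is to compute \(\T(f_C^{=A})\) pointwise and recognize it as a sum over output symbols of centered indicator products. The \(L^1(\nu)\) norm is then a sum of absolute values weighted by \(\nu\). The key observation is that for a function \(g\) on \(\Fp\) and an output \(b\), \((\T_ig)(b)=\E[g(U)\one[\tau_i(U)=b]]/\nu_i(b)\). The mean-zero property of \(f_{i,b}\) is immediate from \(\E\one[\tau_i(U)=b]=\nu_i(b)\), and \(\|f_{i,b}\|_\infty\le1\) holds because both \(\one[\cdot]\) and \(\nu_i(b)\) lie in \([0,1]\).

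First, I will use the projection formula for Efron--Stein components. By \cref{lem:product-expectation}, \(\T(f_C^{=A})=(\T f_C)^{=A}\), and since \(\T f_C\) is a likelihood ratio, \(\nu\cdot\T f_C\) is the law of \(\vec\tau(\vec U_C)\) by \cref{lem:leakage-distance}. Writing \(g=\T f_C\), the component is
\[
  g^{=A}(\vec b)=\sum_{B\subseteq A}(-1)^{|A\setminus B|}\,\E_\nu[g\mid \vec b_B],
\]
obtained by expanding \(\prod_{i\in A}(\mathrm{Id}-E_i^\nu)\prod_{i\notin A}E_i^\nu\). Next, I will evaluate the marginal term. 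By the likelihood-ratio interpretation, \(\E_\nu[g\mid\vec b_B]=\Prb[\vec\tau_B(\vec U_C)=\vec b_B]/\prod_{i\in B}\nu_i(b_i)\). That numerator equals \(\E_{\vec X\sim U_C}\prod_{i\in B}\one[\tau_i(X_i)=b_i]\). Multiplying through by \(\prod_{i\in A}\nu_i(b_i)\) and using inclusion--exclusion, the alternating sum becomes
\[
  \Bigl(\prod_{i\in A}\nu_i(b_i)\Bigr)g^{=A}(\vec b)
  =\E_{\vec X\sim U_C}\prod_{i\in A}\bigl(\one[\tau_i(X_i)=b_i]-\nu_i(b_i)\bigr),
\]
since \(\prod_{i\in A}(u_i-\nu_i)=\sum_{B\subseteq A}\prod_{i\in B}u_i\prod_{i\in A\setminus B}(-\nu_i)\) and expectation is linear.

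Finally, \(g^{=A}\) depends only on \(\vec b_A\), so the \(L^1(\nu)\) norm is \(\sum_{\vec b_A}\prod_{i\in A}\nu_i(b_i)\,|g^{=A}(\vec b_A)|\). The \(\nu\)-weight cancels against the displayed identity, which yields exactly the claimed sum. The only delicate point is bookkeeping the signs and the normalization in the inclusion--exclusion, which is routine. All steps hold for arbitrary finite alphabets with full-support \(\nu_i\), as was arranged earlier.
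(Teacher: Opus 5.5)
Your proposal is correct and follows the paper's proof essentially step for step. You use \(\T(f_C^{=A})=(\T f_C)^{=A}\), expand the Efron--Stein projection into the alternating sum over marginals, and identify each marginal of \(\T f_C\) as a ratio of leakage probabilities via \cref{lem:leakage-distance}. You then recombine into the expectation of a product of centered indicators and cancel the weights \(\prod_{i\in A}\nu_i(b_i)\) in the \(L^1(\nu)\) norm. The only cosmetic difference is that you clear the denominators before recombining, whereas the paper writes the product of the factors \(\one[\tau_i(X_i)=b_i]/\nu_i(b_i)-1\).
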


\begin{proof}
Expanding the product formula in \cref{sec:efron-stein} gives
\[
  (\T f_C)^{=A}
  =\sum_{S\subseteq A}(-1)^{|A\setminus S|}
     \Bigl(\prod_{i\notin S}E_i^\nu\Bigr)\T f_C.
\]
Each term averages all output coordinates outside \(S\). By
\cref{lem:leakage-distance}, each marginal of \(\T f_C\) is the
corresponding marginal leakage probability divided by its reference
probability. Hence
\[
\begin{aligned}
  (\T f_C^{=A})(\vec b_A)
  &=\sum_{S\subseteq A}(-1)^{|A\setminus S|}
    \frac{\Prb[\vec\tau_S(\vec U_C)=\vec b_S]}
         {\prod_{i\in S}\nu_i(b_i)}\\
  &=\E_{\vec X\sim U_C}
    \left[\prod_{i\in A}
      \left(\frac{\one[\tau_i(X_i)=b_i]}{\nu_i(b_i)}-1\right)\right]
   =\frac{\E_{\vec X\sim U_C}
          [\prod_{i\in A}f_{i,b_i}(X_i)]}
          {\prod_{i\in A}\nu_i(b_i)}.
\end{aligned}
\]
Taking the \(L^1(\nu)\) norm cancels the denominator and proves the
identity. No lower bound on the positive probabilities \(\nu_i(b)\)
is needed; unused output symbols have \(f_{i,b}=0\) and may also be
included in the sum.
\end{proof}

\subsection{One-bit leakage}
\label{sec:expansion}

Now let \(\tau_i\colon\Fp\to\{0,1\}\). Each \(L^2(\nu_i)\) is then
spanned by \(1\) and one centered function, so every output component
is a scalar multiple of a single product function; the scalar is an
average over the code of centered leakage functions.

Put \(\theta_i=\E_x[\tau_i(x)]\). If \(\tau_i\) is nonconstant, so that
\(0<\theta_i<1\), define
\[
  g_i(x)=\frac{\tau_i(x)-\theta_i}{\sqrt{\theta_i(1-\theta_i)}},
  \qquad
  f_i(x)=2\bigl(\tau_i(x)-\theta_i\bigr);
\]
if \(\tau_i\) is constant, put \(g_i=f_i=0\). Both functions have mean
zero, \(\|g_i\|_2\le1\) with equality for nonconstant \(\tau_i\), and
\(\|f_i\|_\infty\le2\). Write \(\vec g_A=(g_i)_{i\in A}\) and
\(\vec f_A=(f_i)_{i\in A}\), and recall that
\(\Gamma_C(\vec h_A)=\E_{\vec X\sim U_C}\bigl[\prod_{i\in A}h_i(X_i)\bigr]\)
for functions \(h_i\colon\Fp\to\C\).

\begin{lemma}[Norms of the binary components]
\label{lem:binary-components}
For every \(A\subseteq[n]\),
\[
  \|\T(f_C^{=A})\|_2^2=|\Gamma_C(\vec g_A)|^2
  \qquad\text{and}\qquad
  \|\T(f_C^{=A})\|_1=|\Gamma_C(\vec f_A)| .
\]
Both sides vanish when \(1\le|A|<k\), and when \(A\) contains a
coordinate whose map is constant. Moreover,
\(\|\T_i\chi_a\|_2^2=|\widehat g_i(a)|^2\) for every \(i\) and every
\(a\ne0\).
\end{lemma}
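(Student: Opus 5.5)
The plan is to specialize \cref{lem:output-indicator-components} and the Efron--Stein product formula to the two-point output spaces. First dispose of the degenerate cases. If \(1\le|A|<k\), then \(f_C^{=A}=0\) by \eqref{eq:code-components}, so both left sides vanish. The right sides vanish too: \(\Gamma_C\) of mean-zero functions on fewer than \(k\) coordinates factorizes, because \(C^\perp\) has minimum distance at least \(k\), which forces the projection of \(U_C\) onto any \(k-1\) coordinates to be uniform. If some \(i\in A\) has \(\tau_i\) constant, then \(\mathcal B_i\) is a singleton, so \(L^2(\nu_i)\) consists of constants and every component with \(i\in A\) is zero. On the other side \(g_i=f_i=0\), so \(\Gamma_C=0\).

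For the main case, assume every \(\tau_i\) with \(i\in A\) is nonconstant. Identify \(\mathcal B_i=\{0,1\}\) and let \(\phi_i(b)=(b-\theta_i)/\sqrt{\theta_i(1-\theta_i)}\) on the output side. Then \(\{1,\phi_i\}\) is an orthonormal basis of \(L^2(\nu_i)\), and \(g_i=\phi_i\circ\tau_i\). The component \((\T f_C)^{=A}\) depends only on \(\vec b_A\) and has mean zero in each coordinate of \(A\), so it must equal \(\lambda\prod_{i\in A}\phi_i(b_i)\) for a scalar \(\lambda\). Since the product of the \(\phi_i\) has unit norm, \(\|\T(f_C^{=A})\|_2=|\lambda|\). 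To compute \(\lambda\), use orthogonality and \(\E_\nu[\T h]=\E_\mu[h]\):
\[
\lambda=\Bigl\langle \T f_C,\textstyle\prod_{i\in A}\phi_i\Bigr\rangle_\nu=\E_{\vec x}\Bigl[f_C(\vec x)\textstyle\prod_{i\in A}g_i(x_i)\Bigr]=\Gamma_C(\vec g_A).
\]
Here the middle equality holds because \(\prod_i\phi_i(\tau_i(x_i))\) is a function of the leakage, so the conditional expectation can be moved inside. The last equality holds because \(f_C\) is the density of \(U_C\). This gives the \(L^2\) identity.

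For the \(L^1\) identity, take \(|\lambda|\,\E_\nu\prod_i|\phi_i(b_i)|\). One has \(\E_{\nu_i}|\phi_i|=2\sqrt{\theta_i(1-\theta_i)}\), so \(\|\T(f_C^{=A})\|_1=|\lambda|\prod_{i\in A}2\sqrt{\theta_i(1-\theta_i)}\). Since \(f_i=2\sqrt{\theta_i(1-\theta_i)}\,g_i\), multilinearity of \(\Gamma_C\) turns this into \(|\Gamma_C(\vec f_A)|\). As a cross-check, the same value comes from \cref{lem:output-indicator-components}: there \(f_{i,1}=\tau_i-\theta_i=-f_{i,0}\), so all \(2^{|A|}\) terms share the absolute value \(|\Gamma_C((\tau_i-\theta_i)_i)|\), and their sum is \(|\Gamma_C(\vec f_A)|\).

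Finally, \((\T_i\chi_a)(b)=\E[\chi_a\mid\tau_i=b]\) has mean \(\E\chi_a=0\) for \(a\ne0\), so it is a multiple of \(\phi_i\). Its squared norm is therefore \(|\langle\T_i\chi_a,\phi_i\rangle|^2=|\E[\chi_a\overline{g_i}]|^2\). Because \(g_i\) is real, this equals \(|\widehat g_i(-a)|^2=|\widehat g_i(a)|^2\). If \(\tau_i\) is constant, \(\T_i\chi_a=0=\widehat g_i(a)\).

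The only delicate step is the bookkeeping of the normalization constants and the degenerate constant-map case. Everything else follows directly from the one-dimensionality of the nonconstant part of \(L^2(\nu_i)\).
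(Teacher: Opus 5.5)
Your proposal is correct and follows essentially the same route as the paper. Both arguments dispose of constant maps through the convention \(g_i=f_i=0\). Both use that the mean-zero part of \(L^2(\nu_i)\) is spanned by \(\phi_i\) to write \(\T(f_C^{=A})=\Gamma_C(\vec g_A)\,\prod_{i\in A}\phi_i\), and both read off the \(L^1\) norm from \(\E_{\nu_i}|\phi_i|=2\sqrt{\theta_i(1-\theta_i)}\). Both also identify \(\T_i\chi_a\) as \(\widehat g_i(-a)\,\phi_i\).

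The differences are cosmetic. You compute the coefficient by the tower property, whereas the paper uses the output-density identity \(\nu(\vec b)(\T f_C)(\vec b)=\Prb[\vec\tau(\vec X)=\vec b]\). You also check directly, via \((k-1)\)-wise uniformity of \(U_C\), that the right-hand sides vanish for \(1\le|A|<k\); the paper instead deduces this from the identity together with \eqref{eq:code-components}.
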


\begin{proof}
If \(\tau_i\) is constant, then \(L^2(\nu_i)\) consists of the constant
functions, so every output component with \(i\in A\) vanishes, as do
\(\Gamma_C(\vec g_A)\) and \(\Gamma_C(\vec f_A)\) by the convention
\(g_i=f_i=0\); also \(\T_i\chi_a=\E_x[\chi_a(x)]=0=\widehat g_i(a)\) for
\(a\ne0\). We may therefore assume that every \(\tau_i\) is nonconstant.

For an output bit put
\(\phi_i(b)=(b-\theta_i)/\sqrt{\theta_i(1-\theta_i)}\), so that
\(g_i=\phi_i\circ\tau_i\), and let
\(\phi_A(\vec b)=\prod_{i\in A}\phi_i(b_i)\). The functions \(1,\phi_i\)
form an orthonormal basis of \(L^2(\nu_i)\), so the product basis of
\cref{sec:efron-stein} contains exactly one function nonconstant
precisely on \(A\), namely \(\phi_A\), and the component of \(\T f_C\)
on \(A\) is \(\langle\T f_C,\phi_A\rangle_\nu\,\phi_A\). By
\cref{lem:leakage-distance},
\(\nu(\vec b)(\T f_C)(\vec b)=\Prb[\vec\tau(\vec X)=\vec b]\) for
\(\vec X\sim U_C\), so
\[
  \langle\T f_C,\phi_A\rangle_\nu
  =\E_{\vec X\sim U_C}\bigl[\phi_A(\vec\tau(\vec X))\bigr]
  =\Gamma_C(\vec g_A).
\]
Together with \((\T f_C)^{=A}=\T(f_C^{=A})\) this gives
\(\T(f_C^{=A})=\Gamma_C(\vec g_A)\,\phi_A\), and
\(\|\phi_A\|_{2,\nu}=1\) proves the first identity. For the second,
independence under \(\nu\) and
\(\E_{\nu_i}|\phi_i|=2\sqrt{\theta_i(1-\theta_i)}\) give
\[
  \|\T(f_C^{=A})\|_{1,\nu}
  =|\Gamma_C(\vec g_A)|\prod_{i\in A}2\sqrt{\theta_i(1-\theta_i)}
  =|\Gamma_C(\vec f_A)|,
\]
since \(f_i=2\sqrt{\theta_i(1-\theta_i)}\,g_i\). Vanishing for
\(1\le|A|<k\) is \eqref{eq:code-components} together with the
dual-distance assumption.

Finally, for \(a\ne0\) the function \(\T_i\chi_a\) has mean zero and is
therefore a multiple of \(\phi_i\), with coefficient
\(\langle\T_i\chi_a,\phi_i\rangle_{\nu_i}
 =\E_x[\chi_a(x)g_i(x)]=\widehat g_i(-a)\).
Since \(g_i\) is real, \(|\widehat g_i(-a)|=|\widehat g_i(a)|\).
\end{proof}

Substituting the lemma into \cref{prop:leakage-decomposition} gives the
form of the decomposition used for one-bit leakage.

\begin{proposition}[Splitting small and large supports]
\label{prop:master-inequality}
For every integer \(r_0\) with \(k-1\le r_0\le n\),
\begin{equation}
\label{eq:master-inequality}
  \SD\bigl(\vec\tau(\vec U_C),\vec\tau(\vec U_{\Fp^n})\bigr)
  \le\frac12\Bigl(\sum_{k\le|A|\le r_0}|\Gamma_C(\vec g_A)|^2\Bigr)^{1/2}
   +\frac12\sum_{|A|>r_0}|\Gamma_C(\vec f_A)| .
\end{equation}
\end{proposition}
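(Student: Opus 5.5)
This is a direct substitution of \cref{lem:binary-components} into \cref{prop:leakage-decomposition}, with no new estimate needed. Fix one-bit maps \(\tau_i\colon\Fp\to\{0,1\}\). As in \cref{sec:transfer-expansion}, replace each output alphabet by the image of \(\tau_i\), so every output symbol has positive probability under \(\nu_i\). A constant map then has a one-point alphabet. Neither the leakage distance nor the two sides of \cref{lem:binary-components} change: for constant \(\tau_i\), every component with \(i\in A\) is zero, and \(g_i=f_i=0\).

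Next, apply \cref{prop:leakage-decomposition} with the given cutoff \(r_0\), \(k-1\le r_0\le n\). This bounds the leakage distance by
\[
  \tfrac12\Bigl(\sum_{k\le|A|\le r_0}\|\T(f_C^{=A})\|_{2,\nu}^2\Bigr)^{1/2}
  +\tfrac12\sum_{|A|>r_0}\|\T(f_C^{=A})\|_{1,\nu}.
\]
Then, for each support \(A\), replace \(\|\T(f_C^{=A})\|_{2,\nu}^2\) by \(|\Gamma_C(\vec g_A)|^2\) and \(\|\T(f_C^{=A})\|_{1,\nu}\) by \(|\Gamma_C(\vec f_A)|\). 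Both are exact identities from \cref{lem:binary-components}. They rest on two facts. First, \(L^2(\nu_i)\) is spanned by \(1\) and \(\phi_i\), so each component equals \(\Gamma_C(\vec g_A)\phi_A\). Second, \(\E_{\nu_i}|\phi_i|=2\sqrt{\theta_i(1-\theta_i)}\) turns \(\vec g_A\) into \(\vec f_A\) in the \(L^1\) norm. Substituting these identities gives \eqref{eq:master-inequality} term by term.

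The only point that needs checking is that the index ranges agree. Both sums run over the same families of supports, and the components with \(1\le|A|<k\) vanish by the dual-distance assumption, so starting the first sum at \(|A|=k\) loses nothing. If there is an obstacle at all, it is confirming that the reduction to image alphabets is harmless, and the convention for constant maps settles that.
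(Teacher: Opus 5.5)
Your proposal is correct and is the paper's own argument: the paper obtains \eqref{eq:master-inequality} by substituting the two exact identities of \cref{lem:binary-components} into \cref{prop:leakage-decomposition}, as you do. Your remarks on image alphabets and constant maps match the conventions the paper already sets up, and they need no further justification.
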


By \cref{lem:binary-components}, supports containing a constant-map
coordinate contribute nothing to \eqref{eq:master-inequality}; this is
how constant maps are handled from now on.

\section{Dual-code list recovery and resilience above half}
\label{sec:list-transfer}

This section proves that list recovery of \(C^\perp\) with input lists
of size \(Q-1\) implies resilience to leakage with \(Q\) output symbols
per coordinate; for one-bit leakage the hypothesis is ordinary list
decoding. \Cref{lem:component-bound} bounds each component of
\cref{sec:decomposition} by a weighted count of dual codewords,
\cref{lem:component-list-recovery} replaces the weights by list
indicators, and \cref{thm:grs-constant-bits} follows at the end.

\begin{theorem}[List recovery implies leakage resilience]
\label{thm:list-recovery-transfer}
For every integer \(Q\ge2\) there is a constant
\(0<\eta_Q\le1-1/(16Q^2)\), with \(\eta_2=1/2\), such that the
following holds. Let \(p\ge16Q\) be prime, let \(C\subseteq\Fp^n\) be a
linear code, and let \(R\) be an integer with \(n/2<R<n\) such that
every nonzero word of \(C^\perp\) has weight at least \(R+1\). Then for
every tuple of local maps \(\tau_i\colon\Fp\to\mathcal B_i\) with
\(|\mathcal B_i|\le Q\),
\[
  \SD\bigl(\vec\tau(\vec U_C),\vec\tau(\vec U_{\Fp^n})\bigr)
  \le\frac12\,\eta_Q^{\,R-n/2}\sqrt{\List_{C^\perp}(R,Q-1)} .
\]
For \(Q=2\) the conclusion holds for every odd prime \(p\).
\end{theorem}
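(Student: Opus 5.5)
The plan is to use the first extreme cutoff of \cref{prop:full-l2-bound}, which reduces the claim to
\[
  \sum_{|A|\ge R+1}\|\T(f_C^{=A})\|_2^2\le\eta_Q^{\,2R-n}\,\List_{C^\perp}(R,Q-1),
\]
since the dual distance assumption kills all components with \(1\le|A|\le R\). Four steps follow: a per-component bound, two local facts about the one-coordinate operators, a linearity argument turning weights into lists, and a final count.

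\emph{Step 1 (per-component bound).} Fix \(A\) with \(|A|\ge R+1\) and a set \(I\subseteq A\) with \(|I|=R\). Write \(\T=\T_{A\setminus I}\T_I\), the other coordinates acting trivially on a function of the \(A\)-coordinates. Two dual words with support \(A\) that agree on \(A\setminus I\) differ in at most \(R\) places, so they are equal. Hence the functions
\[
  \T_I\chi_{\vec y}=\Bigl(\prod_{i\in I}\T_i\chi_{y_i}\Bigr)\prod_{i\in A\setminus I}\chi_{y_i}
\]
are orthogonal, because their second factors are distinct characters on \(\Fp^{A\setminus I}\) and the first factors live on different coordinates. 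Using the contraction property of \cref{lem:product-expectation} for \(\T_{A\setminus I}\) then gives
\[
  \|\T(f_C^{=A})\|_2^2\le\sum_{\supp\vec y=A}\ \prod_{i\in I}w_i(y_i),
  \qquad w_i(a)=\|\T_i\chi_a\|_2^2 .
\]
To keep this symmetric in the coordinates, I would average over all choices of \(I\), or simply choose the \(R\) coordinates of \(A\) with the smallest weights.

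\emph{Step 2 (local facts).} There are two facts to establish.
\begin{itemize}
  \item \(\sum_{a\ne0}w_i(a)\le Q-1\). The function \(\T_i\chi_a\) is the projection of \(\chi_a\) onto the span of the indicators of the preimages \(\tau_i^{-1}(b)\), composed with \(\tau_i\). By Bessel/Parseval, the sum over all \(a\) equals the dimension of that span, which is at most \(Q\). Subtracting the constant direction (\(a=0\)) leaves at most \(Q-1\).
  \item \(w_i(a)\le\eta_Q<1\). This says that a nontrivial character cannot be nearly measurable with respect to a partition of \(\Fp\) into \(Q\) parts. For \(Q=2\), \cref{lem:binary-components} gives \(w_i(a)=|\widehat g_i(a)|^2\). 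Moreover \(|\widehat g_i(a)|=|\widehat g_i(-a)|\) and Parseval gives \(\sum_{a\ne0}|\widehat g_i(a)|^2=1\), so \(w_i(a)\le1/2\) for every odd \(p\). For general \(Q\), I would argue that if \(w_i(a)\) were close to \(1\), then \(\chi_a\) would be \(L^2\)-close to a function taking \(Q\) values. But \(\chi_a\) is equidistributed on the \(p\)-th roots of unity, and \(p\ge16Q\) gives a quantitative obstruction of order \(1/Q^2\): some arc contains no value point within distance about \(1/Q\).
\end{itemize}

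\emph{Step 3 (weights to lists).} Rewrite \(\prod_{i\in I}w_i\) over subsets \(J\subseteq I\) of size \(j=n-R\), keeping \(j\) factors and bounding the remaining \(R-j=2R-n\) factors by \(\eta_Q\). Concretely, for each \(\vec y\) pick \(J\) as the \(j\) coordinates with the largest weights, or average over all \(J\). The resulting total over \(A\) and \(\vec y\) is multilinear with nonnegative coefficients in the weight vectors \((w_i(a))_{a\ne0}\), which range over the polytope \(\{w\in[0,1]^{p-1}:\sum_a w(a)\le Q-1\}\). It is therefore maximized coordinate by coordinate at vertices, namely indicators of sets \(V_i\) of size \(Q-1\). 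After this replacement, each dual word is counted at most once, and only if it agrees with the lists \(V_i\) on at least \(j=n-R\) coordinates. That count is at most \(\List_{C^\perp}(R,Q-1)\), which gives the displayed inequality. Taking the square root gives the theorem.

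I expect the main obstacle to be Step 3. The counting must be arranged so that a single dual word is not charged many times, once for each choice of \(J\), because that would introduce a binomial factor. The safest fix is a nonnegative weighting that avoids overcounting, such as choosing \(J\) as the top-\(j\) coordinates, while checking that the maximization is still over a multilinear expression. The general-\(Q\) local bound \(\eta_Q\le1-1/(16Q^2)\) in Step 2 is the secondary technical point.
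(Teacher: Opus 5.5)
Your plan is the paper's proof: the cutoff \(r_0=n\), leakage applied on \(R=k-1\) coordinates of each support with Parseval on the rest, the two local facts of \cref{lem:local-fourier-energy}, and the extreme-point replacement of weights by lists with \(j=n-R\) retained factors. Your distance-to-a-\(Q\)-valued-function sketch for Step~2 does yield \(\eta_Q\le1-1/(16Q^2)\).

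The Step~3 obstacle does not arise. The paper fixes \(I_A\) and \(J_A\subseteq I_A\) depending only on the support \(A\); any choice works, since every factor on \(A\) is at most \(\eta_Q\). Each dual word is then one coefficient-one monomial and is counted once. Your normalized average over \(J\) also works, because a word with \(a\) list agreements in \(I_A\) gets weight \(\binom{a}{j}/\binom{R}{j}\le1\). Avoid the word-dependent choices you call safest. Taking the top-\(j\) weights for \(J\) gives a maximum of monomials, which is only convex, not linear, in each coordinate's weights, so \cref{lem:weights-to-lists} does not apply as stated. Choosing \(I\) as the smallest weights word by word breaks the orthogonality argument of Step~1, which needs a single \(I\) for all words of support \(A\).
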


The constant \(\eta_Q\) is given explicitly in
\cref{lem:local-fourier-energy}. The theorem uses only the dual
distance of \(C\); no MDS property is needed.

Throughout this section, \(C\subseteq\Fp^n\) is a linear code whose
dual has minimum distance at least \(k\), and \(f_C\), \(\T\),
\(\T_i\), \(\nu\), \(\nu_i\) are as in \cref{sec:decomposition} for
local maps \(\tau_i\colon\Fp\to\mathcal B_i\) in which every output
symbol occurs. We write \(q_i=|\mathcal B_i|\) and assume \(q_i\le Q\)
for a fixed integer \(Q\ge2\).

\subsection{Bounding the leakage components}

We apply the leakage on only \(k-1\) coordinates of a support. The dual
distance ensures that the remaining coordinates distinguish the
contributing dual words, so their characters stay orthogonal; applying
the leakage on those coordinates afterwards can only decrease the
\(L^2\) norm.

\begin{lemma}[Bounding a leakage component]
\label{lem:component-bound}
Let \(A\subseteq[n]\) have size at least \(k\), and let \(J\subseteq A\)
have size \(|A|-k+1\). Then
\[
  \|\T(f_C^{=A})\|_2^2
  \le\sum_{\vec y\in C^\perp\colon\supp(\vec y)=A}
     \prod_{i\in A\setminus J}\|\T_i\chi_{y_i}\|_2^2 .
\]
\end{lemma}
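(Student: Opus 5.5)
The plan is to factor the operator \(\T\) into two stages and use the contraction property of \cref{lem:product-expectation} for the second stage. Write \(\T=\T_{A\setminus J}\otimes\T_J\otimes\T_{[n]\setminus A}\). Since \(f_C^{=A}\) depends only on the coordinates in \(A\), \(\T_{[n]\setminus A}\) acts trivially on it. Let \(\T'\) apply \(\T_i\) for \(i\in A\setminus J\) and the identity on the coordinates in \(J\); this is the Markov operator of the pair whose \(i\)-th coordinate is \((U_{\Fp},\tau_i(U_{\Fp}))\) for \(i\in A\setminus J\) and \((U_{\Fp},U_{\Fp})\) for \(i\in J\). Then \(\T(f_C^{=A})=\T_J''(\T' f_C^{=A})\), where \(\T_J''\) applies \(\T_i\) for \(i\in J\). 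This is again a Markov operator, of a product pair on the intermediate space with the \(\nu_i\) marginals on \(A\setminus J\) and uniform marginals on \(J\). Part~1 of \cref{lem:product-expectation} therefore gives \(\|\T(f_C^{=A})\|_2\le\|\T' f_C^{=A}\|_2\), with the latter norm taken under the product of \(\nu_i\) (\(i\in A\setminus J\)) and uniform (\(i\in J\)).

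Next, expand using \eqref{eq:code-components}. For a character,
\[
\T'\chi_{\vec y}=\prod_{i\in A\setminus J}(\T_i\chi_{y_i})(b_i)\prod_{i\in J}\chi_{y_i}(x_i).
\]
The key step is orthogonality. Let \(\vec y\ne\vec y'\) be dual words with support \(A\). They cannot agree on \(J\): otherwise \(\vec y-\vec y'\) is a nonzero dual word supported in \(A\setminus J\), which has size \(k-1\), contradicting dual distance at least \(k\). So there is some \(i\in J\) with \(y_i\ne y'_i\). The inner product of \(\T'\chi_{\vec y}\) and \(\T'\chi_{\vec y'}\) factors over coordinates, and the \(i\)-th factor is \(\E_x\chi_{y_i-y'_i}(x)=0\). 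Hence the terms are pairwise orthogonal, and
\[
\|\T'f_C^{=A}\|_2^2=\sum_{\vec y}\prod_{i\in A\setminus J}\|\T_i\chi_{y_i}\|_{2,\nu_i}^2\prod_{i\in J}\|\chi_{y_i}\|_2^2 .
\]
Each factor \(\|\chi_{y_i}\|_2^2\) equals \(1\), which gives the stated bound.

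The main point to verify is that the two-stage factorization is legitimate. The composition of the two Markov operators must equal \(\T\) on functions of the inputs; this holds coordinatewise, and by linearity for tensor products. The intermediate measure must also be the correct marginal, so that contraction applies with the right norms. Both checks are routine given the tensor structure recorded after \cref{def:leakage-operator}.
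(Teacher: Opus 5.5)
Your proof is correct and follows the paper's argument. You apply the leakage only on the \(k-1\) coordinates of \(A\setminus J\), use the dual distance to see that the surviving characters on \(J\) are distinct and hence orthogonal, and control the leakage on \(J\) by the contraction in \cref{lem:product-expectation}(1). The only difference is cosmetic: the paper fixes \(\vec b_{A\setminus J}\), applies contraction in \(\vec x_J\), and then averages, whereas you treat the second stage as a Markov operator on the intermediate product space, which amounts to the same computation.
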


\begin{proof}
Put \(I=A\setminus J\), so \(|I|=k-1\), and write
\(\nu_I=\prod_{i\in I}\nu_i\). Since \(f_C^{=A}\) depends only on the
coordinates in \(A\), we may regard \(\T\) as \(\T_I\otimes\T_J\) with
\(\T_I=\bigotimes_{i\in I}\T_i\) and \(\T_J=\bigotimes_{i\in J}\T_i\).

First, fix \(\vec b_I\) and apply \cref{lem:product-expectation}(1) to
\(\vec x_J\mapsto(\T_I f_C^{=A})(\vec x_J,\vec b_I)\); averaging over
\(\vec b_I\sim\nu_I\) gives
\[
  \|\T(f_C^{=A})\|_2^2\le\bigl\|\T_I(f_C^{=A})\bigr\|_2^2 ,
\]
where the norm on the right is under \(\nu_I\) on the outputs of \(I\)
and the uniform measure on \(\vec x_J\in\Fp^J\).

Second, expanding \(f_C^{=A}\) by \eqref{eq:code-components} and
applying \(\T_I\) factor by factor,
\[
  (\T_I f_C^{=A})(\vec x_J,\vec b_I)
  =\sum_{\vec y\in C^\perp\colon\supp(\vec y)=A}
     \chi_{\vec y_J}(\vec x_J)\prod_{i\in I}(\T_i\chi_{y_i})(b_i).
\]
Two dual words in the sum that agree on \(J\) differ on at most
\(|I|=k-1\) coordinates, hence coincide by the dual-distance
assumption. The characters \(\chi_{\vec y_J}\) are therefore distinct,
and Parseval in \(\vec x_J\) gives
\[
  \bigl\|\T_I(f_C^{=A})\bigr\|_2^2
  =\E_{\vec b_I\sim\nu_I}
   \sum_{\vec y\in C^\perp\colon\supp(\vec y)=A}
     \prod_{i\in I}|(\T_i\chi_{y_i})(b_i)|^2
  =\sum_{\vec y\in C^\perp\colon\supp(\vec y)=A}
     \prod_{i\in I}\|\T_i\chi_{y_i}\|_2^2 ,
\]
by independence of the outputs under \(\nu_I\).
\end{proof}

The bound leaves \(k-1\) factors of the form \(\|\T_i\chi_a\|_2^2\). We
need two facts about them: their sum over nonzero frequencies is at
most \(Q-1\), which will let us replace the weights by lists of size
\(Q-1\); and each is at most \(\eta_Q<1\), which supplies the decay.

\begin{lemma}[Fourier weight after leakage]
\label{lem:local-fourier-energy}
For every \(i\) and every \(a\in\Fp\), with norms of \(\T_i\chi_a\)
taken under \(\nu_i\),
\[
  \T_i\chi_0=1,\qquad
  \|\T_i\chi_a\|_{2,\nu_i}=\|\T_i\chi_{-a}\|_{2,\nu_i}\le1,\qquad
  \sum_{a\ne0}\|\T_i\chi_a\|_{2,\nu_i}^2=q_i-1 .
\]
If \(p\ge16Q\), or if \(Q=2\) and \(p\) is odd, then for every
\(a\ne0\),
\[
  \|\T_i\chi_a\|_2^2\le\eta_Q
  :=\min\Bigl\{\frac{Q-1}{2},\;\cos^2\frac{\pi}{8Q}\Bigr\}
  \le1-\frac1{16Q^2} .
\]
In particular \(\eta_2=1/2\).
\end{lemma}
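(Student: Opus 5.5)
The plan is to compute \(\T_i\chi_a\) explicitly in terms of the fibres of \(\tau_i\). For \(b\in\mathcal B_i\) let \(S_b=\tau_i^{-1}(b)\), so that \(\nu_i(b)=|S_b|/p>0\). By \cref{def:leakage-operator},
\[
  (\T_i\chi_a)(b)=\frac{1}{|S_b|}\sum_{x\in S_b}e_p(ax)
  =\frac{\widehat{\one_{S_b}}(-a)}{\nu_i(b)},
  \qquad
  \|\T_i\chi_a\|_{2,\nu_i}^2=\sum_{b}\frac{|\widehat{\one_{S_b}}(a)|^2}{\nu_i(b)} .
\]
Every identity in the first display of the lemma follows from this formula.

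\emph{The three identities.}
\begin{itemize}
\item For \(a=0\) the fibre average of \(\chi_0\equiv1\) is \(1\), so \(\T_i\chi_0=1\).
\item Since \(\one_{S_b}\) is real, \(|\widehat{\one_{S_b}}(-a)|=|\widehat{\one_{S_b}}(a)|\), which gives the symmetry \(\|\T_i\chi_a\|_2=\|\T_i\chi_{-a}\|_2\).
\item The bound \(\|\T_i\chi_a\|_2\le\|\chi_a\|_2=1\) is \cref{lem:product-expectation}(1).
\item For the sum, apply Parseval to each \(\one_{S_b}\): \(\sum_{a\in\Fp}|\widehat{\one_{S_b}}(a)|^2=\|\one_{S_b}\|_2^2=\nu_i(b)\). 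Summing the displayed formula over all \(a\) therefore gives \(\sum_b 1=q_i\). The \(a=0\) term equals \(\sum_b\nu_i(b)=1\), leaving \(q_i-1\) for the nonzero frequencies.
\end{itemize}

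\emph{The bound \(\eta_Q\).} There are two branches.
\begin{itemize}
\item The branch \((Q-1)/2\) is immediate. For odd \(p\) and \(a\ne0\), the frequencies \(a\) and \(-a\) are distinct nonzero elements carrying equal weight. Hence \(2\|\T_i\chi_a\|_2^2\le q_i-1\le Q-1\). For \(Q=2\) this already gives \(\eta_2=1/2\) with no condition on \(p\) beyond oddness.
\item For the branch \(\cos^2(\pi/(8Q))\), first reduce to \(a=1\): the map \(x\mapsto ax\) permutes \(\Fp\) and replaces \(\tau_i\) by another map with the same fibre sizes. Then write
\[
  \|\T_i\chi_1\|_2^2=\sum_b\nu_i(b)|m_b|^2,
  \qquad
  m_b=\E_{x\in S_b}e_p(x).
\]
Here \(m_b\) is the centroid of \(|S_b|\) distinct \(p\)-th roots of unity. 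The standard rearrangement fact is that, among sets of \(s\) roots of unity, the centroid modulus is maximised by \(s\) consecutive roots, where it equals \(\sin(\pi s/p)/(s\sin(\pi/p))\).
\end{itemize}

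The key steps for the second branch, in order, are as follows.
\begin{enumerate}
\item Prove or quote the rearrangement fact above.
\item Partition the fibres into ``large'' fibres, with \(|S_b|\ge p/(4Q)\), and small ones. The small fibres have total mass at most \(Q\cdot\frac{1}{4Q}=1/4\), so the large fibres carry mass at least \(3/4\).
\item For a large fibre, the consecutive-arc bound together with \(p\ge16Q\) gives \(|m_b|\le\cos(c/Q)\) for an explicit \(c\). The hypothesis \(p\ge16Q\) is what makes an arc of \(\ge p/(4Q)\) roots cover an angle comparable to \(\pi/(2Q)\) despite discretisation.
\item Combine the two groups as \(\sum_b\nu_i(b)|m_b|^2\le1-\tfrac34\bigl(1-\cos^2(c/Q)\bigr)\).
\end{enumerate}

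I expect the main obstacle to be calibrating steps 2–4 so that the result is exactly \(\cos^2\frac{\pi}{8Q}\), rather than merely some \(1-\Omega(Q^{-2})\). If the mass-splitting loses too much, I would instead use a cleaner convexity argument. Set \(F(s)=\bigl(\sin(\pi s/p)/(s\sin(\pi/p))\bigr)^2\), so that each fibre contributes \(\nu_i(b)F(|S_b|)\), subject to \(\sum_b|S_b|=p\) with at most \(Q\) fibres. Maximising \(\sum_b\nu_i(b)F(|S_b|)\) under this constraint should push the extremum to a configuration I can evaluate directly. Finally, the inequality \(\min\{\cdot\}\le1-1/(16Q^2)\) follows from \(\cos^2x\le1-x^2/2+x^4/24\) with \(x=\pi/(8Q)\), together with \(\pi^2/128>1/16\) and a check that the quartic term is harmless.
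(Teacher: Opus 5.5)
Your derivation of the three identities and of the \((Q-1)/2\) branch is correct and is the paper's argument: the fibre formula, Parseval, and pairing \(a\) with \(-a\). The gap is the \(\cos^2\frac{\pi}{8Q}\) branch, which is the only nontrivial part of the lemma. You leave Steps 2--4 uncalibrated, and the calibration you wrote down cannot reach the stated constant.

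Put \(x=\pi/(4Q)\). Step 3 asks for a \(p\)-independent \(c\). An arc of \(\lceil p/(4Q)\rceil\) consecutive roots is an admissible large fibre, and its centroid modulus tends to \(\sin x/x\) as \(p\to\infty\). So any valid \(c\) has \(\cos(c/Q)\ge\sin x/x\), and Step 4 gives at best \(1-\frac34\bigl(1-(\sin x/x)^2\bigr)\). But
\[
  \tfrac34\Bigl(1-\Bigl(\frac{\sin x}{x}\Bigr)^2\Bigr)=\frac{x^2}{4}-\frac{x^4}{30}+O(x^6),
  \qquad
  \sin^2\frac x2=\frac{x^2}{4}-\frac{x^4}{48}+O(x^6),
\]
so your bound exceeds \(\cos^2(x/2)=\cos^2\frac{\pi}{8Q}\). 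The leading terms coincide, but the quartic term has the wrong sign. Numerically, for \(Q=3\) you get about \(0.98302\), against \(\cos^2(\pi/24)\approx0.98296\). Your convexity fallback is not a ready replacement either: \(s\mapsto\sin^2(\pi s/p)/s\) is convex near \(s=p\), so Jensen does not directly locate the maximum.

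There are two ways to close this.

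\emph{Fix within your framework.} Raise the threshold and call a fibre large if \(|S_b|\ge p/(2Q)\). Small fibres then carry mass below \(1/2\), and your bound becomes \(\frac12+\frac12G^2\), where \(G=\frac{\sin u}{u}\cdot\frac{\pi/p}{\sin(\pi/p)}\) and \(u=\pi/(2Q)\). Since \(\cos^2\frac{\pi}{8Q}=\frac12\bigl(1+\cos\frac{\pi}{4Q}\bigr)\), the target reduces to \(G^2\le\cos(u/2)\). Because \(\pi/p\le u/8\), one gets \(G^2\le1-0.3u^2\), while \(\cos(u/2)\ge1-u^2/8\), so there is ample room.

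\emph{The paper's route.} This avoids both the rearrangement lemma and the mass splitting. Write \(z_b=(\T_i\chi_a)(b)=|z_b|e^{i\phi_b}\). Using \(|z_b|\le1\),
\[
  \sum_b\nu_i(b)|z_b|^2\le\sum_b\nu_i(b)|z_b|=\frac1p\sum_x\operatorname{Re}\bigl(e^{-i\phi_{\tau_i(x)}}\chi_a(x)\bigr).
\]
The at most \(Q\) arcs of angular radius \(\pi/(4Q)\) centred at the \(\phi_b\) contain at most \(p/4+Q\le3p/8\) of the \(p\)-th roots of unity. Hence at least half of the \(x\) contribute at most \(\cos\frac{\pi}{4Q}\), which gives exactly \(\frac12\bigl(1+\cos\frac{\pi}{4Q}\bigr)=\cos^2\frac{\pi}{8Q}\). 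This is the same project-onto-a-direction idea that proves your rearrangement fact, applied once to the whole sum rather than fibre by fibre.
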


\begin{proof}
Let \(S_{i,b}=\{x\in\Fp:\tau_i(x)=b\}\), so \(\nu_i(b)=|S_{i,b}|/p\)
and \((\T_i\chi_a)(b)=\widehat{\one_{S_{i,b}}}(-a)/\nu_i(b)\). Hence
\[
  \|\T_i\chi_a\|_2^2
  =\sum_{b\in\mathcal B_i}
    \frac{|\widehat{\one_{S_{i,b}}}(a)|^2}{\nu_i(b)} ,
  \qquad
  \sum_{a\in\Fp}\|\T_i\chi_a\|_2^2
  =\sum_{b\in\mathcal B_i}\frac{\|\one_{S_{i,b}}\|_2^2}{\nu_i(b)}
  =q_i ,
\]
by Parseval. Removing the zero frequency, where \(\T_i\chi_0=1\), gives
the sum over \(a\ne0\). The symmetry follows from
\(\chi_{-a}=\overline{\chi_a}\), and \(\|\T_i\chi_a\|_2\le1\) from
\(|\chi_a|=1\). For odd \(p\), the frequencies \(a\) and \(-a\) are
distinct when \(a\ne0\), so
\(2\|\T_i\chi_a\|_2^2\le\sum_{a'\ne0}\|\T_i\chi_{a'}\|_2^2\le Q-1\).
This is the first term of the minimum, and for \(Q=2\) it is the
whole bound.

Now suppose \(p\ge16Q\) and fix \(a\ne0\). Write
\(z_b=(\T_i\chi_a)(b)=|z_b|e^{i\phi_b}\), with \(\phi_b\) arbitrary
when \(z_b=0\). Since \(|z_b|\le1\),
\[
  \|\T_i\chi_a\|_2^2
  =\sum_b\nu_i(b)|z_b|^2
  \le\sum_b\nu_i(b)|z_b|
  =\frac1p\sum_{x\in\Fp}
   \operatorname{Re}\bigl(e^{-i\phi_{\tau_i(x)}}\chi_a(x)\bigr).
\]
The values \(\chi_a(x)\) are the \(p\) equally spaced points of the
unit circle. Around each of the at most \(Q\) angles \(\phi_b\) take
the arc of angular radius \(\pi/(4Q)\); each arc contains at most
\(p/(4Q)+1\) of the points, so their union contains at most
\(p/4+Q\le3p/8\) since \(p\ge16Q\). Thus at least half the points lie
outside every arc, where the real part above is at most
\(\cos(\pi/(4Q))\); elsewhere it is at most one. Hence
\[
  \|\T_i\chi_a\|_2^2
  \le\frac{1+\cos(\pi/(4Q))}{2}
  =\cos^2\frac{\pi}{8Q} .
\]
Finally, \(\sin u\ge2u/\pi\) for \(0\le u\le\pi/2\) gives
\(\sin^2(\pi/(8Q))\ge1/(16Q^2)\).
\end{proof}

From now on we assume \(p\ge16Q\), or \(Q=2\) and \(p\) odd. Note that
\(\eta_Q<1\) bounds individual characters, not every function of mean
zero: a function of \(\tau_i(x)\) alone is fixed by \(\T_i\), which is
why \cref{lem:component-bound} first reduces to a sum of products of
character norms.

\subsection{Applying list recovery}

We keep \(j\) of the \(k-1\) factors in \cref{lem:component-bound} and
bound the rest by \(\eta_Q\). An extremal argument then replaces the
retained weights by indicators of input lists, so the sum counts dual
words with at least \(j\) agreements.

\begin{lemma}[Replacing weights by list indicators]
\label{lem:weights-to-lists}
Let \(\Omega\) be a finite set and \(2\le Q\le|\Omega|+1\) an integer.
Let \(P(w_1,\ldots,w_n)\) be a nonnegative linear combination of terms
of the form
\[
  \prod_{i\in S}w_i(a_i),\qquad S\subseteq[n],\quad a_i\in\Omega .
\]
For all functions \(w_i\colon\Omega\to[0,1]\) with
\(\sum_{a\in\Omega}w_i(a)\le Q-1\), there are sets
\(V_i\subseteq\Omega\) of size \(Q-1\) such that
\[
  P(w_1,\ldots,w_n)\le P(\one_{V_1},\ldots,\one_{V_n}) .
\]
\end{lemma}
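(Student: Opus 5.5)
The plan is to handle the coordinates one at a time, using multilinearity. Each monomial \(\prod_{i\in S}w_i(a_i)\) contains at most one factor from each \(w_i\), since a monomial uses a single frequency \(a_i\) per coordinate. Hence, with \(w_j\) fixed for all \(j\ne i\), the polynomial \(P\) is an affine function of the vector \(w_i=(w_i(a))_{a\in\Omega}\in[0,1]^\Omega\):
\[
P=c_0+\sum_{a\in\Omega}c_a\,w_i(a).
\]
The coefficients \(c_0,c_a\) are nonnegative, because they are nonnegative combinations of products of values of the other \(w_j\), each lying in \([0,1]\).

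Over the polytope
\[
K=\Bigl\{w\in[0,1]^\Omega:\ \sum_a w(a)\le Q-1\Bigr\},
\]
a linear function with nonnegative coefficients is maximized at an indicator \(\one_{V}\) with \(|V|=Q-1\). To see this, order the frequencies by decreasing \(c_a\) and let \(V\) be the top \(Q-1\) of them; \(V\) exists because \(Q-1\le|\Omega|\). The greedy bound then gives
\[
\sum_a c_a w(a)\le\sum_{a\in V}c_a
\]
for every \(w\in K\). The justification is an exchange argument: shifting mass from a frequency outside \(V\) to one inside \(V\) never decreases the sum, and raising weights up to the caps never decreases it either, since all coefficients are nonnegative. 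Concretely, let \(c^*=\min_{a\in V}c_a\). Then
\[
\sum_a c_aw(a)\le\sum_{a\in V}c_aw(a)+c^*\sum_{a\notin V}w(a)\le\sum_{a\in V}c_aw(a)+c^*\sum_{a\in V}(1-w(a))\le\sum_{a\in V}c_a.
\]
The middle step uses \(\sum_{a\notin V}w(a)\le Q-1-\sum_{a\in V}w(a)\).

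With this single-coordinate step in hand, iterate it. First replace \(w_1\) by \(\one_{V_1}\) with \(P\) not decreasing, keeping the others fixed. Then the new polynomial in \(w_2\) still has nonnegative coefficients, because \(\one_{V_1}\) takes values in \([0,1]\), so the same argument applies to \(w_2\). After \(n\) steps, chaining the inequalities gives \(P(w_1,\ldots,w_n)\le P(\one_{V_1},\ldots,\one_{V_n})\).

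The main point to check carefully is the multilinearity claim: no monomial contains two factors \(w_i(a)\) and \(w_i(a')\) for the same \(i\). That is guaranteed by the stated form of the terms, where the indices \(i\in S\) are distinct. The rest is routine.
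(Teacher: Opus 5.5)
Your proof is correct and follows the paper's argument: fix every weight except \(w_i\), observe that \(P\) is affine in \(w_i\) with nonnegative coefficients, replace \(w_i\) by \(\one_{V_i}\) where \(V_i\) collects the \(Q-1\) largest coefficients, and repeat coordinate by coordinate. Your exchange inequality with \(c^*=\min_{a\in V}c_a\) spells out the greedy maximization step, which the paper asserts without proof.
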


\begin{proof}
Fix all weights except \(w_i\). Each term contains at most one factor
from coordinate \(i\), so \(P\) has the form
\(c_0+\sum_{a\in\Omega}c_a w_i(a)\) with all \(c_a\ge0\). Subject to
\(0\le w_i(a)\le1\) and \(\sum_a w_i(a)\le Q-1\), this is maximized by
putting weight one on the \(Q-1\) largest coefficients and zero
elsewhere; let \(V_i\) be the set of those \(Q-1\) indices. Replacing
\(w_i\) by \(\one_{V_i}\) does not decrease \(P\). Repeating for each
coordinate in turn gives the claim.
\end{proof}

\begin{lemma}[A list-recovery bound for the components]
\label{lem:component-list-recovery}
For every integer \(1\le j<k\) with \(j\le n\),
\[
  \sum_{A\subseteq[n]\colon|A|\ge k}\|\T(f_C^{=A})\|_2^2
  \le\eta_Q^{\,k-1-j}\,\List_{C^\perp}(n-j,Q-1) .
\]
\end{lemma}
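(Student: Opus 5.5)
The plan is to start from \cref{lem:component-bound} for each support \(A\) with \(|A|\ge k\), but to choose the complementary set cleverly. For each \(A\) fix an ordering, or simply a subset \(I_A\subseteq A\) of size \(k-1\), with \(J=A\setminus I_A\). Within \(I_A\), single out a subset \(K_A\subseteq I_A\) of size \(j\). Bound the remaining \(k-1-j\) factors in \(I_A\setminus K_A\) by \(\eta_Q\) using \cref{lem:local-fourier-energy}. Writing \(w_i(a)=\|\T_i\chi_a\|_2^2\) for \(a\ne0\), this gives
\[
  \sum_{|A|\ge k}\|\T(f_C^{=A})\|_2^2
  \le\eta_Q^{\,k-1-j}\sum_{|A|\ge k}\ \sum_{\vec y\in C^\perp\colon\supp(\vec y)=A}\ \prod_{i\in K_A}w_i(y_i).
\]
The right-hand sum, call it \(P(w_1,\ldots,w_n)\), is a nonnegative combination of products of the form \(\prod_{i\in S}w_i(a_i)\) with \(a_i\in\Omega=\Fp^*\). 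Each coordinate appears at most once per term. By \cref{lem:local-fourier-energy}, each \(w_i\) takes values in \([0,1]\) and satisfies \(\sum_{a\ne0}w_i(a)=q_i-1\le Q-1\).

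Next apply \cref{lem:weights-to-lists} with \(\Omega=\Fp^*\); this needs \(Q-1\le p-1\), which holds under our standing assumption on \(p\). This replaces each \(w_i\) by \(\one_{V_i}\) with \(|V_i|=Q-1\). The sum then becomes
\[
  \sum_{|A|\ge k}\#\{\vec y\in C^\perp:\supp(\vec y)=A,\ y_i\in V_i\ \forall i\in K_A\}.
\]
It remains to bound this by the number of nonzero dual words \(\vec y\) with \(|\{i:y_i\in V_i\}|\ge j\), which is at most \(\List_{C^\perp}(n-j,Q-1)\) since \(n-(n-j)=j\).

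The obstacle is overcounting. Each \(\vec y\) has exactly one support \(A=\supp(\vec y)\), so it appears in only one outer term. Hence the double sum is at most the count of dual words agreeing with the lists \(V_i\) on the \(j\) coordinates of \(K_{\supp(\vec y)}\). That count is in turn at most the number of dual words with at least \(j\) agreements. No multiplicity issue arises, provided the choice of \(K_A\) depends only on \(A\). Here it matters that \(j\le k-1\le|A|\), so that \(K_A\) can be chosen inside \(I_A\). The condition \(j\le n\) is automatic from this.

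One subtlety to check is that \(\eta_Q\) bounds only nonzero frequencies. This is fine, because every \(i\in A=\supp(\vec y)\) has \(y_i\ne0\). Combining the three displays yields the claimed bound \(\eta_Q^{\,k-1-j}\List_{C^\perp}(n-j,Q-1)\).
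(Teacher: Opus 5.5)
Your proposal is correct and matches the paper's proof step for step. You fix \(I_A\subseteq A\) of size \(k-1\) and a \(j\)-subset of it depending only on \(A\), apply \cref{lem:component-bound}, bound the remaining \(k-1-j\) factors by \(\eta_Q\) (valid because \(y_i\ne0\) on \(\supp(\vec y)\)), and pass to list indicators with \cref{lem:weights-to-lists} over \(\Omega=\Fp^*\). Then, since each dual word is counted once (its support determines \(A\)), the count is at most \(\List_{C^\perp}(n-j,Q-1)\), exactly as in the paper.
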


\begin{proof}
Let \(\sum_A\) range over supports of size at least \(k\). For each
such \(A\) fix \(I_A\subseteq A\) with \(|I_A|=k-1\) and
\(J_A\subseteq I_A\) with \(|J_A|=j\). Apply \cref{lem:component-bound}
with \(J=A\setminus I_A\). Every word with support \(A\) is nonzero at
each coordinate of \(A\), so \cref{lem:local-fourier-energy} bounds each
of the \(k-1-j\) factors indexed by \(I_A\setminus J_A\) by \(\eta_Q\):
\[
  \sum_A\|\T(f_C^{=A})\|_2^2
  \le\sum_A\sum_{\vec y\in C^\perp\colon\supp(\vec y)=A}
     \prod_{i\in I_A}\|\T_i\chi_{y_i}\|_2^2
  \le\eta_Q^{\,k-1-j}
     \sum_A\sum_{\vec y\in C^\perp\colon\supp(\vec y)=A}
     \prod_{i\in J_A}\|\T_i\chi_{y_i}\|_2^2 .
\]
Regard the last double sum as a function \(P\) of the weights
\(w_i(a)=\|\T_i\chi_a\|_2^2\), \(a\in\Fp^*\): it is a sum of monomials
\(\prod_{i\in J_A}w_i(y_i)\), one for each dual word, each with
coefficient one. \Cref{lem:weights-to-lists} applies to \(P\) with
\(\Omega=\Fp^*\), since \(Q-1\le p-1\) and
\cref{lem:local-fourier-energy} gives the constraints on the weights.
This yields lists \(V_i\subseteq\Fp^*\) of size \(Q-1\) with
\(P(\vec w)\le P(\one_{V_1},\ldots,\one_{V_n})\). With indicator
weights, a monomial equals one exactly when its dual word agrees with
the lists on all \(j\) coordinates of \(J_A\), and each dual word
occurs once because its support determines \(A\). With input lists
\(S_i=V_i\) for every \(i\), each counted word therefore has at least
\(j\) agreements, so
\[
  P(\one_{V_1},\ldots,\one_{V_n})
  =\sum_A\sum_{\vec y\in C^\perp\colon\supp(\vec y)=A}
     \prod_{i\in J_A}\one[y_i\in V_i]
  \le\List_{C^\perp}(n-j,Q-1) ,
\]
which proves the lemma.
\end{proof}

\begin{proof}[Proof of \cref{thm:list-recovery-transfer}]
Set \(k=R+1\) and \(j=n-R\). Then \(1\le j<k\), \(k-1-j=2R-n\), and
\(n-j=R\). \Cref{prop:full-l2-bound,lem:component-list-recovery} give
\[
  \SD\bigl(\vec\tau(\vec U_C),\vec\tau(\vec U_{\Fp^n})\bigr)
  \le\frac12\Bigl(\sum_{|A|\ge k}\|\T(f_C^{=A})\|_2^2\Bigr)^{1/2}
  \le\frac12\,\eta_Q^{\,R-n/2}\sqrt{\List_{C^\perp}(R,Q-1)} .
  \qedhere
\]
\end{proof}

\begin{corollary}[List decoding implies one-bit leakage resilience]
\label{thm:list-transfer}
Let \(p\) be an odd prime, \(C\subseteq\Fp^n\) a linear code, and
\(R\) an integer with \(n/2<R<n\) such that every nonzero word of
\(C^\perp\) has weight at least \(R+1\). Then, for every tuple of
one-bit maps,
\[
  \SD\bigl(\vec\tau(\vec U_C),\vec\tau(\vec U_{\Fp^n})\bigr)
  \le\frac12\,2^{\,n/2-R}\sqrt{\List_{C^\perp}(R)} .
\]
\end{corollary}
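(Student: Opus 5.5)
This corollary is the case \(Q=2\) of \cref{thm:list-recovery-transfer}, so I will specialize that theorem rather than repeat its argument. A one-bit map \(\tau_i\colon\Fp\to\{0,1\}\) has output alphabet of size at most \(2\). After replacing each \(\mathcal B_i\) by the image \(\tau_i(\Fp)\), as in \cref{sec:transfer-expansion}, we have \(|\mathcal B_i|\le2=Q\). The hypotheses on \(C\) and \(R\) are exactly those of \cref{thm:list-recovery-transfer}: \(n/2<R<n\), and every nonzero dual word has weight at least \(R+1\). The theorem states that for \(Q=2\) its conclusion holds for every odd prime \(p\), so the requirement \(p\ge16Q\) is not needed.

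Applying the theorem with \(Q=2\) gives
\[
  \SD\bigl(\vec\tau(\vec U_C),\vec\tau(\vec U_{\Fp^n})\bigr)
  \le\tfrac12\,\eta_2^{\,R-n/2}\sqrt{\List_{C^\perp}(R,1)} .
\]
Two substitutions finish the proof. First, \(\eta_2=1/2\) by \cref{lem:local-fourier-energy}; this holds because \(\min\{(Q-1)/2,\cos^2(\pi/16)\}=1/2\) at \(Q=2\). Hence \(\eta_2^{\,R-n/2}=2^{\,n/2-R}\). Second, \(\List_{C^\perp}(R,1)=\List_{C^\perp}(R)\) by \cref{def:list-size}. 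Together these give the stated bound.

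The only point that needs checking is the odd-prime case of \(\eta_2\le1/2\) in \cref{lem:local-fourier-energy}. That bound uses only that \(a\ne-a\) for \(a\ne0\) when \(p\) is odd, together with the identity \(\sum_{a\ne0}\|\T_i\chi_a\|_2^2=q_i-1\le1\). The proof of that lemma already establishes both facts, so there is no real obstacle.
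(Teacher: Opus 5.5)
Your proposal is correct and matches the paper's proof: specialize \cref{thm:list-recovery-transfer} to \(Q=2\), which is valid for every odd prime, then substitute \(\eta_2=1/2\), so that \(\eta_2^{\,R-n/2}=2^{\,n/2-R}\), and \(\List_{C^\perp}(R,1)=\List_{C^\perp}(R)\). Your check of the odd-prime bound \(\eta_2\le1/2\) is a step the paper takes from \cref{lem:local-fourier-energy} without restating.
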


\begin{proof}
Take \(Q=2\) in \cref{thm:list-recovery-transfer}; then \(\eta_2=1/2\)
and \(\List_{C^\perp}(R,1)=\List_{C^\perp}(R)\).
\end{proof}

We now derive the above-half result for GRS codes from the transfer
and \cref{thm:rs-list-decoding}.

\begin{theorem}[Logarithmic leakage above half]
\label{thm:grs-constant-bits}
Fix \(0<\delta<1/2\). There are constants \(c_\delta>0\) and
\(n_0(\delta)\) such that, for \(n\ge n_0(\delta)\) and every prime
\(p>n\), every GRS code over \(\Fp\) of length \(n\) and rate at
least \(1/2+\delta\) is
\(\exp(-c_\delta n^{1-\delta/12})\)-resilient to
\(\floor{(\delta/24)\log_2n}\)-bit local leakage.
% More precisely, it is \(\exp(-c_\delta n/2^{2m})\)-resilient to
% \(m\)-bit local leakage for every integer
% \(1\le m\le(\delta/24)\log_2n\).
% Shamir sharing has the same guarantee whenever
% \(k\ge(1/2+\delta)n\). One can take \(c_\delta=\delta/128\).
\end{theorem}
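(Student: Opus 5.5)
The proof combines the transfer of \cref{thm:list-recovery-transfer} with the list-recovery bound of \cref{thm:rs-list-decoding}, applied to the dual code.

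\emph{Setup.} Let \(C=\GRS_{\vec\alpha,\vec v}(n,k-1)\) have rate at least \(1/2+\delta\). By code monotonicity (\cref{lem:code-monotonicity}) we may shrink \(C\) to its smallest-dimension member. So write \(k-1=\ceil{(1/2+\delta)n}\). The dual \(C^\perp\) is a GRS code of dimension \(h=n-k+1\le(1/2-\delta)n\) and minimum distance \(k\).

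Fix \(m\le(\delta/24)\log_2 n\) and \(Q=2^m\), so \(Q\le n^{\delta/24}\). Since \(p>n\ge 16Q\) for large \(n\), \cref{lem:local-fourier-energy} applies. Take the gap \(\zeta=\delta/2\) and radius \(R=\floor{k-1-\zeta n}\). Then \(R\ge(1/2+\delta/2)n-1\), so \(R>n/2\). Also \(R<k\), so every nonzero dual word has weight at least \(k\ge R+1\). The hypothesis \(h\le(1-\zeta)n\) holds trivially.

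\emph{Applying the two results.} \Cref{thm:list-recovery-transfer} gives
\[
  \SD\le\tfrac12\,\eta_Q^{\,R-n/2}\sqrt{\List_{C^\perp}(R,Q-1)}.
\]
For the list size, apply \cref{thm:rs-list-decoding} with \(\ell=Q-1\). Its length condition is \(n\ge c_\zeta\ell^{9/\zeta}=c_\zeta\ell^{18/\delta}\), and \(\ell^{18/\delta}\le n^{18/24}=n^{3/4}\), so it holds for large \(n\). The theorem then bounds the list by \(n^{c_\zeta\ell^{3/\zeta}}\). Since \(\ell^{6/\delta}\le n^{1/4}\), we get
\[
  \log\List_{C^\perp}(R,Q-1)\le c_\zeta n^{1/4}\log n .
\]
For the decay factor, \(\eta_Q\le 1-1/(16Q^2)\) and \(R-n/2\ge\delta n/2-1\). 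Hence
\[
  \eta_Q^{R-n/2}\le\exp\bigl(-(\delta n/2-1)/(16Q^2)\bigr)\le\exp\bigl(-c\,\delta n^{1-\delta/12}\bigr),
\]
using \(Q^2\le n^{\delta/12}\).

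\emph{Comparing exponents.} Since \(\delta<1/2\), we have \(1-\delta/12>23/24>1/4\). So for \(n\ge n_0(\delta)\) the decay term \(\delta n^{1-\delta/12}\) dominates \(n^{1/4}\log n\). This yields the error bound \(\exp(-c_\delta n^{1-\delta/12})\) with, say, \(c_\delta=\delta/128\).

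\emph{Main obstacle.} The only delicate step is the bookkeeping between the growth of \(Q\) and the hypotheses of \cref{thm:rs-list-decoding}. The gap \(\zeta\) must be a fixed fraction of \(\delta\) for two reasons: it keeps \(R>n/2\) with linear margin, and it keeps both \(\ell^{9/\zeta}\le n\) and \(\ell^{3/\zeta}\ll n^{1-\delta/12}/\log n\). With \(\zeta=\delta/2\) and the exponent \(\delta/24\), these give \(n^{3/4}\) and \(n^{1/4}\) respectively, which is why that exponent was chosen. Small-\(m\) boundary effects, such as \(m=0\) when \(n\) is small, are absorbed into \(n_0(\delta)\).
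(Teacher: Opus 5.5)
Your proposal is correct and follows essentially the paper's own argument. You use the transfer of \cref{thm:list-recovery-transfer} at radius \(R=\floor{k-1-(\delta/2)n}\), then \cref{thm:rs-list-decoding} with gap \(\delta/2\) and \(\ell=Q-1\le n^{\delta/24}\), which gives the \(n^{3/4}\) length check and the \(n^{1/4}\log n\) list exponent, and finally the decay \(\eta_Q\le 1-1/(16Q^2)\) with \(Q^2\le n^{\delta/12}\). The one cosmetic difference is that you first pass to the minimal-dimension subcode via \cref{lem:code-monotonicity}, whereas the paper works directly under \(k\ge(1/2+\delta)n\) and sets aside the full-space case; both are valid.
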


\begin{proof}[Proof of \cref{thm:grs-constant-bits}]
Fix an integer \(1\le m\le(\delta/24)\log_2n\). Write
\(C=\GRS_{\vec\alpha,\vec v}(n,k-1)\); the full-space code is
immediate, so assume \(k-1<n\). We prove the code estimate under the
slightly weaker condition \(k\ge(1/2+\delta)n\), which covers both
codes of rate at least \(1/2+\delta\) and the zero-secret codes of the
Shamir schemes in the theorem. Throughout, \(n\) is sufficiently large
depending only on \(\delta\).

Put \(Q=2^m\le n^{\delta/24}\) and \(R=\floor{k-1-(\delta/2)n}\). Since
\(k\ge(1/2+\delta)n\), we have \(R-n/2\ge\delta n/2-2\ge\delta n/4\);
also \(R<k\) and \(p>n\ge16Q\). The dual code has dimension \(n-k+1\)
and minimum distance \(k\ge R+1\), so \cref{thm:list-recovery-transfer}
applies. We use the specialization of \cref{thm:rs-list-decoding} to
dual GRS codes stated after it, with gap \(\delta/2\) and input-list
size \(Q-1\): its hypotheses hold since \(k-1\ge\delta n/2\) and
\(c_{\delta/2}(Q-1)^{18/\delta}\le c_{\delta/2}n^{3/4}\le n\), and its
radius is exactly \(R\). Consequently
\[
  \log\List_{C^\perp}(R,Q-1)
  \le c_{\delta/2}\,Q^{6/\delta}\log n
  \le c_{\delta/2}\,n^{1/4}\log n .
\]
With \(\eta_Q\le1-1/(16Q^2)\), the transfer gives
\[
  \SD\bigl(\vec\tau(\vec U_C),\vec\tau(\vec U_{\Fp^n})\bigr)
  \le\frac12\exp\Bigl(-\frac{\delta n}{64Q^2}
     +\frac{c_{\delta/2}}2\,n^{1/4}\log n\Bigr)
  \le\frac12\exp\Bigl(-\frac{\delta n}{128Q^2}\Bigr),
\]
where the last step holds for all sufficiently large \(n\), because
\(Q^2\le n^{\delta/12}\) and \(\delta/12+1/4<1\) give
\(c_{\delta/2}Q^2n^{1/4}\log n\le\delta n/64\).
For Shamir sharing, apply
\cref{lem:shamir-from-code} to the zero-secret code; its factor two
cancels the prefactor \(1/2\). Thus \(c_\delta=\delta/128\) suffices in
both cases, and \(2^{2m}\le n^{\delta/12}\) gives the stated bound.
\end{proof}

\section{Structured leakage at constant-rate thresholds}
\label{sec:structured-leakage}

This section introduces the second tool, an iterated Cauchy--Schwarz
inequality of generalized von Neumann type tailored to MDS systems of
linear forms (\cref{lem:iterated-cs,cor:iterated-cs-count}). It bounds
an average of a product of functions along the forms of a GRS code by
a product of Gowers norms, and for an MDS system it retains linearly
many such norms at every constant rate. We apply it on its own, through
the \(L^1\) bound of \cref{prop:leakage-decomposition} with \(r_0=k-1\), to
prove \cref{thm:structured-leakage}: resilience to quadratic-residue
leakage and to almost all \(\Theta_c(\log_2p)\)-bit local leakage
functions at every positive constant rate.
\Cref{sec:main-result} combines the same inequality with the
list-decoding estimates of \cref{sec:list-transfer}.

Nguyen~\cite{Nguyen24} relates leakage security to Gowers norms and
bounds these norms for quadratic-residue and random leakage, but
retains a single Gowers-norm factor, so the bounds need an
exponentially large field at constant rate. The MDS von Neumann
inequality supplies linearly many factors from the same local
estimates, which allows fields of size just above \(n\).

Recall that \(\tau_{\mathrm{QR}}(x)=1\) if \(x\) is a square in \(\Fp\),
including zero, and \(\tau_{\mathrm{QR}}(x)=0\) otherwise.

\begin{theorem}[Structured leakage at constant rate]
\label{thm:structured-leakage}
Fix \(0<c<1\). There are constants \(\eta_c,a_c,b_c>0\) and
\(n_0(c)\) such that, for \(n\ge n_0(c)\) and every prime \(p>n\),
the following hold for GRS codes
\(C=\GRS_{\vec\alpha,\vec v}(n,k-1)\) with \(k\ge cn\).
\begin{enumerate}
\item For \(\tau_i(x)=\tau_{\mathrm{QR}}(a_ix+b_i)\), where
  \(a_i\in\Fp^*\) and \(b_i\in\Fp\),
  \[
    \SD\bigl(\vec\tau(\vec U_C),\vec\tau(\vec U_{\Fp^n})\bigr)
    \le\exp(-a_cn\log p).
  \]
\item For every integer \(1\le m\le\eta_c\log_2p\), all but an
  \(\exp(-b_c\sqrt p)\) fraction of tuples
  \(\vec\tau=(\tau_i\colon\Fp\to\{0,1\}^m)_{i\in[n]}\) satisfy
  \[
    \SD\bigl(\vec\tau(\vec U_C),\vec\tau(\vec U_{\Fp^n})\bigr)
    \le\exp(-a_cn\log p)
  \]
  simultaneously for every such code \(C\).
\end{enumerate}
Both conclusions hold for \((n,k)\)-Shamir sharing between every pair
of secrets, with the same error bounds.
\end{theorem}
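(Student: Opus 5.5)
The plan is to use the $L^1$ extreme cutoff of \cref{prop:full-l2-bound} ($r_0=k-1$) together with the output-indicator form \cref{lem:output-indicator-components}. This gives
\[
  \SD\bigl(\vec\tau(\vec U_C),\vec\tau(\vec U_{\Fp^n})\bigr)
  \le\frac12\sum_{|A|\ge k}\ \sum_{\vec b_A}
   \Bigl|\E_{\vec X\sim U_C}\Bigl[\prod_{i\in A}f_{i,b_i}(X_i)\Bigr]\Bigr|,
\]
with $f_{i,b}=\one[\tau_i=b]-\nu_i(b)$ bounded by one and of mean zero. The number of pairs $(A,\vec b_A)$ is at most $2^n\cdot 2^{mn}$. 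Each average is $\Lambda_\Psi(\vec f)$ for the linear forms $\Psi=(L_i)_{i\in A}$ of the code, parametrized by $\vec z\in\Fp^{k-1}$; after absorbing the multipliers $v_i$ into $f_i$, these forms are MDS, since any $k-1$ of them are independent.

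The core step is the MDS generalized von Neumann inequality announced in the overview (\cref{lem:iterated-cs,cor:iterated-cs-count}), which I would prove as follows. Partition a subset of $A$ into blocks. For a block $B$ of $k-1$ forms, the map $\vec z\mapsto(L_i(\vec z))_{i\in B}$ is a bijection onto $\Fp^{B}$, so we may change variables. We then treat the remaining $|A|-k+1\ge1$ forms as the ``other'' forms in Manners's vector-valued Cauchy--Schwarz scheme. Each remaining form depends on all coordinates of the new variables, again by MDS. Applying Cauchy--Schwarz $t$ times, once per eliminated remaining form (with $t$ depending only on $c$, roughly $1/c$ blocks), bounds $|\Lambda|$ by the $U^t$ norm on $\Fp^{J}$ of the tensor product $\bigotimes_{i\in J} f_i$. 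This norm factors as $\prod_{i\in J}\|f_i\|_{U^t}$.

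The MDS property is what guarantees that at each stage a set $J$ of size $\Omega_c(n)$ survives. The main obstacle is making this bookkeeping rigorous: ensuring that after each Cauchy--Schwarz the differenced forms still fail to degenerate on the retained block. I would handle this by choosing $J$ of size about $c n/t$ and using the fact that any $k-1$ forms are independent, so the $t$ eliminated forms together with $J$ still number at most $k-1$.

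With $|\Lambda|\le\prod_{i\in J}\|f_{i,b_i}\|_{U^t}$ and $|J|\ge\beta_c n$, the remaining work is the local estimate $\|f_{i,b}\|_{U^t}\le p^{-\gamma}$ with $\gamma=\gamma(c,t)>0$.

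For quadratic-residue leakage, $f_{i,b}$ is $\pm\frac12\lambda(a_ix+b_i)$ plus an $O(1/p)$ correction, where $\lambda$ is the Legendre symbol. Expanding $\|\cdot\|_{U^t}^{2^t}$ gives averages of $\lambda$ applied to products of $2^t$ shifted linear polynomials. By Weil's bound, these are $O(2^tp^{-1/2})$ except on degenerate $\vec h$, which have density $O_t(1/p)$. Hence $\|f_{i,b}\|_{U^t}\le p^{-1/2^{t+1}}$ up to constants. Then the total error is at most $2^{n+1}p^{-\gamma\beta_cn}=\exp(-a_cn\log p)$, using $p>n\ge2$ and $n$ large.

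For random maps, each $\one[\tau_i=b]$ is a random set of density $2^{-m}$. The $U^t$ norm of its centered version is a low-degree polynomial in independent indicators. A concentration bound of McDiarmid type over the $p$ independent values, union bounded over $i\le n$ and $b\in\{0,1\}^m$, gives $\|f_{i,b}\|_{U^t}\le p^{-\gamma}$ except with probability $\exp(-b_c\sqrt p)$, since $n2^m\le p^{1+\eta_c}$ is negligible against $e^{\sqrt p}$. This bound is uniform in $C$. The summation now costs $2^{(m+1)n}\le p^{\eta_cn}\cdot 2^n$, which is absorbed if $\eta_c<\gamma\beta_c/2$.

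Finally, the Shamir statement follows from \cref{lem:shamir-from-code}, since both leakage classes are closed under coordinatewise translations (which shift $b_i$ and preserve uniformity of random maps). The factor two is absorbed by slightly decreasing $a_c$.
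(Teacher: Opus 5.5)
Your architecture matches the paper's. You use the $L^1$ cutoff $r_0=k-1$ with \cref{lem:output-indicator-components}, a product of $U^t$ norms from an MDS von Neumann inequality, local bounds from Weil's estimate and bounded differences, and a union bound over the $n2^m$ pairs $(i,b)$. The gap is in the step you flag as the main obstacle: the bookkeeping of that inequality. What you propose there does not work.

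Read literally, you apply Cauchy--Schwarz once per eliminated form. But there are $|A|-|J|=\Theta(n)$ forms outside $J$, so the Gowers order $t$ would have to grow linearly in $n$, and then the local bound $p^{-1/2^{t+1}}$ is useless. Read as blocks, your justification ``the $t$ eliminated forms together with $J$ still number at most $k-1$'' is not the condition the argument needs. After the Cauchy--Schwarz steps, the $u$-th cube difference $\vec d_u$ is uniform on the subspace $K_u=\bigcap_{i\in I_u}\ker L_i$ on which the $u$-th eliminated group vanishes. To land on $\norm{\bigotimes_{j\in J}f_j}_{U^t(\Fp^J)}^{2^t}$ you need $L_J(K_u)=\Fp^J$ for every $u$. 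So each group $I_u$ together with $J$ must be independent, i.e.\ $|I_u|+|J|\le k-1$. This constrains the size of each group, not the number of groups.

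The working mechanism (\cref{lem:iterated-cs}) goes as follows. Partition $A\setminus J=I_1\sqcup\cdots\sqcup I_t$ with $|I_u|+|J|\le k-1$, and normalize $\norm{f_i}_\infty\le1$ off $J$. At step $u$, translate $\vec z$ by a uniform element of $K_u$. This leaves every factor $f_i(L_i(\vec z))$ with $i\in I_u$ unchanged, so the product of all cube copies of that group (modulus at most one) is pulled out, and Cauchy--Schwarz duplicates only the rest. One step therefore removes up to $k-1-|J|$ forms at once.

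Covering $A\setminus J$ with $t$ such groups requires $|A|-|J|\le t(k-1-|J|)$, i.e.\ $|J|\le q_t(|A|,k)=\floor{(t(k-1)-|A|)/(t-1)}$. This is $\Omega_c(n)$ only for $t>1/c$ strictly; at $t\approx1/c$ it is near zero. Your choice $|J|\approx cn/t$ is admissible once $t\ge1+1/c$, but for this reason, not the one you give. The change of variables onto a block of $k-1$ coordinates is not needed.

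Two smaller points. In the quadratic-residue case the correction term is $\frac12\one_{\{0\}}(a_ix+b_i)-\frac1{2p}$, which is not $O(1/p)$ pointwise. Its $U^t$ norm is still at most $p^{-(t+1)/2^t}+1/p$, so the triangle inequality goes through. In the random case, center at $2^{-m}$ rather than at the empirical mean, so that the cube-collision expectation follows from independence, and pay a factor two afterwards. Take $\rho=p^{-1/2^{t+2}}$ to obtain the $\exp(-\Omega(\sqrt p))$ failure probability.
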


The product bound for Gowers norms, together with standard
concentration estimates (\cref{lem:random-gowers-concentration}),
gives resilience to almost all local leakage functions in part~2.
This does not assert resilience to every local leakage function as in
\cref{def:code-leakage-resilience}.

\begin{remark}[Affine changes of the leakage input]
\label{rem:shifted-qr}
Replacing \(f(x)\) by \(f(ax+b)\) with \(a\ne0\) changes no Gowers
norm: substitute \(x\mapsto ax+b\) and \(h_t\mapsto ah_t\) in the
definition. This gives the affine form of quadratic-residue leakage in
\cref{thm:structured-leakage}, and likewise every tuple of random maps
that satisfies the Gowers-norm event below remains successful after
coordinatewise affine changes of input. The changes may depend on the
public parameters but are fixed before the codeword is sampled.
\end{remark}

For Shamir sharing with \(k\ge cn\), \cref{lem:shamir-from-code}
applied to \(C_0=\GRS_{\vec\alpha,\vec\alpha}(n,k-1)\), together with
the translation closure in \cref{rem:shifted-qr}, gives the code
bounds with an extra factor two; we choose the constant \(a_c\) below
small enough to absorb it.

\Cref{sec:gowers-leakage-transfer} turns a uniform bound on the local
Gowers norms into a leakage bound with explicit finite parameters;
\cref{sec:quadratic-leakage,sec:random-bit-leakage} supply the local
bounds for the two classes; \cref{sec:sqrt-threshold-field} extracts a
sufficient field size at square-root thresholds, where the attack
discussed at the end of that subsection shows that some field-size
condition is needed.

\subsection{A product bound for Gowers norms}
\label{sec:iterated-cs}

Iterated Cauchy--Schwarz is a standard tool in additive combinatorics and
higher-order Fourier analysis~\cite{Gowers01,TaoVu06,GreenTao10}.
We use the following consequence of Manners's generalized von Neumann
inequality for vector-valued linear maps~\cite[Section~3.1]{Manners21}.
Grouping several forms into one map gives a product of Gowers norms.
We include a self-contained proof of the form used below.

Fix a prime \(p\), integers \(k\ge2\) and \(r\ge1\), and write
\(V=\Fp^{k-1}\).

\begin{definition}[Linear forms and systems of forms]
  A \emph{linear form} on \(V\) is a map
  \(L(\vec z)=\sum_{a=1}^{k-1}c_a z_a\), with
  \(\vec c\in\Fp^{k-1}\). A \emph{system of linear forms} is an
  ordered tuple \((L_1,\ldots,L_r)\). A subfamily is independent if
  its coefficient vectors are linearly independent over \(\Fp\).
  The system is \emph{MDS} (maximum-distance-separable) if every
  subfamily of at most \(k-1\) forms is independent.

  An \emph{affine-linear form} is a map \(\psi=b+L\), with
  \(b\in\Fp\) and \(L\) linear. For a system
  \(\Psi=(\psi_i)_{i=1}^r\), where \(\psi_i=b_i+L_i\),
  independence and the MDS property always refer to the linear parts
  \(L_i\).
\end{definition}

\begin{definition}[Average along a system of forms]
  For functions \(\vec f=(f_1,\ldots,f_r)\), with
  \(f_i:\Fp\to\C\), define
  \[
    \Lambda_\Psi(\vec f)
      =\E_{\vec z\in V}\left[\prod_{i=1}^r f_i(\psi_i(\vec z))\right].
  \]
\end{definition}

For \(C=\GRS_{\vec\alpha,\vec v}(n,k-1)\), the forms
\(L_i(\vec z)=v_i\sum_{j=1}^{k-1}z_j\alpha_i^{j-1}\)
give a uniform codeword when \(\vec z\) is uniform in \(V\).
They form an MDS system by the Vandermonde determinant.
More generally, independent linear parts give independent uniform
evaluations, even after adding constants.
We use \(\sqcup\) for a disjoint union and regard the empty family
as independent.

\begin{lemma}[A product of Gowers norms]
  \label{lem:iterated-cs}
  Let \(\Psi=(b_i+L_i)_{i=1}^r\) be a system on \(V\), and let
  \(f_i:\Fp\to\C\). Suppose
  \([r]=J\sqcup I_1\sqcup\cdots\sqcup I_t\), with \(t\ge2\),
  and the linear parts indexed by \(I_a\cup J\) are independent
  for every \(a\). Then
  \[
    |\Lambda_\Psi(\vec f)|
      \le\prod_{j\in J}\norm{f_j}_{U^t}
          \prod_{i\in[r]\setminus J}\norm{f_i}_\infty.
  \]
\end{lemma}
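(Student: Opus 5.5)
The plan is to follow Manners's idea of grouping the forms indexed by \(J\) into one vector-valued map and running the standard \(t\)-fold Cauchy--Schwarz argument against it. Put \(L_J=(L_j)_{j\in J}\colon V\to\Fp^J\) and \(F(\vec x)=\prod_{j\in J}f_j(x_j)\) on \(\Fp^J\). For each \(a\) put \(G_a(\vec z)=\prod_{i\in I_a}f_i(b_i+L_i(\vec z))\). Then
\[
  \Lambda_\Psi(\vec f)=\E_{\vec z\in V}\Bigl[F\bigl(\vec b_J+L_J(\vec z)\bigr)\prod_{a=1}^tG_a(\vec z)\Bigr].
\]
After dividing each \(f_i\) with \(i\notin J\) by \(\norm{f_i}_\infty\) (the case \(\norm{f_i}_\infty=0\) is trivial), we may assume \(|G_a|\le1\). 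The target is then \(|\Lambda_\Psi(\vec f)|\le\norm F_{U^t(\Fp^J)}\).

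The first step records two facts about the subspaces \(W_a=\bigcap_{i\in I_a}\ker L_i\subseteq V\). First, \(G_a\) is invariant under translation by \(W_a\). Second, \(L_J\) restricted to \(W_a\) is surjective onto \(\Fp^J\). For the second fact, the coefficient vectors of \(I_a\cup J\) are independent, so the map \(V\to\Fp^{I_a}\times\Fp^J\) given by \(\vec z\mapsto(L_{I_a}(\vec z),L_J(\vec z))\) is surjective. Hence for every target in \(\Fp^J\) some \(\vec z\) has \(L_{I_a}(\vec z)=0\) and the prescribed \(L_J\)-value. Consequently, for \(\vec w_a\) uniform in \(W_a\), the image \(L_J(\vec w_a)\) is uniform on \(\Fp^J\), because a surjective linear map pushes the uniform measure forward to the uniform measure.

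The second step performs the Cauchy--Schwarz iteration. Since \(\vec z\) is uniform, we may replace it by \(\vec z+\vec w_1+\cdots+\vec w_t\) with \(\vec w_a\) uniform in \(W_a\), all independent. Then \(G_a\) no longer depends on \(\vec w_a\). Fixing everything except \(\vec w_1\), we apply Cauchy--Schwarz over the remaining variables to remove \(G_1\), using \(|G_1|\le1\). This duplicates \(\vec w_1\) into \(\vec w_1,\vec w_1'\). Each of \(G_2,\dots,G_t\) is unaffected by \(\vec w_1\), so each appears once in each copy, and we absorb the resulting conjugate pairs into new bounded functions that are still \(W_a\)-invariant. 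Repeating for \(a=2,\dots,t\) leaves
\[
  |\Lambda|^{2^t}\le\E\prod_{\vec\omega\in\{0,1\}^t}\comp^{|\vec\omega|}F\Bigl(\vec b_J+L_J(\vec z)+\sum_a L_J\bigl(\vec w_a^{(\omega_a)}\bigr)\Bigr).
\]
Now set \(\vec x=\vec b_J+L_J(\vec z+\sum_a\vec w_a)\) and \(\vec h_a=L_J(\vec w_a'-\vec w_a)\). By the uniformity from the first step, the \(\vec h_a\) are independent and uniform on \(\Fp^J\) given \(\vec z\) and the \(\vec w_a\). Hence the right-hand side is exactly \(\norm F_{U^t(\Fp^J)}^{2^t}\).

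The last step factors the Gowers norm. The box average of \(F\) on \(\Fp^J\) splits coordinatewise, because \(\vec x\) and each \(\vec h_a\) have independent uniform coordinates and \(F\) is a tensor product. This gives \(\norm F_{U^t}=\prod_{j\in J}\norm{f_j}_{U^t}\), and undoing the normalization yields the claim. The main obstacle is the bookkeeping in the iteration. One must check that after each Cauchy--Schwarz step the not-yet-eliminated \(G_b\) remain \(W_b\)-invariant and bounded by one. One must also check that the final change of variables gives the independent uniform \(\vec x,\vec h_1,\dots,\vec h_t\) required by \cref{def:gowers-norms}. Both points reduce to the surjectivity established in the first step.
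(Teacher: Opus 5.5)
Your proposal is correct and is essentially the paper's proof: group the \(J\)-forms into \(L_J\) and \(F\), use that \(L_J\) maps \(\ker L_{I_a}\) onto \(\Fp^J\), remove one \(G_a\) per Cauchy--Schwarz step, change variables to obtain \(\norm F_{U^t(\Fp^J)}^{2^t}\), and factor it over coordinates. Introducing all the \(\vec w_a\) up front and keeping \(\vec b_J\) inside \(F\) are cosmetic differences, but one phrase should be fixed: \(G_2,\dots,G_t\) do depend on \(\vec w_1\), which is why they are duplicated, and what persists (and what your later steps actually use) is that each surviving \(G_a\)-product is independent of its own \(\vec w_a\).
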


\begin{proof}
The claim is immediate if \(J=\emptyset\) or some \(f_i\) is zero.
By homogeneity, assume \(\norm{f_i}_\infty\le1\) for every \(i\notin J\).
Replacing \(f_i(x)\) by \(f_i(b_i+x)\), which preserves both norms,
reduces to linear forms. Group the forms in \(J\) into one map and
their functions into one function, and do the same for each \(I_a\):
\[
\begin{aligned}
  L_J(\vec z)&=(L_j(\vec z))_{j\in J},&
  F(\vec x)&=\prod_{j\in J}f_j(x_j),\quad \vec x\in\Fp^J;\\
  L_{I_a}(\vec z)&=(L_i(\vec z))_{i\in I_a},&
  G_a(\vec y)&=\prod_{i\in I_a}f_i(y_i),\quad \vec y\in\Fp^{I_a}.
\end{aligned}
\]
Then \(\norm{G_a}_\infty\le1\) and
\(\Lambda_\Psi(\vec f)=\E_{\vec z}[F(L_J\vec z)
\prod_{a=1}^tG_a(L_{I_a}\vec z)]\).
Put \(K_a=\ker L_{I_a}=\bigcap_{i\in I_a}\ker L_i\), with
\(K_a=V\) if \(I_a=\emptyset\).
Independence of the forms indexed by \(I_a\cup J\) says that
\((L_J,L_{I_a}):V\to\Fp^J\times\Fp^{I_a}\) is surjective; hence
\[
  L_J(K_a)=\Fp^J\quad\text{for every }a,
  \qquad L_J(V)=\Fp^J.
\]

For \(0\le s\le t\), let \(\vec h_u,\vec h'_u\) be independent and
uniform in \(K_u\) for \(u\le s\), independent of the uniform
\(\vec z\in V\). Write \(\vec h=(\vec h_u)_{u\le s}\) and
\(\vec h'=(\vec h'_u)_{u\le s}\) for these collections.
For \(\vec\omega\in\{0,1\}^s\), write
\(\vec h_{\vec\omega}=\sum_{u\le s}\vec h_u^{(\omega_u)}\), where
\(\vec h_u^{(0)}=\vec h_u\) and \(\vec h_u^{(1)}=\vec h'_u\). Define
\[
  \Lambda_s=\E_{\vec z,\vec h,\vec h'}\Bigl[
    \prod_{\vec\omega\in\{0,1\}^s}\comp^{|\vec\omega|}
    \Bigl(F\bigl(L_J(\vec z+\vec h_{\vec\omega})\bigr)
      \prod_{a>s}G_a\bigl(L_{I_a}(\vec z+\vec h_{\vec\omega})\bigr)\Bigr)
    \Bigr],
\]
so that \(\Lambda_0=\Lambda_\Psi(\vec f)\). We claim that
\(|\Lambda_s|^2\le\Lambda_{s+1}\) for \(0\le s<t\), where
\(\Lambda_{s+1}\ge0\). Indeed, write the integrand of \(\Lambda_s\) as
\[
  \prod_{\vec\omega}\comp^{|\vec\omega|}
     G_{s+1}\bigl(L_{I_{s+1}}(\vec z+\vec h_{\vec\omega})\bigr)
       \cdot\Phi(\vec z,\vec h,\vec h'),
\]
where \(\Phi\) collects the factors involving \(F\) and \(G_a\) for
\(a>s+1\). Translating \(\vec z\) by an independent uniform
\(\vec h_{s+1}\in K_{s+1}\) preserves its distribution and leaves the
\(G_{s+1}\) factors unchanged. Their product has modulus at most one,
so Cauchy--Schwarz in \((\vec z,\vec h,\vec h')\) gives
\[
\begin{aligned}
  |\Lambda_s|^2
  &\le\E_{\vec z,\vec h,\vec h'}
       \left|\E_{\vec h_{s+1}}
         [\Phi(\vec z+\vec h_{s+1},\vec h,\vec h')]\right|^2\\
  &=\E_{\vec z,\vec h,\vec h',\vec h_{s+1},\vec h'_{s+1}}\Bigl[
       \Phi(\vec z+\vec h_{s+1},\vec h,\vec h')
       \overline{\Phi(\vec z+\vec h'_{s+1},\vec h,\vec h')}\Bigr]
    =\Lambda_{s+1}.
\end{aligned}
\]
The last integrand is exactly that of \(\Lambda_{s+1}\): the new
coordinate \(\omega_{s+1}\) selects \(\vec h_{s+1}\) or
\(\vec h'_{s+1}\) and contributes one conjugation.

Iterating gives
\[
  |\Lambda_\Psi(\vec f)|^{2^t}\le\Lambda_t
  =\E_{\vec z,\vec h,\vec h'}\Bigl[
    \prod_{\vec\omega\in\{0,1\}^t}
    \comp^{|\vec\omega|}F\bigl(L_J(\vec z+\vec h_{\vec\omega})\bigr)
    \Bigr].
\]
Put \(\vec d_u=\vec h'_u-\vec h_u\), so that
\(\vec z+\vec h_{\vec\omega}
=\vec z+\sum_u\vec h_u+\sum_u\omega_u\vec d_u\),
and set \(\vec x=L_J(\vec z+\sum_u\vec h_u)\) and
\(\vec y_u=L_J(\vec d_u)\).
Surjectivity of \(L_J\) makes \(\vec x\) uniform on \(\Fp^J\) and
independent of \((\vec h,\vec h')\). Each \(\vec d_u\) is uniform on
\(K_u\), so
\(L_J(K_u)=\Fp^J\) makes \(\vec y_u\) uniform on \(\Fp^J\), and all
these variables are independent. Therefore
\[
  \Lambda_t=\E_{\vec x,\vec y_1,\ldots,\vec y_t\in\Fp^J}\Bigl[
    \prod_{\vec\omega\in\{0,1\}^t}\comp^{|\vec\omega|}
      F\Bigl(\vec x+\sum_{u=1}^t\omega_u\vec y_u\Bigr)\Bigr]
    =\norm F_{U^t(\Fp^J)}^{2^t}.
\]
Finally, the definition of the Gowers norm factors over coordinates:
\[
\begin{aligned}
  \norm F_{U^t(\Fp^J)}^{2^t}
    =\E_{\vec x,\vec h_1,\ldots,\vec h_t\in\Fp^J}
       \left[\prod_{\vec\omega\in\{0,1\}^t}\prod_{j\in J}
       \comp^{|\vec\omega|}
       f_j\left(x_j+\sum_{u=1}^t\omega_u h_{u,j}\right)\right]
    =\prod_{j\in J}\norm{f_j}_{U^t}^{2^t}.
\end{aligned}
\]
Taking \(2^t\)-th roots and restoring the original scaling proves
the claim.
\end{proof}

For an MDS system, the lemma's independence conditions reduce to
\(|I_a|+|J|\le k-1\). This is the only use of the MDS property in
the following corollary.

\begin{corollary}[Number of \(U^t\) factors]
  \label{cor:iterated-cs-count}
  Let \(\Psi\) be an MDS system, let \(r\ge k\) and \(t\ge2\), and define
  \[
    q_t(r,k)=\floor{\frac{t(k-1)-r}{t-1}}.
  \]
  If \(q_t(r,k)\ge1\), then, for every \(J\subseteq[r]\) of size
  \(q_t(r,k)\),
  \[
    |\Lambda_\Psi(\vec f)|
      \le\prod_{j\in J}\norm{f_j}_{U^t}
          \prod_{i\in[r]\setminus J}\norm{f_i}_\infty.
  \]
\end{corollary}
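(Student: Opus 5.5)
The plan is to apply \cref{lem:iterated-cs} directly, with an explicit partition of \([r]\). Fix \(J\subseteq[r]\) with \(|J|=q:=q_t(r,k)\). We need to split the remaining \(r-q\) indices into \(t\) disjoint sets \(I_1,\ldots,I_t\) such that each \(I_a\cup J\) indexes an independent family. For an MDS system, any family of at most \(k-1\) forms is independent. So it suffices to have \(|I_a|\le k-1-q\) for every \(a\), which is exactly the reduction to \(|I_a|+|J|\le k-1\) noted before the corollary.

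The key step is a counting check: we must be able to distribute \(r-q\) indices into \(t\) blocks of size at most \(k-1-q\). This is possible if and only if
\[
  r-q\le t(k-1-q),
\]
that is, \((t-1)q\le t(k-1)-r\). This is precisely the condition \(q\le\frac{t(k-1)-r}{t-1}\), and it holds by the definition of \(q\) as the floor.

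We also need \(k-1-q\ge0\), so that the block-size bound makes sense. Since \(r\ge k\),
\[
  q\le\frac{t(k-1)-k}{t-1}<k-1 .
\]
Hence \(k-1-q\ge1\), and also \(|J|=q\le k-1\), so \(J\) by itself is independent. With these facts, we partition \([r]\setminus J\) greedily into \(t\) consecutive blocks of sizes at most \(k-1-q\); some blocks may be empty. Empty blocks are allowed, since the lemma takes \(K_a=V\) when \(I_a=\emptyset\), and the independence condition for such an \(a\) is just independence of \(J\).

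Applying \cref{lem:iterated-cs} with \(t\ge2\) and this partition then gives the stated bound. The hypothesis \(q\ge1\) only ensures that \(J\) is nonempty, so that the bound is nontrivial. There is no real obstacle here. The only care needed is the inequality bookkeeping with the floor, and confirming that empty \(I_a\) are allowed in the lemma.
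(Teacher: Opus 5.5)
Your proposal is correct and follows the paper's proof essentially verbatim. You use the floor to get \(r-q\le t(k-1-q)\) and \(r\ge k\) to get \(q\le k-2\); you split \([r]\setminus J\) into \(t\) blocks of size at most \(k-1-q\), so each \(I_a\cup J\) has at most \(k-1\) forms and is independent by the MDS property; and you then apply \cref{lem:iterated-cs}. Your remarks on empty blocks and on the role of \(q\ge1\) are accurate and only make explicit what the paper leaves implicit.
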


\begin{proof}
Put \(q=q_t(r,k)\). Since \(r\ge k\), we have \(q\le k-2<r\).
The definition of \(q\) gives \(r-q\le t(k-1-q)\). Thus we can
partition the indices outside \(J\) into \(t\) groups \(I_a\), each
of size at most \(k-1-q\). Every \(I_a\cup J\) has size at most
\(k-1\), so \cref{lem:iterated-cs} applies.
\end{proof}

If the functions are bounded by one and their \(U^t\) norms are at
most \(\rho<1\), this gives \(\rho^{q_t(r,k)}\).
For structured leakage, we allow \(t\) to grow with \(n\).
In \cref{sec:main-result}, we will reuse the bound with \(t=3\)
for the below-half theorem.

\subsection{From Gowers norms to leakage resilience}
\label{sec:gowers-leakage-transfer}

Let \(\tau_i\colon\Fp\to\{0,1\}^m\), and put
\(f_{i,b}(x)=\one[\tau_i(x)=b]-\nu_i(b)\), as in
\cref{lem:output-indicator-components}. The \(L^1\) bound of
\cref{prop:leakage-decomposition} with \(r_0=k-1\) reduces
the leakage distance to averages of products of these functions.

\begin{proposition}[Finite-parameter leakage bound]
\label{prop:gowers-leakage-transfer}
Let \(C\) be an \([n,k-1,n-k+2]_p\) MDS code, and let \(t\ge2\) be an
integer with \(q_t(n,k)\ge1\). If \(\norm{f_{i,b}}_{U^t}\le\rho\le1\)
for all \(i\in[n]\) and \(b\in\{0,1\}^m\), then
\[
  \SD\bigl(\vec\tau(\vec U_C),\vec\tau(\vec U_{\Fp^n})\bigr)
  \le\frac12(1+2^m)^n\rho^{q_t(n,k)} .
\]
For one-bit maps, if the functions \(f_i=2(\tau_i-\E_x[\tau_i(x)])\)
satisfy \(\norm{f_i}_{U^t}\le\rho\le1\), the distance is at most
\(3^n\rho^{q_t(n,k)}\).
\end{proposition}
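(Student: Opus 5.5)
The plan is to use the $L^1$ cutoff $r_0=k-1$ of \cref{prop:leakage-decomposition}, in the form of \cref{prop:full-l2-bound}. This bounds the leakage distance by $\frac12\sum_{|A|\ge k}\|\T(f_C^{=A})\|_1$. By \cref{lem:output-indicator-components}, each summand equals
\[
\sum_{\vec b_A}\Bigl|\E_{\vec X\sim U_C}\Bigl[\prod_{i\in A}f_{i,b_i}(X_i)\Bigr]\Bigr|.
\]
Each expectation is an average $\Lambda_\Psi$ along the subsystem $\Psi=(L_i)_{i\in A}$ of the coordinate forms of $C$. These forms live on $V=\Fp^{k-1}$, since $C$ has dimension $k-1$. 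Any subfamily of an MDS system is MDS, so \cref{cor:iterated-cs-count} applies with $r=|A|\ge k$.

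For a fixed support $A$ with $|A|=r$, note that $q_t(r,k)$ is nonincreasing in $r$, so $q_t(r,k)\ge q_t(n,k)\ge1$. Choose $J\subseteq A$ of size $q_t(r,k)$. Use $\|f_{i,b}\|_\infty\le1$ and $\|f_{i,b}\|_{U^t}\le\rho\le1$. Each average is then at most $\rho^{q_t(r,k)}\le\rho^{q_t(n,k)}$, where the last step uses $\rho\le1$.

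Now count the terms. For a fixed $A$ there are at most $2^{m|A|}$ choices of $\vec b_A$. Summing over all $A$ gives
\[
\sum_{A}2^{m|A|}=(1+2^m)^n,
\]
and together with the prefactor $\frac12$ this yields the first bound.

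For one-bit maps, use \cref{lem:binary-components} instead: $\|\T(f_C^{=A})\|_1=|\Gamma_C(\vec f_A)|$ with $\|f_i\|_\infty\le2$. The corollary bounds this by $\rho^{q}\,2^{|A|-q}\le\rho^{q_t(n,k)}2^{|A|}$. Summing over $A$ gives $\frac12\sum_A2^{|A|}=\frac12 3^n\le3^n$.

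There is no real obstacle. The only points to check are that the MDS hypothesis in the corollary refers to forms on $\Fp^{k-1}$, which matches dimension $k-1$, and that monotonicity in $r$ together with $\rho\le1$ lets the single exponent $q_t(n,k)$ serve for every support size.
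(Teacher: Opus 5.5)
Your proof is correct and follows the paper's argument: you take the $L^1$ cutoff $r_0=k-1$, use the output-indicator formula of \cref{lem:output-indicator-components}, apply \cref{cor:iterated-cs-count} to each support using $q_t(|A|,k)\ge q_t(n,k)$ and $\rho\le1$, and count $\sum_A2^{m|A|}\le(1+2^m)^n$. The only difference is cosmetic and lies in the one-bit case. You apply \cref{lem:binary-components} directly with $\|f_i\|_\infty\le2$, whereas the paper substitutes $f_{i,1}=f_i/2$ and $f_{i,0}=-f_i/2$ into the first bound; both routes give the same per-support estimate $2^{|A|-q}\rho^{q}$ with $q=q_t(|A|,k)$.
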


\begin{proof}
The dual of \(C\) has minimum distance \(k\), so
\cref{prop:leakage-decomposition,lem:output-indicator-components} give
\[
  \SD\bigl(\vec\tau(\vec U_C),\vec\tau(\vec U_{\Fp^n})\bigr)
  \le\frac12\sum_{|A|\ge k}\sum_{\vec b_A\in(\{0,1\}^m)^A}
       \left|\E_{\vec X\sim U_C}
       \left[\prod_{i\in A}f_{i,b_i}(X_i)\right]\right| .
\]
Fix \(A\) with \(|A|=r\ge k\). The coordinate forms of a linear
parametrization of \(C\), restricted to \(A\), form an MDS system,
since any \(k-1\) of them are independent,
and \(\norm{f_{i,b_i}}_\infty\le1\). Since
\(q_t(r,k)\ge q_t(n,k)\), \cref{cor:iterated-cs-count} bounds each
average by \(\rho^{q_t(r,k)}\le\rho^{q_t(n,k)}\). Hence
\[
  \SD\bigl(\vec\tau(\vec U_C),\vec\tau(\vec U_{\Fp^n})\bigr)
  \le\frac12\rho^{q_t(n,k)}\sum_{r=k}^n\binom nr2^{mr}
  \le\frac12(1+2^m)^n\rho^{q_t(n,k)}.
\]
For one-bit maps, \(f_{i,1}=f_i/2\) and \(f_{i,0}=-f_i/2\), so the
first bound gives \(\frac12\,3^n(\rho/2)^{q_t(n,k)}\le
3^n\rho^{q_t(n,k)}\), which completes the proof.
\end{proof}

At constant rate, fix an integer \(t\ge2\) with \(tc>1\) and put
\(\beta=(tc-1)/(t-1)>0\). For \(k\ge cn\), the number of retained
Gowers factors satisfies
\begin{equation}
\label{eq:constant-rate-survivors}
  q_t(n,k)=\floor{\frac{t(k-1)-n}{t-1}}
  \ge\beta n-\frac t{t-1}-1\ge\beta n-3 .
\end{equation}
Thus local norms of size \(p^{-\Omega_c(1)}\) give
\(\exp(-\Omega_c(n\log p))\) leakage distance even when
\(m\) is a sufficiently small constant times \(\log_2p\).

\subsection{Quadratic-residue leakage}
\label{sec:quadratic-leakage}

Let \(\chi\colon\Fp\to\{-1,0,1\}\) be the quadratic character, with
\(\chi(0)=0\). The character-sum estimate of~\cite[Lemma~2]{Nguyen24}
gives \(\norm\chi_{U^t}^{2^t}\le2^t/\sqrt p\). For
\(\tau=\tau_{\mathrm{QR}}\) we have \(2\E_x[\tau(x)]=1+1/p\), so the
centered function is \(f=\chi+\one_{\{0\}}-1/p\). The Gowers product
of \(\one_{\{0\}}\) is one exactly when \(x=h_1=\dots=h_t=0\), so
\(\norm{\one_{\{0\}}}_{U^t}=p^{-(t+1)/2^t}\), and the triangle
inequality gives
\[
  \norm f_{U^t}
  \le\Bigl(\frac{2^t}{\sqrt p}\Bigr)^{1/2^t}+p^{-(t+1)/2^t}+\frac1p
  \le4p^{-1/2^{t+1}} .
\]
The same bound holds if the output at zero is changed to \(0\), where
the centered function is \(\chi-\one_{\{0\}}+1/p\).

\begin{proof}[Proof of \cref{thm:structured-leakage}, part~1]
Fix \(t\) and \(\beta\) as in \eqref{eq:constant-rate-survivors}, and
write \(q=q_t(n,k)\). For
\(n\) sufficiently large, \(q\ge1\) and \(4p^{-1/2^{t+1}}\le1\) for
every \(p>n\), and the affine changes of input preserve the local
bound by \cref{rem:shifted-qr}. \Cref{prop:gowers-leakage-transfer}
bounds the distance by
\[
  3^n\bigl(4p^{-1/2^{t+1}}\bigr)^q
  =\exp\Bigl(n\log3+q\log4-\frac{q}{2^{t+1}}\log p\Bigr).
\]
Since \(\beta n-3\le q\le n\) and \(p>n\), the exponent is at most
\(-a_1n\log p\) with \(a_1=\beta/2^{t+3}\), for \(n\) sufficiently
large depending only on \(c\). The bound is uniform over the code and
the affine changes of input.
\end{proof}

\subsection{Almost all logarithmic-bit leakage functions}
\label{sec:random-bit-leakage}

For a random map \(\tau\colon\Fp\to\{0,1\}^m\), each output indicator
minus \(2^{-m}\) has independent centered values. The cube-collision
argument for random signs in~\cite[Lemma~4]{Nguyen24} therefore applies
to these indicators as well. Subtracting their empirical means costs
at most a factor two in the Gowers norm; no balance assumption on the
output probabilities is needed.

\begin{lemma}[Gowers norm of a random leakage function]
\label{lem:random-gowers-concentration}
Let \(t\ge2\), \(m\ge1\), and \(0<\rho\le1\). For a uniformly random
map \(\tau\colon\Fp\to\{0,1\}^m\) and a fixed output \(b\), put
\(f_b(x)=\one[\tau(x)=b]-\Prb_{u\sim U_{\Fp}}[\tau(u)=b]\).
If \(p(\rho/2)^{2^t}\ge2\binom{2^t}2\), then
\[
  \Prb_\tau\bigl[\norm{f_b}_{U^t}>\rho\bigr]
  \le\exp\left(-\frac{p(\rho/2)^{2^{t+1}}}{2^{2t+3}}\right).
\]
\end{lemma}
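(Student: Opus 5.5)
We first reduce to the uncentered-by-empirical-mean function. Put \(\lambda=2^{-m}\), \(e_b(x)=\one[\tau(x)=b]-\lambda\) and \(\bar e=\E_x e_b(x)\), so that \(f_b=e_b-\bar e\). The values \(e_b(x)\), \(x\in\Fp\), are independent, mean zero and bounded by one. By the triangle inequality for \(\norm\cdot_{U^t}\) (\(t\ge2\)), and since a constant has \(U^t\) norm equal to its modulus with \(|\bar e|=\norm{e_b}_{U^1}\le\norm{e_b}_{U^t}\), we get \(\norm{f_b}_{U^t}\le2\norm{e_b}_{U^t}\). It therefore suffices to show \(\Prb[\norm{e_b}_{U^t}>\rho/2]\) is at most the stated bound, i.e. to control \(Z:=\norm{e_b}_{U^t}^{2^t}\) exceeding \(s:=(\rho/2)^{2^t}\).

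Next we bound the mean and then concentrate. Write \(Z=\E_{x,\vec h}\prod_{\vec\omega}e_b(x+\vec\omega\cdot\vec h)\); the conjugations are irrelevant since \(e_b\) is real. Call a cube \((x,\vec h)\) degenerate if two of its \(2^t\) vertices \(x+\vec\omega\cdot\vec h\) coincide. For a nondegenerate cube the vertices are distinct, so independence and mean zero make its expectation vanish. For a pair \(\vec\omega\ne\vec\omega'\), the vertices coincide iff \((\vec\omega-\vec\omega')\cdot\vec h=0\), which is one nontrivial linear condition and so has probability \(1/p\). A union bound over pairs gives that degenerate cubes have density at most \(\binom{2^t}2/p\), and since \(|e_b|\le1\) we get \(\E Z\le\binom{2^t}2/p\le s/2\) by the hypothesis \(ps\ge2\binom{2^t}2\).

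For concentration we apply McDiarmid's bounded-differences inequality to \(Z\) as a function of the \(p\) independent values \(\tau(x)\). Changing \(\tau(y)\) for one \(y\) changes \(e_b(y)\) by at most \(1\), and hence changes every cube containing \(y\) as a vertex by at most \(2\), because the other factors are bounded by one. The fraction of cubes having \(y\) as some vertex is at most \(2^t/p\), since for each \(\vec\omega\) the vertex \(x+\vec\omega\cdot\vec h\) is uniform. Thus the bounded differences are \(c_y\le2^{t+1}/p\), with \(\sum_yc_y^2\le2^{2t+2}/p\). McDiarmid gives
\[
\Prb\bigl[Z-\E Z>s/2\bigr]\le\exp\Bigl(-\frac{2(s/2)^2}{2^{2t+2}/p}\Bigr)=\exp\Bigl(-\frac{p s^2}{2^{2t+3}}\Bigr),
\]
which is exactly \(\exp(-p(\rho/2)^{2^{t+1}}/2^{2t+3})\). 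Combined with \(\E Z\le s/2\), this yields \(\Prb[Z>s]\le\) the claimed bound.

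The main obstacle is keeping the constants exact. The bounded-difference count must produce exactly \(2^{2t+3}\), and it is tight only if the influence bound \(2\cdot2^t/p\) is used without slack. The degenerate-cube count must fit under \(s/2\) via the stated hypothesis. If the influence bound turns out to be slightly off, one would need to recount, for instance by bounding the multiplicity of \(y\) within degenerate cubes separately.
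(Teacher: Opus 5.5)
Your proposal is correct and matches the paper's proof step for step: the same reduction \(\norm{f_b}_{U^t}\le2\norm{e_b}_{U^t}\), the same degenerate-cube bound \(\E Z\le\binom{2^t}{2}/p\le s/2\), and the same bounded-differences step with differences \(2^{t+1}/p\), which gives exactly the exponent \(ps^2/2^{2t+3}\). One small point on justification: the per-cube change of at most \(2\) holds because every cube product lies in \([-1,1]\) both before and after the change. That reason also covers cubes in which \(y\) occurs as a repeated vertex, which your stated reason (the other factors are bounded by one) does not.
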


\begin{proof}
Put \(\xi_b(x)=\one[\tau(x)=b]-2^{-m}\) and
\(T(\tau)=\norm{\xi_b}_{U^t}^{2^t}\). If all \(2^t\) vertices of a
Gowers cube are distinct, independence and \(\E_\tau[\xi_b(x)]=0\)
make its expected product zero. Each pair of vertices coincides with
probability \(1/p\), and \(|\xi_b|\le1\), so
\[
  \E_\tau[T]\le\frac{\binom{2^t}2}{p},\qquad
  \norm{f_b}_{U^t}
  =\norm{\xi_b-\E_x[\xi_b(x)]}_{U^t}
  \le\norm{\xi_b}_{U^t}+|\E_x[\xi_b(x)]|
  \le2\norm{\xi_b}_{U^t}.
\]
Changing one value \(\tau(a)\) affects at most a \(2^t/p\) fraction
of the cube products, each by at most two. Thus \(T\) changes by at
most \(2^{t+1}/p\), and bounded differences~\cite{McDiarmid89} gives
\[
  \Prb_\tau[T-\E_\tau[T]\ge u]
  \le\exp\left(-\frac{pu^2}{2^{2t+1}}\right).
\]
The hypothesis gives \(\E_\tau[T]\le\frac12(\rho/2)^{2^t}\), whereas
\(\norm{f_b}_{U^t}>\rho\) implies \(T>(\rho/2)^{2^t}\). Taking
\(u=\frac12(\rho/2)^{2^t}\) proves the claim.
\end{proof}

\begin{proof}[Proof of \cref{thm:structured-leakage}, part~2]
Fix \(t\) and \(\beta\) as in \eqref{eq:constant-rate-survivors}, and
choose
\[
  \rho=p^{-1/2^{t+2}},\qquad
  \eta_c=\frac{\beta}{2^{t+4}},\qquad
  1\le m\le\eta_c\log_2p.
\]
For large \(p\), the hypothesis of
\cref{lem:random-gowers-concentration} holds because
\(p(\rho/2)^{2^t}=2^{-2^t}p^{3/4}\). Its conclusion is
\(\Prb[\norm{f_{i,b}}_{U^t}>\rho]\le\exp(-b_0\sqrt p)\) for a
constant \(b_0>0\) depending only on \(t\). A union bound over the
\(n2^m\) pairs \((i,b)\) gives
\[
  \Prb[\norm{f_{i,b}}_{U^t}>\rho\text{ for some }(i,b)]
  \le n2^m\exp(-b_0\sqrt p)
  \le p^{1+\eta_c}\exp(-b_0\sqrt p)
  \le\exp(-b_0\sqrt p/2).
\]
On the complementary event, take \(n\) large enough that
\(q_t(n,k)\ge3\beta n/4\). Since
\(\log(1+2^m)\le\log2+\eta_c\log p\),
\cref{prop:gowers-leakage-transfer} gives
\[
\begin{aligned}
  \SD\bigl(\vec\tau(\vec U_C),\vec\tau(\vec U_{\Fp^n})\bigr)
  &\le\frac12\exp\left(n\log(1+2^m)
                   -\frac{q_t(n,k)}{2^{t+2}}\log p\right)\\
  &\le\frac12\exp\left(n\log2-2\eta_cn\log p\right)
   \le\frac12\exp(-\eta_cn\log p).
\end{aligned}
\]
The event depends only on the maps, so this holds simultaneously for
all codes in the theorem and, by \cref{rem:shifted-qr}, after all
coordinatewise affine changes of input. Take \(b_c=b_0/2\) and
\(a_c=\min\{a_1/2,\eta_c\}\), increasing \(n_0(c)\) if needed.
The factor two in the Shamir reduction is absorbed by the displayed
factor \(1/2\) here and by the slack in \(a_1\) for part~1.
\end{proof}

For fixed \(m\), the same proof gives the stronger success probability
\(1-\exp(-\Omega_{c,m}(p))\) with leakage distance
\(\exp(-\Omega_{c,m}(n))\): choose a constant \(\rho>0\) such that
\(\beta\log(1/\rho)>\log(1+2^m)\) and apply the same two bounds.

\subsection{Field size for square-root thresholds}
\label{sec:sqrt-threshold-field}

When \(k=\Theta(\sqrt n)\), the Cauchy--Schwarz order \(t\) must grow
with \(n\), so we use the finite-parameter bound rather than the
constant-rate theorem. The next corollary gives an explicit sufficient
field size for leakage error \(\exp(-n)\) in both applications.

\begin{corollary}[An explicit field-size bound]
\label{cor:sqrt-threshold-field}
Let \(3\le k\le n\) and \(t=\ceil{(n-1)/(k-2)}\). If
\(\log p\ge8n\,2^t\), then for every
\(C=\GRS_{\vec\alpha,\vec v}(n,k-1)\),
\(\SD(\vec\tau(\vec U_C),\vec\tau(\vec U_{\Fp^n}))\le\exp(-n)\) for
quadratic-residue leakage with the affine input changes of
\cref{thm:structured-leakage}; and for independent uniformly random
Boolean maps, with probability at least \(1-\exp(-n)\), the same bound
holds simultaneously for every such code. Both conclusions hold for
Shamir leakage between every pair of secrets with the same error.
\end{corollary}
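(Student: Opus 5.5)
The plan is to apply \cref{prop:gowers-leakage-transfer} with the Cauchy--Schwarz order \(t=\ceil{(n-1)/(k-2)}\) and to check that the hypothesis \(\log p\ge 8n\,2^t\) absorbs the \(3^n\) counting factor even though only one Gowers factor survives. The first step is to verify \(q_t(n,k)\ge1\). By the choice of \(t\) we have \(t(k-2)\ge n-1\), hence \(t(k-1)\ge n-1+t\), so \(t(k-1)-n\ge t-1\) and \(q_t(n,k)=\floor{(t(k-1)-n)/(t-1)}\ge1\). Since every local norm bound used below is at most \(1\), \(\rho^{q_t(n,k)}\le\rho\), so it suffices to work with a single factor \(\rho\).

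\emph{Quadratic-residue leakage.} The centered function \(2(\tau_{\mathrm{QR}}-\E\tau_{\mathrm{QR}})=\chi+\one_{\{0\}}-1/p\) satisfies \(\norm{f}_{U^t}\le4p^{-1/2^{t+1}}\) by the estimate of \cref{sec:quadratic-leakage}. By \cref{rem:shifted-qr}, the same bound holds after affine changes of input, and also when the output at \(0\) is flipped. The one-bit form of \cref{prop:gowers-leakage-transfer} then bounds the distance by \(3^n\cdot4p^{-1/2^{t+1}}\). Taking logarithms, we need \(\log p/2^{t+1}\ge n\log3+\log4+n\) (plus \(\log2\) for the Shamir factor). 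The hypothesis gives \(\log p/2^{t+1}\ge4n\), and \(4n\ge(1+\log3)n+\log8\) holds for \(n\ge3\), which is automatic since \(k\ge3\).

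\emph{Random Boolean maps.} Here I would apply the general form of \cref{prop:gowers-leakage-transfer} with \(m=1\), which bounds the distance by \(\frac12 3^n\rho\) once \(\norm{f_{i,b}}_{U^t}\le\rho\) for all \(i\) and \(b\); note \(f_{i,0}=-f_{i,1}\), so only \(n\) functions need checking. The natural choice \(\rho=p^{-1/2^{t+2}}\) from the constant-rate proof is slightly too weak here: it gives decay \(2n\) in the exponent, against a cost of about \((1+\log3)n\). So I would instead set \(\rho=2p^{-1/(3\cdot2^t)}\), so that \((\rho/2)^{2^t}=p^{-1/3}\).

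With this choice, the hypothesis \(p(\rho/2)^{2^t}=p^{2/3}\ge2\binom{2^t}{2}\) of \cref{lem:random-gowers-concentration} holds easily. Its failure probability is at most \(\exp(-p^{1/3}/2^{2t+3})\). A union bound over the \(n\) coordinates gives failure probability at most \(n\exp(-p^{1/3}/2^{2t+3})\le\exp(-n)\), since \(p^{1/3}\ge e^{8n2^t/3}\) dwarfs \(2^{2t+3}(n+\log n)\).

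On the complementary event, the distance is at most \(3^np^{-1/(3\cdot2^t)}\). The hypothesis gives \(\log p/(3\cdot2^t)\ge8n/3\), and \(8n/3\ge(1+\log3)n+\log2\) for \(n\ge3\); the extra \(\log 2\) covers the Shamir factor. The event depends only on the maps, so the bound holds simultaneously for all codes. For Shamir sharing, \cref{lem:shamir-from-code} together with translation closure gives the same error, with its factor two absorbed by the slack above.

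The only delicate point is this constant bookkeeping in the random case: one must pick \(\rho\) so that the retained single factor still beats \(3^n e^{n}\), while the concentration hypothesis of \cref{lem:random-gowers-concentration} continues to hold.
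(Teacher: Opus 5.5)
Your proof is correct. The quadratic-residue half is the paper's argument: the paper phrases it as \(4p^{-1/2^{t+1}}\le\rho:=e^{-n}/(2\cdot3^n)\), which is the inequality you check.

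The random-map half takes a different route. The paper uses only the first-moment part of the cube-collision argument, \(\E_\tau\bigl[\norm{2\tau-1}_{U^t}^{2^t}\bigr]\le\binom{2^t}2/p\). It combines this with Markov's inequality at the same \(p\)-independent threshold \(\rho=e^{-n}/(2\cdot3^n)\) used for quadratic residues. The hypothesis \(\log p\ge8n2^t\) is then what makes \(n2^{2^t}\binom{2^t}2/(p\rho^{2^t})\le e^{-n}\).

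You instead use the full bounded-differences conclusion of \cref{lem:random-gowers-concentration}, with the \(p\)-dependent threshold \(\rho=2p^{-1/(3\cdot2^t)}\). Your bookkeeping checks out: \(\rho\le1\), the side condition \(p^{2/3}\ge2\binom{2^t}2\), the failure probability \(\exp(-p^{1/3}/2^{2t+3})\), and \(8n/3\ge(1+\log3)n+\log2\) for \(n\ge3\). Your remark that the constant-rate choice \(p^{-1/2^{t+2}}\) falls just short, giving \(2n\) against roughly \((1+\log3)n\), is also accurate.

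The paper's route is more elementary and uses one threshold for both leakage classes. Yours gives a failure probability doubly exponentially small in \(n\), far beyond what the statement needs, but you must tune \(\rho\) to balance the distance bound against the lemma's hypothesis.

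One small omission: \cref{prop:gowers-leakage-transfer} and the triangle inequality inside \cref{lem:random-gowers-concentration} both need \(t\ge2\). This holds because \(k\le n\) gives \((n-1)/(k-2)>1\), and you should say so.
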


\begin{proof}
By the choice of \(t\), \(t(k-2)\ge n-1\), i.e.\ \(t(k-1)-n\ge t-1\),
so \(q=q_t(n,k)\ge1\). Put \(\rho=\exp(-n)/(2\cdot3^n)\) and
\(f_i=2(\tau_i-\E_x[\tau_i(x)])\). If
\(\norm{f_i}_{U^t}\le\rho\) for every \(i\), then
\cref{prop:gowers-leakage-transfer} gives
\(3^n\rho^q\le3^n\rho=\frac12\exp(-n)\), and the factor \(\frac12\)
yields the Shamir error through \cref{lem:shamir-from-code}.

For quadratic-residue leakage, \cref{sec:quadratic-leakage} gives
\(\norm{f_i}_{U^t}\le4p^{-1/2^{t+1}}\le\rho\) whenever
\(\log p\ge2^{t+1}\log(4/\rho)=2^t\bigl(2\log8+2(1+\log3)n\bigr)\),
which follows from \(\log p\ge8n2^t\) since \(n\ge3\).

For a random Boolean map, \(2\tau-1\) has independent uniform signs.
The same cube-collision argument as in
\cref{lem:random-gowers-concentration} gives
\(\E_\tau[\norm{2\tau-1}_{U^t}^{2^t}]\le\binom{2^t}2/p\), and
\(\norm{2(\tau-\E_x[\tau(x)])}_{U^t}\le2\norm{2\tau-1}_{U^t}\).
Markov's inequality and a union bound therefore give
\[
  \Prb\bigl[\norm{f_i}_{U^t}>\rho\text{ for some }i\in[n]\bigr]
  \le\frac{n\,2^{2^t}\binom{2^t}2}{p\,\rho^{2^t}}
  \le\exp(-n),
\]
the last step because
\(2^t\log(2/\rho)+n+\log n+\log\binom{2^t}2
 \le2^t\bigl((1+\log3)n+2\log2\bigr)+n+\log n+2t\log2
 \le8n2^t\le\log p\),
using \(2\le t\le n-1\) and \(n+\log n+2t\log2\le4n\le n2^t\). The
event depends only on the maps and is preserved by affine input
changes, so the bound holds simultaneously for all codes and input
changes in the statement.
\end{proof}

For \(k=\Theta(\sqrt n)\) we have \(t=n/k+O(1)\), so the condition
reads \(\log p\ge Cn\,2^{n/k}\) for a constant \(C\). Concretely, for
\(n=k^2\) with \(k\ge5\), one gets \(t=k+3\) and \(q=1\),
so \(\log p\ge64n\,2^{\sqrt n}\) suffices; a single Gowers factor
carries the bound. This field size is doubly exponential in
\(\sqrt n\); we do not claim it is necessary.

\paragraph{A contrasting attack at square-root thresholds.}
Kopparty's algorithm for recovering a monic squarefree polynomial from
noisy quadratic-character values~\cite{Kopparty26} attacks Shamir
sharing when \(n=p-1\) and \(k\) is at most a small constant times
\(\sqrt n\): party \(i\) leaks
\(\tau_{\mathrm{QR}}(P(\alpha_i)+\alpha_i^k)\), which is the character
value of the monic polynomial \(Q(X)=X^k+P(X)\) at \(\alpha_i\) up to
at most \(k+1\) discrepancies (the roots of \(Q\) and the missing
point \(0\)). The decoder tolerates these in this degree range and
recovers \(Q\), hence the secret \(Q(0)\), whenever \(Q\) is
squarefree. For each fixed secret \(s\), the discriminant of \(Q\) is a
nonzero polynomial of degree at most \(2k-2\) in the random
coefficients of \(P\) (it is nonzero at \(X^k+s\) for \(s\ne0\) and at
\(X^k+X\) for \(s=0\), both squarefree since \(k<p\)), so \(Q\) fails
to be squarefree with probability at most \((2k-2)/p\). The attack
uses only the fixed input shifts covered by
\cref{thm:structured-leakage}, not additional leaked bits. At every
positive constant rate our theorem gives security even for \(p=n+1\);
at square-root thresholds the corollary above gives security over
sufficiently large fields. The intermediate range is open.

\section{Leakage resilience below half}
\label{sec:main-result}

\begin{theorem}[GRS codes below half]
\label{thm:grs-below-half}
There are absolute constants \(c_{\mathrm{bh}}<1/2\), \(\gamma>0\),
and \(n_0\) such that, for \(n\ge n_0\) and every prime \(p>n\),
every GRS code over \(\Fp\) of length \(n\) and rate at least
\(c_{\mathrm{bh}}\) is \(e^{-\gamma n}\)-resilient to one-bit local
leakage.
\end{theorem}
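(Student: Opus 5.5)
We would prove \cref{thm:grs-below-half} through \cref{prop:master-inequality}, with the cutoff \(r_0=\lfloor(1-\sigma)n\rfloor\) for a small absolute \(\sigma>0\). Write \(C=\GRS_{\vec\alpha,\vec v}(n,k-1)\) with \(k\ge c_{\mathrm{bh}}n\), where \(c_{\mathrm{bh}}=1/2-\epsilon\) and \(\epsilon\ll\sigma\) is chosen last. The goal is to show that both terms of \eqref{eq:master-inequality} are \(e^{-\Omega(n)}\), uniformly in the one-bit maps.

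\emph{Small supports} (\(k\le|A|\le r_0\)). We refine \cref{lem:component-list-recovery} so that it exploits the zeros of dual words. Fix a support size \(r\). A dual word \(\vec y\) with \(|\supp\vec y|=r\) has \(n-r\ge\sigma n\) coordinates equal to \(0\). We use input lists \(S_i=V_i\cup\{0\}\), or more simply treat each zero as an agreement with the list \(\{0\}\) through a modified weight-to-list argument. We then need only \(j=n-R-(n-r)=r-R\) Fourier agreements on \(I_A\), with \(R=\lfloor k-1-\zeta n\rfloor\). The number of decay factors becomes \(k-1-j=k-1-r+R\).

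Summing over all supports of size \(r\) together gives
\[
\sum_{|A|=r}|\Gamma_C(\vec g_A)|^2\le 2^{-(k-1-r+R)}\,\List_{C^\perp}(R,2)
\]
(or list size \(1\) if the zero is handled separately). The list factor is \(n^{O_\zeta(1)}\) by \cref{thm:rs-list-decoding}. At \(k\approx n/2\) and \(r\le(1-\sigma)n\), the exponent is about \(2k-r-\zeta n\ge(\sigma-2\epsilon-\zeta)n\). This is linear when \(\zeta,\epsilon\ll\sigma\), and summing over the at most \(n\) values of \(r\) costs only a factor \(n\). The bookkeeping must also confirm that the list-recovery radius matches: agreements counted from zeros together with \(j\) retained coordinates should give at least \(n-R\) agreements.

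\emph{Large supports} (\(|A|>r_0\)). Here \cref{lem:binary-components} makes each term a single average \(|\Gamma_C(\vec f_A)|\). There are at most \(\sum_{r>r_0}\binom nr\le e^{h(\sigma)n}\) such supports, so each term must be bounded by \(e^{-c_0n}\) with \(c_0\) absolute and \(h(\sigma)<c_0\). For this we use \cref{cor:iterated-cs-count} with \(t=3\). Since \(q_3(r,k)\ge(3k-r)/2-O(1)\ge(1/4-3\epsilon/2)n-O(1)\) even at \(r=n\), a linear number of \(U^3\) factors survives.

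We then need an absolute bound \(\|f_i\|_{U^3}\le\kappa_0\cdot(\text{normalization})<1\) for two-valued mean-zero functions. We would prove it by noting that equality in \(\|f\|_{U^3}\le\|f\|_\infty\) forces \(f\) to be a quadratic phase times a constant, which a two-valued real function on \(\Fp\) with mean zero cannot be, and then using a compactness or quantitative argument over the balance parameter \(\theta_i\).

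The \(\|f_i\|_\infty\le2\) factors on the other coordinates threaten this bound. We therefore split by the bias. If many \(\theta_i\) are far from \(1/2\), the variances \(4\theta_i(1-\theta_i)\) are small, and Cauchy--Schwarz together with MDS independence of \(k-1\) coordinates gives decay \(\prod\sqrt{4\theta_i(1-\theta_i)}\) directly. Otherwise \(\|f_i\|_\infty\approx1\) for most \(i\), and the \(U^3\) product wins. We would combine the two cases with a weighted choice of \(J\) among the least-biased coordinates.

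\emph{Main obstacle.} The main obstacle is this large-support step: obtaining a uniform, absolute \(\kappa_0<1\) and controlling the \(\|f_i\|_\infty\) losses sharply enough that the decay constant \(c_0\) beats \(h(\sigma)\). Once \(c_0\) is fixed, we choose \(\sigma\) with \(h(\sigma)<c_0/2\), then \(\zeta\) and \(\epsilon\) small relative to \(\sigma\), and take \(n\) large. These choices yield \(c_{\mathrm{bh}}<1/2\) and \(\gamma>0\).
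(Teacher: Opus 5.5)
\textbf{The gap: a uniform $U^3$ bound.} The step you flag as the main obstacle is a real gap. Your large-support bound needs an absolute constant $\kappa_0<1$ with $\norm{f_i}_{U^3}\le\kappa_0$ for every centered bit $f_i=2(\tau_i-\E\tau_i)$, and the route you sketch does not give one.

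The equality case of $\norm f_{U^3}\le\norm f_\infty$ yields strict inequality for each fixed prime and each map. But the theorem puts no upper bound on $p$, so you need a bound that is uniform over infinitely many groups $\Fp$ and over all sets $\tau^{-1}(1)$. Compactness in the single parameter $\theta_i$ does not address this. Without a quantitative stability statement, it cannot rule out the per-prime maxima tending to $\norm f_\infty$ as $p\to\infty$. With no uniform $\kappa_0$ there is no decay constant $c_0$, hence no admissible $\sigma$, and the argument does not close.

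The missing idea is to use real-valuedness quantitatively. For real $q$ and odd $p$, we have $|\widehat q(a)|=|\widehat q(-a)|$ with $a\ne-a$, so $\norm q_{U^2}^4\le\frac12\norm q_2^4+\frac12(\E q)^4$. Apply this to $f$ and to each real derivative $q_t(x)=f(x+t)f(x)$. Put $V=1-\mu^2$. The two-valued identity $f^2=V+2\mu f$ gives $\norm{q_t}_2^2=V^2+4\mu^2K(t)$, where $K(t)=\E_x q_t(x)$ satisfies $\E_tK=0$ and $\E_tK^2=\norm f_{U^2}^4\le V^2/2$. Hence $\norm f_{U^3}^8=\E_t\norm{q_t}_{U^2}^4\le\frac34V^4+4\mu^4V^2\le\frac34$.

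\textbf{Consequences for the rest of the large-support step.} Because this bound holds at every bias, no special choice of $J$ is needed. Choosing $J$ among the least-biased coordinates would, if anything, leave the expensive coordinates in the sup-norm product. Put $B=\sum_{i\in A}\mu_i^2$. The sup-norm losses satisfy $\prod_{i\notin J}(1+|\mu_i|)\le\exp(\sqrt{nB})$. This is beaten by $\kappa_0^{q_3}$, with $q_3\approx n/4$, whenever $B\le\beta_0n$ and $\sqrt{\beta_0}<\frac14|\log\kappa_0|$.

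Otherwise, Cauchy--Schwarz gives $e^{-B/2}$. It must be multiplied by the cost of the $e=n-2k+2$ coordinates that two blocks of $k-1$ cannot cover below half, which is up to $\exp(e(\log2+\tfrac12))$. This loss is why $\epsilon$ has to be fixed after the rate-half constants, by continuity.

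\textbf{Small supports: correct, and a different route.} Your small-support step is correct and differs from the paper's. The paper fixes a common set $H$, uses singleton lists ($V_i$ on $H$, $\{0\}$ off $H$), and averages over $H$. That costs the factor $\binom nj/\binom rj$ and leaves only a margin of order $\sigma^2$ (the term $-\varepsilon^2/2$ in $G$).

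Your lists $V_i\cup\{0\}$ have size two and do not depend on the support. They bound $\sum_{|A|=r}|\Gamma_C(\vec g_A)|^2$ by $2^{-(k-1-r+R)}\List_{C^\perp}(R,2)$ directly, giving a linear margin $(\sigma-2\epsilon-\zeta)\log2$. The price is list recovery with $\ell=2$, which the Reed--Solomon list-recovery theorem supplies. Your parenthetical list-size-one variant, however, does not follow without that averaging over $H$.
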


We prove \cref{thm:grs-below-half} by combining the three tools:
\cref{prop:master-inequality} with an intermediate cutoff \(r_0\) close
to \(n\); a supportwise form of the list-recovery argument of
\cref{sec:list-transfer}, which controls the squared components below
the cutoff (\cref{sec:lower}); and the MDS von Neumann inequality of
\cref{sec:iterated-cs} together with a second-moment bound, which
control the components above it (\cref{sec:high}). The supportwise
form is needed because the list bound of \cref{thm:rs-list-decoding}
is available only at radius \(R<k\), and below rate half
\cref{lem:component-list-recovery} cannot use it. Both estimates have
negative exponents at rate half; continuity gives a fixed rate
below half (\cref{sec:splice}). Throughout, \(C\) is a GRS code of
dimension \(k-1\) unless a statement says otherwise, and \(g_i,f_i\)
are as in \cref{sec:expansion}.

\subsection{List recovery for small supports}
\label{sec:lower}

\Cref{lem:component-list-recovery} needs the list bound at radius
\(n-j\) with \(j\le k-1\), whereas \cref{thm:rs-list-decoding} supplies
it at radius \(R<k\); below rate half, \(n-k+1>k-1\), so no
admissible \(j\) satisfies \(n-j\le R\). At a fixed support size \(r\),
however, every dual word has \(n-r\) zero coordinates, which count as
agreements with the list \(\{0\}\). Retaining \(j\) Fourier factors on
a common set \(H\) and putting \(\{0\}\) outside \(H\) brings the
radius down to \(r-j\), so \(j=r-R\) suffices, and \(j\le k-1\) as long
as \(r\le k+R-1\). Summing over \(H\) gives the following bound.

\begin{lemma}[List recovery for a fixed support size]
\label{lem:fixed-support-list-recovery}
In the setting of \cref{sec:list-transfer}, so that \(C^\perp\) has
minimum distance at least \(k\), the maps \(\tau_i\) have at most \(Q\)
output symbols, and \(p\ge16Q\) or \(Q=2\) with \(p\) odd, for all
integers \(k\le r\le n\) and \(1\le j<k\),
\[
  \sum_{|A|=r}\|\T(f_C^{=A})\|_2^2
  \le\eta_Q^{\,k-1-j}\,\List_{C^\perp}(r-j,Q-1)\,
     \frac{\binom nj}{\binom rj}.
\]
\end{lemma}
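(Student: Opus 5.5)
We follow the proof of \cref{lem:component-list-recovery}, with one change: instead of fixing a single retained set \(J_A\) per support, we average over all \(j\)-subsets \(H\subseteq A\). That averaging is what produces the factor \(\binom nj/\binom rj\). Fix \(r\) with \(k\le r\le n\) and \(1\le j<k\).

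\emph{Step 1: bound each support by an average over retained sets.} For each support \(A\) of size \(r\) and each \(j\)-subset \(H\subseteq A\), choose \(I_{A,H}\) with \(H\subseteq I_{A,H}\subseteq A\) and \(|I_{A,H}|=k-1\). This is possible since \(j<k\le r\). Apply \cref{lem:component-bound} with \(J=A\setminus I_{A,H}\). By \cref{lem:local-fourier-energy}, each of the \(k-1-j\) factors indexed by \(I_{A,H}\setminus H\) is at most \(\eta_Q\), because a word with support \(A\) is nonzero there. This gives, for every such \(H\),
\[
  \|\T(f_C^{=A})\|_2^2\le\eta_Q^{\,k-1-j}\sum_{\vec y\in C^\perp:\ \supp(\vec y)=A}\ \prod_{i\in H}w_i(y_i),
  \qquad w_i(a)=\|\T_i\chi_a\|_2^2 .
\]
Averaging over the \(\binom rj\) choices of \(H\subseteq A\) and summing over \(|A|=r\), we get
\[
  \sum_{|A|=r}\|\T(f_C^{=A})\|_2^2\le\frac{\eta_Q^{\,k-1-j}}{\binom rj}\sum_{|H|=j}P_H(\vec w),
  \qquad
  P_H(\vec w)=\sum_{\substack{\vec y\in C^\perp:\ \wt(\vec y)=r\\ \supp(\vec y)\supseteq H}}\ \prod_{i\in H}w_i(y_i).
\]

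\emph{Step 2: pass to lists, one \(H\) at a time.} Each \(P_H\) is a nonnegative combination of monomials that contain at most one factor from each coordinate. So \cref{lem:weights-to-lists}, applied with \(\Omega=\Fp^*\) and the constraints from \cref{lem:local-fourier-energy}, gives lists \(V_i^{H}\subseteq\Fp^*\) of size \(Q-1\) with
\[
  P_H(\vec w)\le P_H\bigl(\one_{V_1^H},\ldots,\one_{V_n^H}\bigr).
\]
The lists may depend on \(H\). This causes no difficulty, since we bound each \(P_H\) separately by a worst-case list size.

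\emph{Step 3: count using the zero coordinates.} Define input lists \(S_i=V_i^H\) for \(i\in H\) and \(S_i=\{0\}\) for \(i\notin H\). A word counted in \(P_H(\one_{V^H})\) has weight \(r\) and contains \(H\) in its support. It therefore agrees with \(S_i\) at every coordinate of \(H\), and it also has \(n-r\) zero coordinates, all lying outside \(H\). In total it agrees with the lists on at least \(j+n-r=n-(r-j)\) coordinates. Each word contributes at most once, so
\[
  P_H\bigl(\one_{V_1^H},\ldots,\one_{V_n^H}\bigr)\le\List_{C^\perp}(r-j,Q-1).
\]
The lists \(S_i\) have size at most \(Q-1\), which is what \cref{def:list-size} allows; the singleton \(\{0\}\) is fine because the definition only requires \(|S_i|\le\ell\).

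\emph{Step 4: sum over \(H\).} There are \(\binom nj\) choices of \(H\). Combining Steps 1–3 gives exactly the claimed bound.

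The main obstacle is getting the order of operations right. The extremal replacement in \cref{lem:weights-to-lists} has to be carried out separately for each \(H\), because a single global choice of lists would not let the zero lists sit outside \(H\). The averaging over \(H\) in Step 1 is what makes this per-\(H\) treatment possible.
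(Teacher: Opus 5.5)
Your proposal is correct and follows essentially the same route as the paper. The paper also applies \cref{lem:component-bound} with \(H\subseteq I_A\), bounds the \(k-1-j\) factors on \(I_A\setminus H\) by \(\eta_Q\), replaces the weights on \(H\) by lists via \cref{lem:weights-to-lists} separately for each \(H\), uses the lists \(\{0\}\) outside \(H\) to reach radius \(r-j\), and obtains the factor \(\binom nj/\binom rj\) from the double count \(\binom rj\sum_{|A|=r}=\sum_{|H|=j}\sum_{A\supseteq H}\), which is your averaging over \(H\).
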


\begin{proof}
Fix \(H\subseteq[n]\) with \(|H|=j\). For each \(A\supseteq H\) of
size \(r\), choose \(I_A\) with \(H\subseteq I_A\subseteq A\) and
\(|I_A|=k-1\), which is possible since \(j\le k-1<r\). Apply
\cref{lem:component-bound} with \(J=A\setminus I_A\), and bound the
\(k-1-j\) factors indexed by \(I_A\setminus H\) by \(\eta_Q\) using
\cref{lem:local-fourier-energy}:
\[
  \sum_{\substack{A\supseteq H\\|A|=r}}\|\T(f_C^{=A})\|_2^2
  \le\sum_{\substack{A\supseteq H\\|A|=r}}
     \sum_{\substack{\vec y\in C^\perp\\\supp(\vec y)=A}}
       \prod_{i\in I_A}\|\T_i\chi_{y_i}\|_2^2
  \le\eta_Q^{\,k-1-j}
     \sum_{\substack{\vec y\in C^\perp\\
                     \wt(\vec y)=r,\;H\subseteq\supp(\vec y)}}
       \prod_{i\in H}\|\T_i\chi_{y_i}\|_2^2 .
\]
As in \cref{lem:component-list-recovery}, the last sum is a nonnegative
linear combination of monomials \(\prod_{i\in H}w_i(y_i)\) in the
weights \(w_i(a)=\|\T_i\chi_a\|_2^2\), \(a\in\Fp^*\).
The constraints on the weights follow from \cref{lem:local-fourier-energy}.
\Cref{lem:weights-to-lists} (the coordinates outside \(H\)
play no role) yields lists \(V_i\subseteq\Fp^*\) of size \(Q-1\),
\(i\in H\), with
\[
  \sum_{\substack{\vec y\in C^\perp\\
                  \wt(\vec y)=r,\;H\subseteq\supp(\vec y)}}
    \prod_{i\in H}\|\T_i\chi_{y_i}\|_2^2
  \le\bigl|\{\vec y\in C^\perp:\wt(\vec y)=r,\;
             y_i\in V_i\text{ for all }i\in H\}\bigr| .
\]
Take input lists \(S_i=V_i\) on \(H\) and \(S_i=\{0\}\) outside \(H\).
Every counted word agrees with the lists on \(H\) and on its \(n-r\)
zero coordinates, which lie outside \(H\); so the count is at most
\(\List_{C^\perp}(r-j,Q-1)\).

Finally, sum over \(H\). Each support of size \(r\) contains exactly
\(\binom rj\) sets \(H\), so
\[
  \binom rj\sum_{|A|=r}\|\T(f_C^{=A})\|_2^2
  =\sum_{|H|=j}\sum_{\substack{A\supseteq H\\|A|=r}}
     \|\T(f_C^{=A})\|_2^2
  \le\binom nj\,\eta_Q^{\,k-1-j}\List_{C^\perp}(r-j,Q-1) .
\]
Dividing by \(\binom rj\) proves the claim.
\end{proof}

For one-bit leakage the input lists have size one, so list recovery
becomes list decoding.

\begin{corollary}[List decoding for a fixed support size]
\label{lem:list-decoding-energy}
Let \(p\) be an odd prime and \(C\subseteq\Fp^n\) a linear code whose
dual has minimum distance at least \(k\). For all integers
\(k\le r\le n\) and \(R\) with \(1\le r-R<k\),
\[
  \sum_{|A|=r}|\Gamma_C(\vec g_A)|^2
  \le\List_{C^\perp}(R)\,2^{\,r-R-k+1}\,
     \frac{\binom n{r-R}}{\binom r{r-R}} .
\]
\end{corollary}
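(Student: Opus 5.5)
This is the special case \(Q=2\) of \cref{lem:fixed-support-list-recovery}, combined with the identification of the one-bit components in \cref{lem:binary-components}. The plan has three steps.

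First, we rewrite the left-hand side. By \cref{lem:binary-components}, \(\|\T(f_C^{=A})\|_2^2=|\Gamma_C(\vec g_A)|^2\) for every \(A\). Hence the left-hand side equals \(\sum_{|A|=r}\|\T(f_C^{=A})\|_2^2\). This uses only that \(C^\perp\) has minimum distance at least \(k\). Coordinates with constant maps need no separate treatment: the components vanish there, and we may either drop such symbols or note that the bound is trivially preserved.

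Second, we apply \cref{lem:fixed-support-list-recovery} with \(Q=2\) and \(j=r-R\). The hypothesis \(1\le r-R<k\) gives exactly \(1\le j<k\), and \(k\le r\le n\) is assumed. For \(Q=2\), \cref{lem:local-fourier-energy} holds for every odd prime \(p\) with \(\eta_2=1/2\), so the lemma's hypothesis on \(p\) is satisfied. The lemma then gives the bound
\[
\eta_2^{\,k-1-j}\,\List_{C^\perp}(r-j,1)\,\binom nj\Big/\binom rj .
\]

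Third, we substitute. Using \(\eta_2^{k-1-j}=2^{-(k-1-(r-R))}=2^{\,r-R-k+1}\), \(r-j=R\), and \(\List_{C^\perp}(R,1)=\List_{C^\perp}(R)\), the bound becomes exactly the claimed one.

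No step here is a real obstacle. The only point to check is that \cref{lem:fixed-support-list-recovery} is stated under the setting of \cref{sec:list-transfer}, where every output symbol occurs. For one-bit maps this is ensured by restricting each \(\mathcal B_i\) to the image of \(\tau_i\), which has at most \(2\) symbols.
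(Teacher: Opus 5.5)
Your proposal is correct and follows the paper's own proof: you specialize \cref{lem:fixed-support-list-recovery} to \(Q=2\) and \(j=r-R\), and you identify \(\|\T(f_C^{=A})\|_2^2=|\Gamma_C(\vec g_A)|^2\) via \cref{lem:binary-components}. The substitutions \(\eta_2^{k-1-j}=2^{\,r-R-k+1}\) and \(\List_{C^\perp}(R,1)=\List_{C^\perp}(R)\) then give the stated bound.
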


\begin{proof}
Take \(Q=2\) and \(j=r-R\) in \cref{lem:fixed-support-list-recovery}.
Then \(\eta_2=1/2\), \(\List_{C^\perp}(r-j,1)=\List_{C^\perp}(R)\),
and \(\|\T(f_C^{=A})\|_2^2=|\Gamma_C(\vec g_A)|^2\) by
\cref{lem:binary-components}.
\end{proof}

Write \(D=C^\perp\) for the remainder of this subsection. For our
decoding radius \(R<k\), the condition \(1\le r-R<k\) requires
\(k\le r\le\min\{n,k+R-1\}\); below rate half this leaves the
largest supports untreated, and \cref{sec:high} bounds those by
Cauchy--Schwarz. The next lemma gives a uniform estimate below the
cutoff \((1-\varepsilon)n\).

\begin{lemma}[Bound for smaller supports]
\label{lem:elementary-low-band}
Fix \(0<\zetaGap<c<1/2\) and \(0<\varepsilon<1\) with
\(1-\varepsilon<2c-\zetaGap\). Let \(k=k_n\) with
\((k_n-1)/n\to c\), and \(R=\floor{k-1-\zetaGap n}\). Assume
\(\List_D(R)\le L_n\) with \(\log L_n=o(n)\). Then every support size
\(k\le r\le(1-\varepsilon)n\) satisfies
\[
  \sum_{|A|=r}|\Gamma_C(\vec g_A)|^2
  \le\exp\bigl(nG(c,\zetaGap,\varepsilon)+o(n)\bigr),
\]
uniformly in the support size and the leakage maps, where
\[
  G(c,\zetaGap,\varepsilon)
  =(1-2c+\zetaGap)\log2
   +\log\frac1{2(c-\zetaGap)}-\frac{\varepsilon^2}{2}.
\]
In particular, if \(G(c,\zetaGap,\varepsilon)<0\), the sum is
exponentially small at each of these support sizes.
\end{lemma}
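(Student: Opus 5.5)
The plan is to apply \cref{lem:list-decoding-energy} at each support size \(r\) with the decoding radius fixed at \(R=\floor{k-1-\zetaGap n}\), so that \(j=r-R\) varies with \(r\), and then bound the combinatorial factors by an entropy calculation.

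\emph{Admissibility and the list factor.} First I check the hypothesis \(1\le r-R<k\) of \cref{lem:list-decoding-energy} for every \(k\le r\le(1-\varepsilon)n\).
\begin{itemize}
\item The lower bound holds because \(r\ge k>R\).
\item For the upper bound, \(r-R\le(1-\varepsilon)n-(c-\zetaGap)n+o(n)\), and the assumption \(1-\varepsilon<2c-\zetaGap\) makes this at most \((c-\kappa)n\) for some \(\kappa>0\). Since \(k\sim cn\), we get \(r-R<k\) for all \(n\) large, uniformly in \(r\).
\end{itemize}
\Cref{lem:list-decoding-energy} then gives
\[
  \sum_{|A|=r}|\Gamma_C(\vec g_A)|^2\le L_n\,2^{\,j-k+1}\binom nj\Big/\binom rj,
  \qquad j=r-R .
\]
The factor \(L_n\) contributes only \(e^{o(n)}\) by hypothesis. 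The bound depends on the leakage maps only through \(L_n\), so it is uniform in the maps.

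\emph{Reduction to a one-variable inequality.} Write \(r=\rho n\) and \(a=c-\zetaGap\), so that \(R=an+O(1)\), \(j=(\rho-a)n+O(1)\), and \(k-1=cn+o(n)\). Stirling's estimate \(\log\binom{m}{\ell}=m\,h(\ell/m)+O(\log m)\) gives the exponent \(n\,\phi(\rho)+o(n)\), uniformly in \(\rho\), where
\[
  \phi(\rho)=(\rho-a-c)\log2+h(\rho-a)-\rho\,h\Bigl(\frac{\rho-a}{\rho}\Bigr).
\]
The range of \(\rho\) is \([c,1-\varepsilon]\) up to \(o(1)\). Note that \(1-a-c=1-2c+\zetaGap\), so the claimed \(G\) reads \((1-a-c)\log2+\log\frac1{2a}-\frac{\varepsilon^2}2\). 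It therefore suffices to prove
\[
  \psi(\rho):=h(\rho-a)-\rho\,h\bigl(1-a/\rho\bigr)\le(1-\rho)\log2+\log\frac1{2a}-\frac{\varepsilon^2}{2}
  \qquad\text{for } a\le\rho\le1-\varepsilon .
\]
Since \((1-\rho)\ge\varepsilon\) on this range, it is enough to show \(\psi(\rho)\le\log\frac1{2a}+(1-\rho)\log2-\frac{(1-\rho)^2}{2}\).

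\emph{Proving the inequality.} To check this, I would use the identity
\[
  \binom nj\Big/\binom rj=\binom nr\Big/\binom{n-j}{r-j}.
\]
In entropy form this reads \(\psi(\rho)=h(\rho)-(1-\rho+a)\,h\bigl(a/(1-\rho+a)\bigr)\). Then I would study \(\psi\) by calculus.
\begin{itemize}
\item At \(\rho=1\), \(\psi(1)=h(1)-a\,h(1)=0\), and \(\log\frac1{2a}>0\) since \(a<1/2\). So the target inequality holds at \(\rho=1\) with room to spare.
\item Differentiating gives \(\psi'(\rho)=\log\frac{1-\rho}{\rho}+\log\frac{1-\rho+a}{1-\rho}=\log\frac{1-\rho+a}{\rho}\).
\item This shows \(\psi\) is concave with its maximum at \(\rho^*=(1+a)/2\). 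The value there is \(\psi(\rho^*)=\log\frac{1+a}{1-a}+\ldots\), which is to be compared with \(\log\frac1{2a}\) plus the linear term.
\end{itemize}
The comparison then becomes an explicit inequality in \((a,\rho)\), which I would verify through concavity of the gap function. The quadratic loss \(-(1-\rho)^2/2\) should come from a second-order Taylor expansion of \((1-\rho)\log2\) against \(\psi\) near \(\rho=1\), using \(\psi''\le-1\) or a Pinsker-type bound \(h\le\log2-2(x-\frac12)^2\).

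\emph{Main obstacle.} The hard step is this last elementary inequality: making the constant exactly \(\log\frac1{2a}-\frac{\varepsilon^2}2\) and controlling it uniformly over \(\rho\). Everything else is routine bookkeeping with Stirling's formula and the \(o(n)\) terms.
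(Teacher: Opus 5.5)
\paragraph{Gap: the final inequality is not proved.}
Your reduction matches the paper's. You apply \cref{lem:list-decoding-energy} with $j=r-R$, derive $1\le r-R<k$ from $1-\varepsilon<2c-\zeta$, and absorb $L_n$ into $e^{o(n)}$. (A minor point: $R=an+o(n)$, not $an+O(1)$, since only $(k_n-1)/n\to c$ is assumed; this is harmless.) Your entropy rewriting is also correct. In fact the $(1-\rho)\log(1-\rho)$ terms cancel, leaving $\psi(\rho)=a\log a-\rho\log\rho-(1+a-\rho)\log(1+a-\rho)$, which is smooth on $[a,1]$.

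What is missing is a proof of $\psi(\rho)\le\log\frac1{2a}+(1-\rho)\log2-\frac{(1-\rho)^2}{2}$, and this inequality is the whole content of the lemma. The plan you give for it points the wrong way. The gap function $\Delta(\rho)=\log\frac1{2a}+(1-\rho)\log2-\frac{(1-\rho)^2}{2}-\psi(\rho)$ is convex, not concave: $\Delta''(\rho)=\frac1\rho+\frac1{1+a-\rho}-1\ge1$ on $[a,1]$. So checking endpoints cannot certify $\Delta\ge0$. The value of $\psi$ at its maximizer $\rho^*=(1+a)/2$, which you leave unfinished, is beside the point.

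\paragraph{How to close it.}
Use a tangent-line bound at $\rho=1$, which is what your last hint gestures at. Since $\psi(1)=0$, $\psi'(1)=\log a$, and $\psi''(\rho)=-\frac1\rho-\frac1{1+a-\rho}\le-2$ on $[a,1]$, Taylor's theorem gives
\[
\psi(\rho)\le(1-\rho)\log\tfrac1a-(1-\rho)^2 .
\]
You also need the step that explains why $\log\frac1{2(c-\zeta)}$ appears in $G$ without a factor $1-\rho$. Because $a<1/2$ and $0\le1-\rho\le1$,
\[
(1-\rho)\log\tfrac1a=(1-\rho)\log\tfrac1{2a}+(1-\rho)\log2\le\log\tfrac1{2a}+(1-\rho)\log2 .
\]
With these two lines your route is complete, and it even gives $-(1-\rho)^2$ where the target has $-(1-\rho)^2/2$.

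\paragraph{Comparison with the paper.}
The paper reaches the same estimate without Stirling or calculus. With $t=n-r$ and $g=n-k-R+1$ it uses the exact identity $2^{j-k+1}\binom nj/\binom rj=2^{g}\prod_{u=0}^{t-1}\frac{n-u}{2(R+t-u)}$. It bounds each factor by $\frac n{2R}e^{-(u+1)/n}$, which yields the finite bound $L\,2^g(n/2R)^t e^{-t^2/(2n)}$. It then uses $n/(2R)>1$ to replace $(n/2R)^t$ by $(n/2R)^n$. That replacement is exactly the relaxation $(1-\rho)\log\frac1{2a}\le\log\frac1{2a}$ above.
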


\begin{proof}
We first obtain an estimate with the integer parameters explicit.
Suppose \(\List_D(R)\le L\) with \(1\le R<k\). For
\(k\le r\le\min\{n,k+R-1\}\), put \(j=r-R\); then \(1\le j<k\) and
\(r-j=R\), so \cref{lem:list-decoding-energy} gives
\[
  \sum_{|A|=r}|\Gamma_C(\vec g_A)|^2
  \le L\,2^{j-k+1}\frac{\binom nj}{\binom rj} .
\]
Set \(t=n-r\) and \(g=n-k-R+1\). The binomial ratio has the exact
product form
\[
  2^{j-k+1}\frac{\binom nj}{\binom rj}
  =2^g\prod_{u=0}^{t-1}\frac{n-u}{2(R+t-u)} .
\]
For \(0\le u<t\) we have \(R+t-u\ge R+1\), and \(1-v\le\exp(-v)\)
with \(R+1\le n\) gives \(R/(R+1)\le\exp(-1/(R+1))\le\exp(-1/n)\).
Hence
\[
  \frac{n-u}{2(R+t-u)}
  \le\frac n{2R}\Bigl(1-\frac un\Bigr)\frac R{R+1}
  \le\frac n{2R}\exp\Bigl(-\frac{u+1}n\Bigr),
\]
and multiplying over \(u\) gives the finite bound
\[
  \sum_{|A|=r}|\Gamma_C(\vec g_A)|^2
  \le L\,2^g\Bigl(\frac n{2R}\Bigr)^{t}
     \exp\Bigl(-\frac{t^2}{2n}\Bigr),
\]
since \(t(t+1)/(2n)\ge t^2/(2n)\). Empty products cause no difficulty
when \(t=0\).

Now take \(R=\floor{k-1-\zetaGap n}\) and \(L=L_n\). For large \(n\),
\(1\le R<k\) and \(2R<n\), since \(R/n\to c-\zetaGap<1/2\).
Moreover,
\(k+R-1=(2c-\zetaGap)n+o(n)>(1-\varepsilon)n\), so
\(r\le k+R-1\) throughout the stated range. Using
\(\varepsilon n\le t\le n\) in the finite bound gives
\[
\begin{aligned}
  \sum_{|A|=r}|\Gamma_C(\vec g_A)|^2
  &\le L_n\,2^{n-k-R+1}\Bigl(\frac n{2R}\Bigr)^n
       \exp\Bigl(-\frac{\varepsilon^2n}{2}\Bigr)\\
  &=\exp\bigl(nG(c,\zetaGap,\varepsilon)+o(n)\bigr),
\end{aligned}
\]
where the equality uses \((k-1)/n\to c\), \(R/n\to c-\zetaGap\), and
\(\log L_n=o(n)\). The bound is independent of the support size and
the leakage maps.
\end{proof}

\subsection{Iterated Cauchy--Schwarz for large supports}
\label{sec:high}

We bound the second term of \eqref{eq:master-inequality}. The proof
separates two cases: Cauchy--Schwarz controls leakage
functions with sufficiently large total squared bias, while the
product of \(U^3\) norms controls the remaining functions. Put
\[
  \kappa_0=(3/4)^{1/8}<1,\qquad
  \beta_0=\frac{(\log\kappa_0)^2}{64}.
\]

\begin{proposition}[Bound for large supports]
\label{prop:qual-high-sum}
\label{prop:qual-high-bound}
Suppose \(2k-2\le n<3k-3\), put \(c=(k-1)/n\), and fix
\(0<\varepsilon<1/2\). Let \(r_0=\floor{(1-\varepsilon)n}\), and define
\[
  H(c,\varepsilon)
  =h(\varepsilon)+\max\Bigl\{
     \frac{3c-1}{2}\log\kappa_0+\sqrt{\beta_0},\;
     (1-2c)(\log2+1/2)-\frac{\beta_0}2
   \Bigr\}.
\]
Then
\[
  \frac12\sum_{A\subseteq[n]\colon|A|>r_0}|\Gamma_C(\vec f_A)|
  \le\exp\bigl(nH(c,\varepsilon)+O(\log n)\bigr).
\]
In particular,
\(H(1/2,0)=\max\{\frac18\log\kappa_0,-\beta_0/2\}<0\), with the
formula extended continuously to \(\varepsilon=0\).
\end{proposition}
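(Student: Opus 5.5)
The plan is to bound each large-support term uniformly over the leakage maps and then pay for the number of supports. Since \(\varepsilon<1/2\), the number of sets \(A\) with \(|A|>r_0\) is at most \(n\binom{n}{\floor{\varepsilon n}}\le\exp(nh(\varepsilon)+O(\log n))\). This accounts for the \(h(\varepsilon)\) term. It therefore suffices to show, for every \(A\) with \(r=|A|>r_0\), that
\[
  |\Gamma_C(\vec f_A)|\le\exp\Bigl(n\max\Bigl\{\tfrac{3c-1}{2}\log\kappa_0+\sqrt{\beta_0},\;(1-2c)(\log2+\tfrac12)-\tfrac{\beta_0}{2}\Bigr\}+O(1)\Bigr).
\]
Write \(b_i=2\theta_i-1\) for the bias of coordinate \(i\). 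Then \(\|f_i\|_\infty=1+|b_i|\) and \(\E_x[f_i(x)^2]=4\theta_i(1-\theta_i)=1-b_i^2\). By \cref{lem:binary-components} we may assume every \(\tau_i\) with \(i\in A\) is nonconstant. We split according to whether \(\sum_{i\in A}b_i^2\le\beta_0n\).

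\emph{Small total bias.} Apply \cref{cor:iterated-cs-count} with \(t=3\) to the MDS system of the coordinate forms of \(C\) restricted to \(A\). Since \(r\le n<3k-3\), we get \(q_3(r,k)\ge q_3(n,k)\ge\frac{(3c-1)n}{2}-2\ge1\), so we may retain \(q_3(r,k)\) factors \(\|f_j\|_{U^3}\). Here I would invoke the local estimate \(\|f_i\|_{U^3}\le\kappa_0\) for every centered two-valued \(f_i=2(\tau_i-\theta_i)\) (\cref{thm:qual-local-bounds}). The remaining factors are bounded by
\[
  \prod_{i\in A}(1+|b_i|)\le\exp\Bigl(\sum_{i\in A}|b_i|\Bigr)\le\exp\Bigl(\sqrt{n\textstyle\sum_{i\in A}b_i^2}\Bigr)\le e^{\sqrt{\beta_0}\,n}.
\]
Together these give the first branch of the maximum, up to an \(O(1)\) term in the exponent.

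\emph{Large total bias.} Here I use only Cauchy--Schwarz and MDS independence. Since \(r\le n\) and \(2(k-1)\le n\), partition \(A=A_1\sqcup A_2\sqcup W\) with \(|A_1|=|A_2|=k-1\) (when \(r<2k-2\), take \(|A_2|=r-k+1\) and \(W=\emptyset\)) and \(|W|=r-2(k-1)\le n-2(k-1)=(1-2c)n\). Bounding the \(W\)-factors by \(\prod_{i\in W}\|f_i\|_\infty\le2^{|W|}\) and applying Cauchy--Schwarz to the product of the \(A_1\)- and \(A_2\)-parts gives
\[
  |\Gamma_C(\vec f_A)|\le2^{|W|}\Bigl(\E\prod_{i\in A_1}f_i(X_i)^2\Bigr)^{1/2}\Bigl(\E\prod_{i\in A_2}f_i(X_i)^2\Bigr)^{1/2}.
\]
Each of \(A_1,A_2\) has at most \(k-1\) coordinates, so their coordinates are independent and uniform by the MDS property. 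The expectations therefore factor as \(\prod(1-b_i^2)\le\exp(-\sum b_i^2)\). Since \(b_i^2\le1\), removing \(W\) loses at most \(|W|\) from the bias sum, so \(\sum_{i\in A_1\cup A_2}b_i^2\ge\beta_0n-(1-2c)n\). This yields
\[
  |\Gamma_C(\vec f_A)|\le\exp\bigl((1-2c)n\log2+\tfrac12(1-2c)n-\tfrac{\beta_0}{2}n\bigr),
\]
which is the second branch.

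Combining the two cases with the support count proves the bound. At \(c=1/2\) and \(\varepsilon=0\) the first branch is \(\frac14\log\kappa_0+\sqrt{\beta_0}=\frac14\log\kappa_0+\frac{|\log\kappa_0|}{8}=\frac18\log\kappa_0\), and the second is \(-\beta_0/2\), so \(H(1/2,0)<0\). The main obstacle is the uniform local bound \(\|f_i\|_{U^3}\le\kappa_0=(3/4)^{1/8}\) for all two-valued centered leakage functions, independent of \(p\) and of the balance \(\theta_i\). I would first try to prove it via \(\|f\|_{U^3}^8=\E_h\|\Delta_hf\|_{U^2}^4\le\E_h\|\Delta_hf\|_2^2\max_a|\widehat{\Delta_hf}(a)|^2\). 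The idea is to show that a positive fraction of shifts \(h\) have \(\Delta_hf\) far from a character, using that \(f\) is real and two-valued. For nearly constant \(\tau_i\), where \(\|f_i\|_2\) is small, the bound is easier.
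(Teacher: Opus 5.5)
Your proposal is correct and follows the paper's proof step for step. The paper splits at $B=\sum_{i\in A}\mu_i^2\le\beta_0 n$, applies the $t=3$ MDS von Neumann bound with $\norm{f_i}_{U^3}\le\kappa_0$ and $\prod_i(1+|\mu_i|)\le e^{\sqrt{nB}}$, uses two-block Cauchy--Schwarz with at most $(1-2c)n$ padded coordinates, and multiplies by the $\exp(nh(\varepsilon)+O(\log n))$ count of supports, exactly as you do.

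Two small remarks. First, $n<3k-3$ only gives $q_3(n,k)\ge0$ (for example, $n=3k-4$ gives $q_3=0$). When $q_3(r,k)=0$ you must use the plain supremum bound instead of \cref{cor:iterated-cs-count}; this still fits within the $O(1)$ slack. Second, the local $U^3$ bound is not an obstacle, since \cref{thm:qual-local-bounds} proves it directly. The proof uses the $a\leftrightarrow-a$ Fourier symmetry of real functions, which gives $\norm q_{U^2}^4\le\frac12\norm q_2^4+\frac12(\E q)^4$. This is applied to $f$ and to each $\Delta_hf$, together with the identity $f^2=V+2\mu f$.
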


The strict inequality persists when \(c\) is slightly below half
and \(\varepsilon\) is sufficiently small. We first establish the local
\(U^3\) bound, then give the two estimates for an individual support.

\subsubsection{A uniform bound for centered bits}

\begin{lemma}[Bounds for a centered bit]
\label{thm:qual-local-bounds}
For every map \(\tau\colon\Fp\to\{0,1\}\) over an odd prime field, the
function \(f=2(\tau-\E[\tau])\), with \(\mu=1-2\E[\tau]\), satisfies
\[
  \norm f_2=\sqrt{1-\mu^2},\qquad
  \norm f_{U^3}\le\kappa_0,\qquad
  \norm f_\infty\le1+|\mu|\le2 .
\]
\end{lemma}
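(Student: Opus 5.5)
The identities for \(\norm f_2\) and \(\norm f_\infty\) are direct. Write \(\theta=\E[\tau]\), so \(\mu=1-2\theta\). Then \(f\) takes the value \(2(1-\theta)=1+\mu\) where \(\tau=1\) and \(-2\theta=-(1-\mu)\) where \(\tau=0\). Hence \(\norm f_\infty=\max\{1+\mu,1-\mu\}=1+|\mu|\le2\) and \(\norm f_2^2=4\theta(1-\theta)=1-\mu^2\). Equivalently, \(f=s+\mu\) with \(s=2\tau-1\in\{\pm1\}\) and \(\E s=-\mu\). The real content is the bound \(\norm f_{U^3}^8\le 3/4\), uniformly in \(\tau\) and \(p\). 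I would prove it by splitting into a regime where \(|\mu|\) is large (small variance) and one where \(|\mu|\) is small (nearly balanced).

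\emph{Small variance.} Start from \(\norm f_{U^3}^8=\E_h\norm{\Delta_hf}_{U^2}^4\). For any \(g\) we have \(\norm g_{U^2}^4=\sum_a|\widehat g(a)|^4\le\max_a|\widehat g(a)|^2\,\norm g_2^2\le\norm g_2^4\). Combining this with \(\norm{\Delta_hf}_2^2\le\norm f_\infty^2\norm{\Delta_hf}_2^2/\norm f_\infty^2\) and averaging over \(h\), using \(\E_h\norm{\Delta_hf}_2^2=\norm f_2^4\), gives
\[
  \norm f_{U^3}^8\le\norm f_\infty^4\norm f_2^4=(1+|\mu|)^6(1-|\mu|)^2 .
\]
This is at most \(3/4\) once \(|\mu|\) exceeds an explicit threshold \(\mu_*\), roughly \(1/2\).

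\emph{Nearly balanced.} Here I would use Fourier symmetry, since each \(\Delta_hf\) is real-valued. Put \(m_h=\E_x\Delta_hf(x)\) and \(M_h=\norm{\Delta_hf}_2^2\le\norm f_\infty^2\norm{\Delta_h f}_2^2/\norm f_\infty^2\). For \(a\ne0\), the frequencies \(a\) and \(-a\) are distinct (as \(p\) is odd) and carry equal weight. Parseval therefore gives \(|\widehat{\Delta_hf}(a)|^2\le(M_h-m_h^2)/2\), and hence
\[
  \norm{\Delta_hf}_{U^2}^4\le m_h^4+\tfrac12(M_h-m_h^2)^2 .
\]
In the exactly balanced case \(\mu=0\) we have \(|f|=1\), so \(M_h=1\). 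The right side is then \(x^2+(1-x)^2/2\) with \(x=m_h^2\in[0,1]\). This is convex in \(x\), so it is at most \(\tfrac12+\tfrac x2\). Averaging over \(h\) and using \(\E_hm_h^2=\norm f_{U^2}^4\), we get \(\norm f_{U^3}^8\le\tfrac12(1+\norm f_{U^2}^4)\). The same symmetry bounds \(\norm f_{U^2}^4\le(\E f)^4+\tfrac12\norm f_2^4\), which equals \(\tfrac12\) at \(\mu=0\); this yields \(3/4\).

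For general \(|\mu|\le\mu_*\), I would run the same chain keeping \(M_h\le\norm f_\infty^2\norm{\Delta_hf}_2^2/\norm f_\infty^2\) and \(\E f=0\) explicit. This gives a bound of the shape \(\norm f_{U^3}^8\le\tfrac12\E_hM_h^2+\tfrac12\norm f_{U^2}^4+(\text{error})\), where the moments \(\E_hM_h^2\) are computed from the two-valued distribution of \(|f|^2\).

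\emph{Main obstacle.} The hard step is the intermediate range of \(\mu\): there \(|f|\) is not constant, so the convexity step needs \(M_h\) to vary. I would first try to bound \(\E_h M_h^2\) by \(\E|f|^4\cdot\norm{f}_2^4\)-type moments of the two values \((1\pm\mu)^2\). I would then check numerically that the resulting explicit function of \(|\mu|\) stays below \(3/4\) up to the point where the small-variance bound takes over. If the two regimes fail to overlap, I would fall back on the looser constant \(\kappa_0^8=\max\) of the two bounds at their crossing point. The rest of the section only needs some absolute constant \(\kappa_0<1\), with \(\beta_0\) adjusted accordingly.
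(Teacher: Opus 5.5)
Your computations of $\norm f_2$ and $\norm f_\infty$ and your treatment of the case $\mu=0$ are correct. The Fourier-symmetry bound $\norm{\Delta_hf}_{U^2}^4\le m_h^4+\frac12(M_h-m_h^2)^2$ is also the right tool. However, the proposal does not prove $\norm f_{U^3}\le(3/4)^{1/8}$ for $\mu\ne0$.

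\emph{The intermediate range is not proved.} You defer it to a numerical check, with a fallback that replaces $\kappa_0$ by an unspecified constant. The lemma fixes $\kappa_0=(3/4)^{1/8}$, and \cref{cor:explicit-below-half} uses that value.

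\emph{The case split is miscalculated.} Your two displayed inequalities for $\norm{\Delta_hf}_2^2$ and $M_h$ are tautologies; the intended bound is $\norm{\Delta_hf}_2^2\le\norm f_\infty^2\norm f_2^2$. Squaring it gives $(1+|\mu|)^6(1-|\mu|)^2$. This is \emph{maximized} at $|\mu|=1/2$, with value about $2.85$, and drops below $3/4$ only for $|\mu|\gtrsim0.87$. The sharper $\norm f_\infty^2\norm f_2^6$ still needs $|\mu|\gtrsim0.57$. So the ``hard'' range is most of $[0,1]$, and nothing in the proposal covers it.

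\emph{The missing idea is a uniform bound on $M_h=\norm{\Delta_hf}_2^2$,} after which no case split is needed. The paper uses that $f$ is two-valued: $f^2=V+2\mu f$ with $V=1-\mu^2$. Hence $M_h=V^2+4\mu^2K(h)$ exactly, where $K(h)=m_h$ satisfies $\E_hK=0$, $\abs{K}\le V$, and $\E_hK^2=\norm f_{U^2}^4\le V^2/2$. Since $m_h^2\le M_h$, your bound gives $\norm{\Delta_hf}_{U^2}^4\le\frac12M_h^2+\frac12m_h^4$. Averaging over $h$ yields
$\norm f_{U^3}^8\le\frac12V^4+8\mu^4\E_hK^2+\frac12V^2\E_hK^2\le\frac34V^4+4\mu^4V^2\le\frac34(V+\mu^2)^4=\frac34$.

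Your own moment suggestion also closes the argument once it is carried out. First, $\E_hM_h^2\le(\max_hM_h)(\E_hM_h)\le\E[f^4]\,\norm f_2^4=(1-\mu^2)^3(1+3\mu^2)\le1$. Second, $\E_hm_h^4\le V^2\E_hm_h^2\le V^4/2$. Together these give $\norm f_{U^3}^8\le\frac12+\frac14$ for every $\mu$ at once, with no small-variance regime needed.
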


\begin{proof}
Put \(V=1-\mu^2\). The values \(1+\mu\) and \(-(1-\mu)\) occur with
probabilities \((1-\mu)/2\) and \((1+\mu)/2\), respectively. This
proves the \(L^2\) and pointwise bounds, and gives
\[
  \E[f]=0,\qquad\E[f^2]=V,\qquad f^2=V+2\mu f .
\]

The \(U^3\) estimate uses the symmetry of the Fourier transform of a
real function. Specifically, every real \(q\) satisfies
\[
  \norm q_{U^2}^4
  \le(\E[q])^4+\frac12\bigl(\norm q_2^2-(\E[q])^2\bigr)^2
  \le\frac12\norm q_2^4+\frac12(\E[q])^4 .
\]
To see this, pair each nonzero frequency \(a\) with \(-a\); the two
Fourier magnitudes are equal, and the frequencies are distinct because
\(p\) is odd. If \(b_a=\abs{\widehat q(a)}^2\), with one representative
from each pair, then
\[
  \sum_{a\ne0}\abs{\widehat q(a)}^4
  =2\sum_{\text{pairs}}b_a^2
  \le2\Bigl(\sum_{\text{pairs}}b_a\Bigr)^2 .
\]
Adding the zero frequency and using Parseval gives the first inequality;
the second follows from \((\E[q])^2\le\norm q_2^2\).

Now let \(q_t(x)=f(x+t)f(x)\) and \(K(t)=\E_x[q_t(x)]\). Expanding
\(f(x+t)^2f(x)^2\) with \(f^2=V+2\mu f\) and using \(\E[f]=0\) gives
\[
  \norm{q_t}_2^2=V^2+4\mu^2K(t),\qquad\E_t[K(t)]=0 .
\]
Cauchy--Schwarz and the real-function bound applied to \(f\) give
\[
  \abs{K(t)}\le V,\qquad
  \E_t[K(t)^2]=\norm f_{U^2}^4\le\frac12V^2 .
\]
Applying the same bound to each \(q_t\), we obtain
\[
\begin{aligned}
  \norm f_{U^3}^8
  &=\E_t[\norm{q_t}_{U^2}^4]\\
  &\le\frac12\E_t[(V^2+4\mu^2K(t))^2]
       +\frac12\E_t[K(t)^4]\\
  &\le\frac12V^4+\Bigl(8\mu^4+\frac12V^2\Bigr)\E_t[K(t)^2]\\
  &\le\frac34V^4+4\mu^4V^2
   \le\frac34(V+\mu^2)^4=\frac34 .
\end{aligned}
\]
The second inequality uses \(\E_t[K(t)]=0\) and
\(K(t)^4\le V^2K(t)^2\); the last inequality follows by the
binomial theorem, since \(V,\mu^2\ge0\).
\end{proof}

\subsubsection{Two bounds for each support}

Write \(\mu_i=1-2\theta_i\). The preceding lemma gives two ways to make
the average small: bias decreases the \(L^2\) norms, whereas small bias
keeps the supremum norms close to one and lets the \(U^3\) bounds act.

\begin{lemma}[Bounds on \(\Gamma_C(\vec f_A)\)]
\label{lem:qual-padded-correlation}
Suppose \(2k-2\le n<3k-3\). For every \(A\subseteq[n]\) with
\(|A|\ge k\), put \(B=\sum_{i\in A}\mu_i^2\), \(e=n-2k+2\), and
\(q=q_3(n,k)=\floor{(3k-3-n)/2}\). Then
\[
  \abs{\Gamma_C(\vec f_A)}
  \le\min\Bigl\{
     \exp\bigl(q\log\kappa_0+\sqrt{nB}\bigr),\;
     \exp\bigl(e(\log2+1/2)-B/2\bigr)
   \Bigr\}.
\]
\end{lemma}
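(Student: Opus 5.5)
The two bounds are proved separately; each comes from a different way of controlling the average \(\Gamma_C(\vec f_A)=\Lambda_\Psi(\vec f_A)\), where \(\Psi\) is the MDS system of coordinate forms of \(C\) restricted to \(A\).

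\emph{First bound (Gowers product).} Put \(r=|A|\). Since \(r\le n\), we have \(q_3(r,k)\ge q_3(n,k)=q\). I would apply \cref{cor:iterated-cs-count} with \(t=3\), retaining a set \(J\subseteq A\) of exactly \(q\) Gowers factors; enlarging the groups \(I_a\) is allowed because the corollary only needs \(|I_a|+|J|\le k-1\). \Cref{thm:qual-local-bounds} gives \(\norm{f_j}_{U^3}\le\kappa_0\) for \(j\in J\), so
\[
  |\Gamma_C(\vec f_A)|\le\kappa_0^{\,q}\prod_{i\in A\setminus J}(1+|\mu_i|).
\]
Using \(\log(1+x)\le x\) and Cauchy--Schwarz,
\(\sum_{i\in A}|\mu_i|\le\sqrt{|A|B}\le\sqrt{nB}\).
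This gives the first term. The only check needed is \(q\ge1\), which follows from \(n<3k-3\); if \(q\le0\) the term is trivially weak, and the claim can be read with \(q\) replaced by \(\max\{q,0\}\).

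\emph{Second bound (variance).} Here I would use Cauchy--Schwarz together with MDS independence, in the spirit of \cite[Section~6]{KleinKomargodski23}. Split \(A=I\sqcup I'\sqcup E\) with \(|I|=|I'|=k-1\) and \(|E|=|A|-2k+2\le e\). This is possible because \(|A|\le n\), so \(|E|\le n-2k+2=e\); and \(|A|\ge 2k-2\) is not needed, since if \(|A|<2k-2\) one lets \(I'\) be smaller. Bound the factors on \(E\) by \(\norm{f_i}_\infty\le2\), which contributes \(2^{|E|}\le2^e\). By MDS, \(X_I\) and \(X_{I'}\) are each uniform. However, the remaining product \(F(X_I)G(X_{I'})\) does not factor, so I would bound \(\E|F(X_I)G(X_{I'})|\le\norm{F}_2\norm{G}_2\) using the marginals. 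That gives
\[
  |\Gamma_C(\vec f_A)|\le 2^{e}\prod_{i\in I\cup I'}\sqrt{1-\mu_i^2}
  \le 2^e\exp\Bigl(-\tfrac12\sum_{i\in I\cup I'}\mu_i^2\Bigr).
\]
The missing bias on \(E\) is at most \(\sum_{i\in E}\mu_i^2\le|E|\le e\). This costs the extra factor \(e^{e/2}\), and we obtain \(\exp(e(\log2+1/2)-B/2)\), matching the stated form exactly.

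\emph{Expected obstacle.} The main delicacy is the second bound when \(|A|<2k-2\). In that case the split leaves \(I'\) with fewer than \(k-1\) coordinates, but \(X_{I'}\) is still uniform because any at most \(k-1\) coordinates are independent. The bias on the omitted coordinates is again absorbed by the \(e/2\) slack, since there are no omitted ones beyond \(E\), which is empty. Beyond this, both bounds are essentially direct consequences of \cref{cor:iterated-cs-count}, \cref{thm:qual-local-bounds}, and the MDS property. The only thing to verify carefully is that the Cauchy--Schwarz step only ever uses uniformity of at most \(k-1\) coordinates at a time.
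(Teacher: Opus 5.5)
Your proposal is correct and is essentially the paper's proof. The first bound comes from \cref{cor:iterated-cs-count} with $t=3$ together with \cref{thm:qual-local-bounds}; you restrict the forms to $A$ instead of padding with the constant function one outside $A$, which is equivalent. The second bound comes from splitting off two blocks of at most $k-1$ MDS-independent coordinates, bounding the at most $e$ leftover factors by $2$, applying Cauchy--Schwarz, and absorbing the leftover bias with the $e/2$ slack. The only slip is that $n<3k-3$ gives $q\ge0$, not $q\ge1$; the case $q=0$ is just the supremum bound, which your fallback already covers.
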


\begin{proof}
For the first bound, use all \(n\) evaluation forms defining \(C\),
with the constant function one outside \(A\); any \(k-1\) of the forms
are independent by the MDS property, and \(q\) is the number of
retained Gowers factors for \(t=3\). Since
\(0\le q\le(k-1)/2<|A|\), choose \(J\subseteq A\) of size \(q\). The
MDS von Neumann inequality (\cref{cor:iterated-cs-count}) with
\(t=3\), followed by \cref{thm:qual-local-bounds}, gives
\[
  |\Gamma_C(\vec f_A)|
  \le\prod_{j\in J}\|f_j\|_{U^3}\prod_{i\in A\setminus J}\|f_i\|_\infty
  \le\kappa_0^{\,q}\prod_{i\in A\setminus J}(1+|\mu_i|)
  \le\kappa_0^{\,q}\exp\Bigl(\sum_{i\in A}|\mu_i|\Bigr)
  \le\kappa_0^{\,q}\exp(\sqrt{nB}).
\]
If \(q=0\), the first inequality is just the supremum bound.

For the second bound, partition \([n]=I_1\sqcup I_2\sqcup E\) with
\(|I_1|=|I_2|=k-1\) and \(|E|=e\). Bound the factors on \(E\) by two
and apply Cauchy--Schwarz to the other two products. For
\(\vec X\sim U_C\),
\[
  |\Gamma_C(\vec f_A)|
  \le2^e\prod_{t=1}^2
     \Bigl(\E_{\vec X\sim U_C}
       \Bigl[\prod_{i\in A\cap I_t}|f_i(X_i)|^2\Bigr]\Bigr)^{1/2}
  =2^e\prod_{i\in A\setminus E}\sqrt{1-\mu_i^2},
\]
where the equality uses independence and uniformity within each
\(I_t\), again from the MDS property. Since
\(\sqrt{1-x}\le\exp(-x/2)\) and \(\sum_{i\in A\cap E}\mu_i^2\le e\),
\[
  2^e\prod_{i\in A\setminus E}\sqrt{1-\mu_i^2}
  \le\exp\Bigl(e\log2-\frac12\sum_{i\in A\setminus E}\mu_i^2\Bigr)
  \le\exp\bigl(e(\log2+1/2)-B/2\bigr).
  \qedhere
\]
\end{proof}

\begin{proof}[Proof of \cref{prop:qual-high-sum}]
We have \(r_0\ge k-1\), so \cref{lem:qual-padded-correlation} applies
to every support in the sum. If \(B\le\beta_0n\), use its first bound;
otherwise use its second. Since \(q=(3c-1)n/2+O(1)\) and
\(e=(1-2c)n\), in either case
\[
  |\Gamma_C(\vec f_A)|
  \le\exp\Bigl(n\max\Bigl\{
     \frac{3c-1}{2}\log\kappa_0+\sqrt{\beta_0},\;
     (1-2c)(\log2+1/2)-\frac{\beta_0}2
   \Bigr\}+O(1)\Bigr).
\]
There are at most
\(\sum_{0\le t<\varepsilon n}\binom nt\le(n+1)\exp(nh(\varepsilon))\)
such supports, since their complements have fewer than
\(\varepsilon n\) elements. Multiplying the two bounds proves the
proposition.
\end{proof}

\subsection{Combining the estimates}
\label{sec:splice}

Both estimates have strictly negative exponents at rate half when
the cutoff is sufficiently close to \(n\). We use continuity to keep
both exponents negative at a slightly smaller rate.

\begin{proof}[Proof of \cref{thm:grs-below-half}]
Consider the continuous formulas for \(G\) and \(H\) in
\cref{lem:elementary-low-band,prop:qual-high-sum}.
Since \(H(1/2,0)<0\), fix
\(0<\varepsilon<1/4\) with \(H(1/2,\varepsilon)<0\). For this
\(\varepsilon\), \(G(1/2,0,\varepsilon)=-\varepsilon^2/2<0\). By
continuity there is a fixed \(0<\delta<\min\{\varepsilon/3,1/6\}\)
such that, with \(c_0=1/2-\delta\) and \(\zetaGap=\delta\),
\[
  G(c_0,\zetaGap,\varepsilon)<0,\qquad H(c_0,\varepsilon)<0 .
\]
These choices also give \(1/3<c_0<1/2\), \(0<\zetaGap<c_0\), and
\(1-\varepsilon<1-3\delta=2c_0-\zetaGap\), as the two estimates
require. All these constants are independent of \(n\) and \(p\).

Put
\[
  k_n=\ceil{c_0n}+1,\qquad
  r_0=\floor{(1-\varepsilon)n},\qquad
  R_n=\floor{k_n-1-\zetaGap n}.
\]
We first prove the claim for a code \(C\) of dimension \(k_n-1\), and
write \(D=C^\perp\). Since \(k_n-1\ge\zetaGap n\), the consequence of
\cref{thm:rs-list-decoding} described in \cref{sec:list-decoding}
gives \(\List_D(R_n)\le L_n=n^{c_{\zetaGap}}\) for a constant
\(c_{\zetaGap}\) and all sufficiently large \(n\), uniformly over the
code parameters; as \(\zetaGap\) is fixed, \(\log L_n=O(\log n)=o(n)\).

All averages \(\Gamma_C(\vec g_A)\) on nonempty supports of size
smaller than \(k_n\) vanish by \cref{lem:binary-components}. For
\(k_n\le r\le r_0\), \cref{lem:elementary-low-band} and
\(G(c_0,\zetaGap,\varepsilon)<0\) give
\[
  \sum_{|A|=r}|\Gamma_C(\vec g_A)|^2\le\exp(-\beta_{\mathrm L}n)
\]
for a fixed \(\beta_{\mathrm L}>0\) and all sufficiently large \(n\),
uniformly in \(r\). For the remaining support sizes,
\(2(k_n-1)\le n<3(k_n-1)\) for large \(n\), and
\cref{prop:qual-high-sum} gives
\[
  \frac12\sum_{A\subseteq[n]\colon|A|>r_0}|\Gamma_C(\vec f_A)|
  \le\exp(-\beta_{\mathrm H}n)
\]
for a fixed \(\beta_{\mathrm H}>0\); here \((k_n-1)/n\to c_0\), so
continuity and \(H(c_0,\varepsilon)<0\) ensure that the rounding in
\(k_n\) does not change the sign of the exponent for large \(n\).

\Cref{prop:master-inequality} now yields
\[
  \SD\bigl(\vec\tau(\vec U_C),\vec\tau(\vec U_{\Fp^n})\bigr)
  \le\frac12\sqrt{n\exp(-\beta_{\mathrm L}n)}+\exp(-\beta_{\mathrm H}n)
  \le\exp(-\gamma n)
\]
for some \(\gamma>0\) and all sufficiently large \(n\); the factor
\(n\) counts the support sizes in the first sum and is absorbed by the
exponential decay. None of the constants depends on the leakage maps.
For a larger GRS code, the subcode obtained by restricting the
polynomial degree to less than \(k_n-1\) has the dimension just
treated, and \cref{lem:code-monotonicity} gives the same guarantee for
every larger dimension.

Choose any fixed \(c_{\mathrm{bh}}\) with \(c_0<c_{\mathrm{bh}}<1/2\).
The code theorem follows because every GRS code of rate at least
\(c_{\mathrm{bh}}\) has dimension at least \(\ceil{c_0n}\). If a
Shamir scheme has \(k\ge c_{\mathrm{bh}}n\), then for sufficiently
large \(n\), \(k-1\ge c_{\mathrm{bh}}n-1\ge c_0n\), so its code
dimension is also at least \(\ceil{c_0n}\), and
\cref{lem:shamir-from-code} proves \cref{thm:intro-main} with the same
\(c_{\mathrm{bh}}\) and leakage distance at most \(2e^{-\gamma n}\).
\end{proof}

\begin{corollary}[An explicit constant]
\label{cor:explicit-below-half}
The conclusions of \cref{thm:grs-below-half,thm:intro-main} hold with
\(c_{\mathrm{bh}}=1/2-10^{-14}\).
\end{corollary}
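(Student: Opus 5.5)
The proof will rerun the argument of \cref{thm:grs-below-half} with explicit numbers in place of the continuity step. I will set \(\delta=10^{-14}\) (up to rounding slack), \(c_0=1/2-2\delta\), \(\zetaGap=\delta\), and pick one explicit \(\varepsilon\). Then I will check by direct calculation that \(G(c_0,\zetaGap,\varepsilon)<0\) and \(H(c_0,\varepsilon)<0\), and that the side conditions hold: \(1-\varepsilon<2c_0-\zetaGap\), \(1/3<c_0<1/2\), and \(2(k_n-1)\le n<3(k_n-1)\) for large \(n\). The last paragraph of the proof of \cref{thm:grs-below-half} then goes through verbatim. It gives the code statement for every rate at least \(c_{\mathrm{bh}}=1/2-10^{-14}\), and \cref{lem:shamir-from-code} gives the Shamir statement. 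The list-size input \(\List_D(R_n)\le n^{c_\zetaGap}\) comes from \cref{thm:rs-list-decoding} at the fixed gap \(\zetaGap\). Only its polynomial form matters, since it enters as \(o(n)\), so the unknown size of \(c_\zetaGap\) is harmless.

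The numerics are driven by \(H\). Here \(\log\kappa_0=\frac18\log(3/4)\approx-0.03596\), so \(\sqrt{\beta_0}=|\log\kappa_0|/8\approx0.004495\) and \(\beta_0\approx2.02\times10^{-5}\). At \(c=1/2\), the first branch of \(H\) is \(\frac14\log\kappa_0+\sqrt{\beta_0}\approx-0.00450\), and the second is \(-\beta_0/2\approx-1.01\times10^{-5}\). So \(H(1/2,\varepsilon)<0\) requires \(h(\varepsilon)<\beta_0/2\approx10^{-5}\). Since \(h(\varepsilon)\approx\varepsilon\log(e/\varepsilon)\), the choice \(\varepsilon=5\times10^{-7}\) gives \(h(\varepsilon)\approx7.8\times10^{-6}\). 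That leaves a margin of about \(2\times10^{-6}\) in the second branch and a much larger one in the first.

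Next I account for the departure of \(c\) from \(1/2\). In the second branch, \((1-2c)(\log2+1/2)\) costs about \(4\delta\cdot1.19\approx5\times10^{-14}\), which the margin absorbs easily. In the first branch the change is of order \(\delta\) and is negligible. For \(G\), the value at \((1/2,0,\varepsilon)\) is \(-\varepsilon^2/2\approx-1.25\times10^{-13}\). The perturbation is \((1-2c_0+\zetaGap)\log2+\log\frac1{2(c_0-\zetaGap)}\), which to first order is \(5\delta\log2+6\delta\approx9.5\delta\approx9.5\times10^{-14}\). This is smaller than \(1.25\times10^{-13}\), so \(G<0\). Finally, \(1-\varepsilon<1-5\delta=2c_0-\zetaGap\) holds because \(5\times10^{-7}\gg5\times10^{-14}\).

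The main obstacle is this \(G\) bound. The margin there is only of the same order as \(\delta\), because \(\varepsilon^2/2\) is tiny, so the first-order expansions need rigorous error control. I will handle this with \(\log(1/(1-x))\le x+x^2\) for small \(x\). If the constants turn out too tight, I will reduce \(c_0\)'s offset to \(\delta\) with \(\zetaGap=\delta/2\), or choose \(\varepsilon\) slightly larger, up to \(h(\varepsilon)<\beta_0/2\) minus the error terms, and recheck both inequalities. The \(O(\log n)\) and \(o(n)\) terms only require \(n\) large, so they do not affect the constant.
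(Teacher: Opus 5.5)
Your proposal is correct and matches the paper's proof: instantiate the proof of \cref{thm:grs-below-half} with \(c_0=1/2-2\cdot10^{-14}\), a gap \(\zetaGap\) of order \(10^{-14}\), and \(\varepsilon\) of order \(10^{-7}\), then check \(G<0\), \(H<0\), and the side conditions numerically. The paper takes \(\varepsilon=6\cdot10^{-7}\) and \(\zetaGap=2\cdot10^{-14}\) where you take \(5\cdot10^{-7}\) and \(10^{-14}\); your values also work, giving \(G\approx-3\cdot10^{-14}\) and \(H\approx-2.3\cdot10^{-6}\).
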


\begin{proof}
Taking
\(\varepsilon=6\cdot10^{-7}\), \(\zetaGap=2\cdot10^{-14}\), and
\(c_0=1/2-2\cdot10^{-14}\) gives
\(G(c_0,\zetaGap,\varepsilon)<-5.6\cdot10^{-14}\) and
\(H(c_0,\varepsilon)<-3.7\cdot10^{-7}\).
These choices satisfy \(0<\zetaGap<c_0<1/2\),
\(1-\varepsilon<2c_0-\zetaGap\), and
\(c_0<c_{\mathrm{bh}}<1/2\), so the preceding proof applies.
\end{proof}

\section*{Acknowledgments}
The author used ChatGPT (Sol and Astra) as interactive research
assistants throughout this work: to discuss and refine ideas and
arguments, to check calculations and parameter choices, to locate
related work, and to help draft and revise the text. Some mathematical
suggestions made by the models were incorporated after the author's
own verification. In particular, discussions with the models brought
recent list-decoding and list-recovery results
for Reed--Solomon codes to the author's attention and helped assess
their applicability. The author supplied
the research questions and the main ideas, evaluated and corrected the
models' output, and takes full responsibility for the correctness of
the results, the references, and the final presentation.

\begingroup
\small
\raggedright
\bibliographystyle{alpha}
\bibliography{references}
\endgroup

\clearpage
\appendix
\section{List recovery with growing input lists}
\label{app:growing-lists}

We prove \cref{thm:rs-list-decoding}, keeping the dependence on the
input-list size explicit. Interpolation gives a differential equation
satisfied by every candidate. We then count its solutions using their
initial coefficients and their agreement equations. We state the
interpolation input from~\cite{Jeronimo26} explicitly and give the
remaining counting argument below.

The Hasse derivatives are defined by
\(P(X+U)=\sum_{j\ge0}P^{[j]}(X)U^j\).
Write \(\deg_Y^{\mathrm{tot}}Q\) for total degree in all the \(Y\)-variables.

\begin{lemma}[Interpolation, from~{\cite[Proposition~3.7]{Jeronimo26}}]
\label{lem:app-interpolation}
Fix \(0<\zetaGap<1\). There are constants \(b_{\zetaGap}>0\) and
\(t_{\zetaGap}\ge1\) such that the following holds. Let
\(\alpha_1,\ldots,\alpha_n\in\Fp\) be distinct, let
\(S_1,\ldots,S_n\subseteq\Fp\), and put \(M=\sum_i|S_i|\).
If integers \(K,A,t\) satisfy
\[
  2\le K\le A\le n,\qquad t_{\zetaGap}\le t<K,\qquad
  \frac{A}{K-1}\ge\frac1{1-\zetaGap/2},\qquad
  M\le b_{\zetaGap}(K-1)t^{\zetaGap/3},
\]
then there is a nonzero \(Q(X,Y_0,\ldots,Y_t)\) with
\[
  \deg_X Q\le t^3n,\qquad
  \deg_Y^{\mathrm{tot}}Q\le
       \ceil{\frac{t^3A}{K-1}},
\]
such that every \(P\in\Fp[X]\) of degree below \(K\) agreeing with
the lists on at least \(A\) coordinates satisfies
\(Q(X,P,P^{[1]},\ldots,P^{[t]})=0\).
\end{lemma}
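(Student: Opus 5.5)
This lemma is an interpolation step attributed to Jeronimo's Proposition~3.7, so my plan is to reproduce its proof: a dimension count for a multivariate interpolating polynomial, followed by a vanishing argument. I would set $D_Y=\ceil{t^3A/(K-1)}$ and $D_X=t^3n$. I would then seek $Q(X,Y_0,\ldots,Y_t)$ of $X$-degree at most $D_X$ and total $Y$-degree at most $D_Y$, nonzero, that vanishes to high multiplicity at the interpolation points. The natural condition is as follows. For each coordinate $i$ and each $s\in S_i$, the polynomial $Q(X,P,P^{[1]},\ldots,P^{[t]})$ should vanish to order roughly $t^3$ at $X=\alpha_i$, for every $P$ with $P(\alpha_i)=s$. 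This is achieved by imposing that $Q$ vanish, as a polynomial in the jet variables, on the affine jet space of polynomials through $(\alpha_i,s)$.

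The key steps, in order, are these.

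\textbf{Step 1: a reformulation of the composed vanishing.} Substitute $Y_j\mapsto P^{[j]}(X)$. Expand around $\alpha_i$ using $P(\alpha_i+U)=\sum_j P^{[j]}(\alpha_i)U^j$, together with the Hasse-derivative identities $(P^{[j]})^{[l]}=\binom{j+l}{j}P^{[j+l]}$. Then the order-$<m$ jet of $Q(X,P,\ldots,P^{[t]})$ at $\alpha_i$ is a polynomial function of the truncated jet $(P^{[0]}(\alpha_i),\ldots,P^{[t+m-1]}(\alpha_i))$. Requiring this to vanish whenever $P^{[0]}(\alpha_i)\in S_i$ is a linear condition on the coefficients of $Q$.

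\textbf{Step 2: the count.} I would bound the number of linear conditions per list element by roughly $\binom{D_Y+t}{t}\cdot(\text{multiplicity})$, up to lower-order terms. The number of unknowns is $(D_X+1)\binom{D_Y+t+1}{t+1}$. The hypothesis $M\le b_\zeta(K-1)t^{\zeta/3}$, together with $t\ge t_\zeta$, is exactly what makes the unknowns exceed the constraints. This is where I expect a ratio of the form $D_Y/(t+1)$ versus $M\cdot m/n$ to appear.

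\textbf{Step 3: the vanishing argument.} Take $P$ of degree below $K$ agreeing with the lists on at least $A$ coordinates. The univariate polynomial $R(X)=Q(X,P,\ldots,P^{[t]})$ has degree at most $D_X+D_Y(K-1)$, and it vanishes to order $m$ at at least $A$ points. With $m\approx t^3(1+\Theta(\zeta))$ and the condition $A/(K-1)\ge 1/(1-\zeta/2)$, we get $Am>\deg R$, which forces $R=0$.

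The main obstacle is Step 2. Making the parameter bookkeeping (the choice of $m$, the constants $b_\zeta$ and $t_\zeta$, and the exponent $\zeta/3$) close requires Jeronimo's sharper counting of the jet-space constraints. That counting exploits the fact that high-order derivatives are not free, since $\deg P<K$ makes $P^{[j]}$ vanish for $j\ge K$, whereas a naive count does not. I would try first to follow Jeronimo's proof verbatim and to track only how $M$ enters. In keeping with the paper's practice of citing the interpolation input, I would present this lemma as imported from~\cite[Proposition~3.7]{Jeronimo26}, with the parameter translation checked rather than reproved.
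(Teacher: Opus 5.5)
Importing \cite[Proposition~3.7]{Jeronimo26} is exactly what the paper does. However, your proposal does not carry out the parameter translation that constitutes the paper's justification, and the self-contained sketch you give would not work. In your Step~3, an $X$-degree budget of $t^3n$ together with total $Y$-degree up to $\lceil t^3A/(K-1)\rceil$ allows $\deg R$ of about $t^3(n+A)$. Multiplicity $m\approx t^3(1+\Theta(\zeta))$ at $A\le n$ points supplies only about $t^3A(1+\Theta(\zeta))$ zeros, so $R=0$ does not follow, and the slack $A/(K-1)\ge 1/(1-\zeta/2)$ does not help there. The vanishing step needs a joint restriction on the monomials $X^a\prod_jY_j^{e_j}$ that forces $\deg R$ below the number $mA$ of zeros for every $P$ of degree below $K$ (for instance $a+(K-1)\sum_je_j<mA$). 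Accordingly, the paper reads $\deg_XQ<t^3A$ and the $Y$-degree bound off Jeronimo's monomial restrictions (his Equations~(11)--(12)). Only afterwards does it weaken the former to $t^3n$, the form used in \cref{lem:app-counting}.

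In your Step~2, a cost of about $m\binom{D_Y+t}{t}$ conditions per list element is inconsistent with your own Step~1. There, vanishing for every $P$ through $(\alpha_i,s)$ imposes one condition per monomial in the free jet coordinates $P^{[1]}(\alpha_i),\dots,P^{[t+m-1]}(\alpha_i)$, not just $t$ of them. Nor can a count of that size be correct. With the weighted budget, zero slack $A=K$, and $m=t^3$, there are at least $(K-1)\binom{m+t+1}{t+2}$ unknowns, which exceeds $M\,m\binom{m+t}{t}$ once $M$ is below a constant multiple of $(K-1)t$. Take lists of two values ($M=2n$, $K\approx n/2$) and a large constant $t$. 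You would get a nonzero $Q$ annihilating every $P$ of degree below $K$ with $K$ agreements, and \cref{lem:app-counting} would bound their number by $n^{O(t)}$. But fixing $K$ coordinates and choosing one list value at each already yields $2^K$ such polynomials.

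So a surplus that grows with $t$ can come only from the slack; in the cited result it appears as the factor $t^{\lambda\vartheta}=t^{\zeta/3}$ in the hypothesis on $M$. Producing that count is the real content of Proposition~3.7. The truncation $P^{[j]}=0$ for $j\ge K$ cannot supply it: when $t+t^3\le K$, none of the jet coordinates involved is forced to vanish.

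The paper does not reprove this. Its justification is precisely the translation you defer:
\begin{itemize}
\item take derivative order $t$, multiplicity $t^3$, $\vartheta=\zeta/2$ (matching the hypothesis $A/(K-1)\ge 1/(1-\zeta/2)$), and $\lambda=2/3$, so that the incidence exponent $\lambda\vartheta$ equals $\zeta/3$ and yields the hypothesis $M\le b_\zeta(K-1)t^{\zeta/3}$;
\item read $\deg_Y^{\mathrm{tot}}Q\le\lceil t^3A/(K-1)\rceil$ and $\deg_XQ<t^3A\le t^3n$ off the monomial restrictions;
\item note that the interpolation is linear algebra over any field, so no upper bound on $p$ enters.
\end{itemize}
Your proposal says this translation would be checked but does not check it. The parameters it does propose, multiplicity $t^3(1+\Theta(\zeta))$ and a stand-alone $X$-budget of $t^3n$, do not match that translation.
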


Apply the cited proposition with derivative order \(t\), multiplicity
\(t^3\), \(\vartheta=\zetaGap/2\), and \(\lambda=2/3\). Its incidence
exponent is \(\lambda\vartheta=\zetaGap/3\). The monomial restrictions
in~\cite[Equations~(11)--(12)]{Jeronimo26} give the stated total
\(Y\)-degree bound and \(\deg_XQ<t^3A\le t^3n\).
The interpolation works over any coefficient field; no upper bound on
\(p\) is required.

\begin{lemma}[Counting solutions with many agreements]
\label{lem:app-counting}
Let \(1\le h\le n\), \(t,B\ge1\), and \(D_0\ge0\) be integers, with
\(p>\max\{h-1,B\}\). Fix distinct \(\alpha_i\in\Fp\), lists \(S_i\)
with \(M=\sum_i|S_i|\ge1\), and a nonzero
\(Q\in\Fp[X,Y_0,\ldots,Y_t]\)
with \(\deg_X Q\le D_0\) and \(\deg_Y^{\mathrm{tot}}Q\le B\).
The number of polynomials \(P\in\Fp[X]\) of degree below \(h\) satisfying
\[
  Q(X,P,P^{[1]},\ldots,P^{[t]})=0,\qquad
  |\{i:P(\alpha_i)\in S_i\}|\ge h
\]
is at most
\[
  B^2\bigl(D_0+B(h-1)+1\bigr)
       \sum_{j=0}^t(ME)^j,\qquad E=8B(h+1)^2.
\]
\end{lemma}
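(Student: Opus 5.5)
The plan is to parametrize the solutions of the differential equation by finitely many initial Taylor coefficients. The agreement conditions then cut this parameter space down to finitely many points, and the count follows by a Bézout-type argument. The three factors in the bound correspond to three choices: $B^2$ for selecting an irreducible factor of $Q$ and a distinguished variable, $D_0+B(h-1)+1$ for selecting an expansion point, and $\sum_j(ME)^j$ for the successive agreement equations.

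\emph{Step 1: reduce to a nondegenerate equation.} Fix a solution $P$. Some irreducible factor $Q'$ of $Q$ with positive $Y$-degree satisfies $Q'(X,P,P^{[1]},\ldots,P^{[t]})=0$. Pure-$X$ factors cannot vanish on $P$, and there are at most $B$ factors of positive $Y$-degree. Let $s$ be the largest index such that $Y_s$ occurs in $Q'$. Choosing $Q'$ and $s$ costs at most $B\cdot\min\{B,t+1\}\le B^2$ options after merging. Since $p>B$, $\partial Q'/\partial Y_s$ is a nonzero polynomial of $Y_s$-degree below that of $Q'$, so by irreducibility and minimality $W(X)=(\partial Q'/\partial Y_s)(X,P,\ldots,P^{[s]})\ne0$. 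Its degree is at most $D_0+B(h-1)$. Hence among any fixed set of $D_0+B(h-1)+1$ points of $\Fp$ (available since $p>h-1$ and we may enlarge by using the $\alpha_i$ together with other elements; if $p$ is too small, one expands at a generic formal point instead), some $x_0$ has $W(x_0)\ne0$. Choosing $x_0$ gives the middle factor.

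\emph{Step 2: Taylor recursion.} Expand $P(x_0+U)=\sum_\ell c_\ell U^\ell$ and differentiate the identity $Q'(X,P,\ldots,P^{[s]})=0$ in the Hasse sense at $x_0$. The coefficient of order $\ell+s$ then enters linearly with coefficient $W(x_0)\ne0$, and all other terms involve only lower $c$'s. So every $c_\ell$ with $\ell\ge s$ is a polynomial, after division by the fixed nonzero scalar $W(x_0)$, in the initial data $\vec c=(c_0,\ldots,c_{s-1})$. The data $(Q',s,x_0)$ fixes the equation $Q'(x_0,c_0,\ldots)=0$, but we simply treat all of $\Fp^{s}$ as the parameter space. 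Tracking degrees through the recursion, each $P(\alpha_i)$ becomes a polynomial in $\vec c$ of degree at most $E=8B(h+1)^2$. Only $h$ coefficients matter because $\deg P<h$, and each step of the recursion multiplies the degree by at most $O(B)$-linear growth in $\ell$; this is the bookkeeping that should produce $(h+1)^2$.

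\emph{Step 3: cut down by agreements.} The solutions with given $(Q',s,x_0)$ lie in an affine variety $Z_0\subseteq\Fp^{s}$, namely the points $\vec c$ for which the recursion yields degree $<h$. Iteratively pick an agreement $P(\alpha_i)=\sigma$ with $\sigma\in S_i$, for at most $M$ choices, that does not vanish identically on the current irreducible component. Intersecting with this hypersurface of degree $\le E$ drops the dimension by one and multiplies the number of components times degree by at most $E$. We stop after $j\le s\le t$ steps, when the component is a point. Such a non-identically-vanishing agreement always exists while the dimension is positive: the solution $P$ has at least $h$ agreements, and if all held identically on a positive-dimensional component, the $h$ values $P(\alpha_i)$ would be constant on it. Since $\deg P<h$, $P$ would then be constant on the component, hence so would $\vec c$, a contradiction. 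Summing over $j$ gives $\sum_{j=0}^t(ME)^j$ points. I expect the main obstacle to be the degree bookkeeping of Step~2, namely keeping the degree of $P(\alpha_i)$ in $\vec c$ polynomial in $h$ and $B$ independently of $p$, together with making this Bézout induction rigorous over a finite field. For the latter I would pass to $\overline{\Fp}$ and use the refined Bézout inequality (sum of degrees of components).
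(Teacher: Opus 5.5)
Your overall plan (expansion point, Taylor recursion, B\'ezout cuts with your nondegeneracy argument) is the paper's, but Step~1 has a genuine gap. Irreducibility of \(Q'\) together with \(\partial Q'/\partial Y_s\ne0\) only shows that these two polynomials share no factor. It does not stop a particular solution curve \(X\mapsto(X,P,P^{[1]},\ldots,P^{[s]})\) from lying on both hypersurfaces. Take \(Q=Q'=Y_1^2-4Y_0\), which is irreducible (it is linear in \(Y_0\)), with \(s=1\). Besides the regular solutions \((X-a)^2\), it has the singular solution \(P=0\), for which \(W=2P'\equiv0\). No expansion point exists for this \(P\), so your count omits it, even though \(P=0\) can have \(h\) agreements. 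Also, \(s\) is determined by \(Q'\), so your factor for choosing \(s\) pays for nothing.

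The missing idea is descent rather than factorization. With \(H=\partial Q/\partial Y_s\), first count the solutions with \(H(X,P,\ldots,P^{[s]})\not\equiv0\). Every remaining solution satisfies the nonzero equation \(H=0\), which has smaller total \(Y\)-degree and no larger \(X\)-degree. Repeating gives at most \(B\) stages, and this is one of the two factors of \(B\).

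Step~2 is also wrong as written, and its repair supplies the other factor of \(B\). The order-zero equation \(Q'(x_0,c_0,\ldots,c_s)=0\) is not linear in \(c_s\). Only the coefficient of \(U^j\) with \(j\ge1\) is linear in \(c_{s+j}\), with multiplier \(\binom{s+j}{s}H_{x_0}(\vec c)\), where \(H_{x_0}(\vec c)=(\partial Q'/\partial Y_s)(x_0,c_0,\ldots,c_s)\). This multiplier is not a fixed scalar: it equals \(W(x_0)\) for your particular \(P\), but it varies over the family being counted.

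So the parameters are \(\vec c=(c_0,\ldots,c_s)\in\overline{\Fp}^{s+1}\), restricted to \(\{Q'(x_0,\vec c)=0,\ H_{x_0}(\vec c)\ne0\}\). The higher coefficients are rational, \(c_{s+j}=N_j(\vec c)/H_{x_0}(\vec c)^{2j-1}\) with \(\deg N_j\le4Bj^2\). To get this bound, use \(\sum_u(2a_u-1)\le2j-2\) and \(\sum_u a_u^2\le j^2-2j+2\) for the noninitial factors \(c_{s+a_u}\) of each monomial. Each agreement must then be multiplied by \(H_{x_0}^{2h}\) to become a polynomial equation, of degree at most \(4B(h-1)^2+2hB+1\le E\).

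The starting hypersurface is proper, because its \(c_s\)-derivative \(H_{x_0}\) is nonzero at candidates. It therefore has dimension at most \(s\) and degree at most \(B\), and B\'ezout gives \(B\sum_{j\le s}(ME)^j\) candidates per point. Starting instead from all of \(\overline{\Fp}^{s+1}\) could need \(t+1\) cuts and would exceed the stated bound. Your cutting argument in Step~3 then works on the open part \(\{H_{x_0}\ne0\}\) of each component, after discarding components contained in \(\{H_{x_0}=0\}\).

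Two smaller points. The expansion points should be a fixed set in \(\overline{\Fp}\). The recursion also needs \(s<h\); the paper disposes of the case \(h\le t\) directly with the bound \(M^h\).
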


\begin{proof}
We make the recurrence and counting arguments
of~\cite[Lemmas~5.6--5.7 and~6.2]{Jeronimo26} explicit.
If \(h\le t\), there are at most \(M^h\) ordered choices of \(h\)
agreements at distinct coordinates, each determining at most one
polynomial \(P\). This satisfies the claimed bound.
Assume \(h>t\), and work over the algebraic closure \(\overline{\Fp}\).

We use the following standard form of B\'ezout's inequality
(see~\cite[Lemma~5.3]{Jeronimo26}). The degree of an irreducible algebraic
set is the number of points, counted with multiplicity, in a generic
linear section of complementary dimension. A hypersurface defined by a
nonzero polynomial of degree at most \(B\) has sum of component degrees
at most \(B\). Intersecting an irreducible component with a degree-\(E\)
equation that does not vanish identically on it lowers its dimension
and increases the sum of degrees by a factor of at most \(E\).
A zero-dimensional algebraic set has at most its sum of degrees many points.

If \(Q\) is independent of all \(Y\)-variables, there are no solutions.
Otherwise let \(s\le t\) be its largest \(Y\)-index and put
\(H=\partial Q/\partial Y_s\ne0\); nonvanishing follows from \(B<p\).
First count candidates with
\(H(X,P,\ldots,P^{[s]})\ne0\). Fix a set of
\(D_0+B(h-1)+1\) distinct points \(\alpha\in\overline{\Fp}\),
independently of \(P\). Since this polynomial has degree at most
\(D_0+(B-1)(h-1)\), it is nonzero at one of these points.
The algebraic closure allows this choice even when the number of
points exceeds \(p\).

\paragraph{Recovering the coefficients at a fixed \(\alpha\).}
Write \(P(\alpha+U)=\sum_{q=0}^{h-1}c_qU^q\), with
\(c_q\in\overline{\Fp}\), put
\(\vec c=(c_0,\ldots,c_s)\), and set
\(H_\alpha(\vec c)=H(\alpha,c_0,\ldots,c_s)\).
The identity
\[
  P^{[i]}(\alpha+U)=
       \sum_{q=i}^{h-1}\binom qi c_qU^{q-i}
\]
shows that the coefficient of \(U^j\), for \(1\le j\le h-1-s\),
in the substituted equation is
\[
  \binom{s+j}{s}H_\alpha(\vec c)c_{s+j}
       +R_j(c_0,\ldots,c_{s+j-1})=0.
\]
Indeed, in \(P^{[s]}(\alpha+U)\), the coefficient \(c_{s+j}\)
first occurs as \(\binom{s+j}{s}c_{s+j}U^j\). In every lower
derivative \(P^{[i]}\) it occurs at order \(s+j-i>j\).
Thus it contributes to the coefficient of \(U^j\) only through a
single \(Y_s\) factor, with all other factors evaluated at \(U=0\);
the resulting multiplier is \(H_\alpha(\vec c)\).
Since \(s+j\le h-1<p\), the binomial coefficient is nonzero.
On \(H_\alpha\ne0\), these equations recursively give
\[
  c_{s+j}=\frac{N_j(\vec c)}{H_\alpha(\vec c)^{2j-1}},
  \qquad \deg N_j\le4Bj^2.
\]
Here and below, degrees are in the initial coefficients \(\vec c\).
To verify the bound, consider the noninitial factors
\(c_{s+a_1},\ldots,c_{s+a_v}\) in a monomial of \(R_j\).
Their orders satisfy \(\sum_u a_u\le j\), and the case
\(v=1,a_1=j\) is absent. Thus, for \(j\ge2\),
\[
  \sum_u(2a_u-1)\le2j-2,\qquad
  \sum_u a_u^2\le j^2-2j+2.
\]
Using the previous recurrences and clearing denominators with
\(H_\alpha^{2j-2}\) gives
\[
  \deg N_j
     \le B+4B(j^2-2j+2)+(2j-2)B
     \le4Bj^2.
\]
For \(j=1\), directly \(\deg N_1\le B\). This proves the bounds
without any hidden dependence on \(B\) or \(t\).

\paragraph{Counting the initial coefficients.}
The recurrences define a polynomial \(P_{\vec c}\) of degree below \(h\)
whenever \(H_\alpha(\vec c)\ne0\). Its first Taylor coefficients are
\(\vec c\), so this parametrization is injective.
Every candidate at this \(\alpha\) lies in
\[
  Q(\alpha,c_0,\ldots,c_s)=0,\qquad H_\alpha(\vec c)\ne0.
\]
We may count all such initial vectors over \(\overline{\Fp}\) satisfying
the agreement equations, even if their polynomials do not solve the full
differential equation.
When this set is nonempty, its defining polynomial is nonzero, because
its derivative in \(c_s\) is \(H_\alpha\). The hypersurface therefore
has dimension at most \(s\) and sum of component degrees at most \(B\).

Each of the \(M\) possible agreements becomes a polynomial equation
after multiplication by \(H_\alpha^{2h}\):
\[
  H_\alpha(\vec c)^{2h}\bigl(P_{\vec c}(\alpha_i)-b\bigr)=0,
       \qquad b\in S_i.
\]
The preceding numerator bounds give
\[
  \deg_{\vec c}
    \bigl[H_\alpha^{2h}(P_{\vec c}(\alpha_i)-b)\bigr]
     \le4B(h-1)^2+2hB+1\le E.
\]
Any \(h\) satisfied agreements at distinct coordinates determine
\(P_{\vec c}\), hence \(\vec c\), uniquely where \(H_\alpha\ne0\).
Let \(W\) be a positive-dimensional irreducible component through a
candidate \(\vec c\). Since \(H_\alpha(\vec c)\ne0\), the set
\(W\cap\{H_\alpha\ne0\}\) is a nonempty open subset of \(W\).
If all of the candidate's agreement equations vanished identically on
\(W\), every point in this open subset would give the same polynomial
\(P_{\vec c}\), hence the same initial vector \(\vec c\), a
contradiction. At least one agreement equation therefore cuts \(W\)
properly. Repeating lowers the dimension to zero after at most \(s\)
cuts. Hence every candidate at this \(\alpha\) lies in a zero-dimensional
component obtained from a component of \(\{Q(\alpha,\vec c)=0\}\) by
at most \(s\) successive proper cuts with agreement equations.
Components contained in \(H_\alpha=0\) can be discarded.
For each ordered choice of equations, retain only proper intersections
at each step. For \(j\) cuts there are at most \(M^j\) such choices,
and B\'ezout bounds their contribution by \(BE^j\) for each choice. Hence
\[
  \#\{\text{candidates at this }\alpha\}
       \le B\sum_{j=0}^s(ME)^j.
\]

The remaining candidates satisfy \(H(X,P,\ldots,P^{[s]})=0\)
identically. Repeat with \(H\) in place of \(Q\). Each repetition
lowers the total \(Y\)-degree without increasing the \(X\)-degree,
so there are at most \(B\) stages. At each stage there are at most
\(D_0+B(h-1)+1\) choices of \(\alpha\) from the fixed set. Summing the displayed
bound over these stages and points proves the lemma.
\end{proof}

\begin{proof}[Proof of \cref{thm:rs-list-decoding}]
Write \(h\) for the code dimension. Scaling \(S_i\) by \(v_i^{-1}\)
reduces GRS codes to ordinary RS codes. Fix lists of size at most
\(\ell\) and put \(M=\sum_i|S_i|\le n\ell\); if \(M=0\), there are
no candidates. Set
\[
  A=h+\ceil{\zetaGap n}\le n,\qquad
  K=\ceil{(1-\zetaGap/2)A}.
\]
The agreement threshold \(A\) corresponds exactly to the radius
\(\floor{n-h-\zetaGap n}\).
For sufficiently large \(n\), depending only on \(\zetaGap\)
(ensured by enlarging \(c_{\zetaGap}\)),
\[
  K-h\ge\zetaGap n/2,\qquad
  K-1\ge(1-\zetaGap/2)A/2\ge\zetaGap n/4,\qquad
  \frac{A}{K-1}>\frac1{1-\zetaGap/2}.
\]
Choose \(a_{\zetaGap}\ge t_{\zetaGap}\) with
\(b_{\zetaGap}\zetaGap a_{\zetaGap}^{\zetaGap/3}\ge4\), and put
\(t=\ceil{a_{\zetaGap}\ell^{3/\zetaGap}}\). Then
\[
  M\le n\ell\le b_{\zetaGap}(K-1)t^{\zetaGap/3}.
\]
Increasing \(c_{\zetaGap}\), the hypothesis
\(n\ge c_{\zetaGap}\ell^{9/\zetaGap}\) ensures \(t<K\) and
\(5t^3\le n\). Apply \cref{lem:app-interpolation}; every candidate
of degree below \(h\le K\) satisfies its equation, with
\[
  D_0=t^3n\le n^2,\qquad
  B=\ceil{\frac{t^3A}{K-1}}\le5t^3\le n<p.
\]
Thus \cref{lem:app-counting} applies. Since \(h\le n<p\),
\(M\le n^2\), and \(E=8B(h+1)^2\le32n^3\), it gives
\[
  \#\{\text{candidates}\}
     \le3n^4(t+1)(32n^5)^t
     \le n^{18t}
     \le n^{c_{\zetaGap}\ell^{3/\zetaGap}},
\]
after enlarging \(c_{\zetaGap}\) once more. All constants depend
only on \(\zetaGap\), and none depend on \(p\).
\end{proof}

\section{An alternative list-recovery calculation from DKT}
\label{app:dkt-lists}

We give a separate derivation of \cref{thm:rs-list-decoding} from
Dao, Kominers, and Thaler (DKT)~\cite{DKT26}.
Their Theorem~G.2 states a list-recovery bound for fixed input-list
size. The finite estimates in their Lemma~6.2 and the proof of
Theorem~6.7 make the dependence on that size explicit, as follows.

\begin{proof}[Alternative proof of \cref{thm:rs-list-decoding}]
Write \(h\) for the code dimension and fix lists \(S_i\) of size at
most \(\ell\). Scaling \(S_i\) by \(v_i^{-1}\) reduces GRS codes to
ordinary RS codes. We count polynomials of degree below \(h\) with
at least \(A=h+\ceil{\zetaGap n}\) agreements; this is exactly the
radius \(\floor{n-h-\zetaGap n}\) in the theorem.

\paragraph{Parameters.}
Set
\[
  \rho=\max\{h/n,\zetaGap/4\},\qquad
  a=\rho+\zetaGap/2.
\]
Since \(h\le(1-\zetaGap)n\),
\[
  \zetaGap/4\le\rho\le1-3\zetaGap/4,\qquad
  a\le1-\zetaGap/4<1,\qquad h\le\rho n,\qquad an\le A.
\]
Choose the derivative order \(t\), interpolation multiplicity \(s\),
and total degree bound \(B\) in the derivative variables by
\[
  t=\ceil{\max\left\{519,\frac16
          \left(\frac{320\ell}{9\zetaGap}\right)^{2/\zetaGap}
                     \right\}},\qquad
  s=\ceil{300t^2\log(6t)},\qquad
  B=\ceil{8s/\zetaGap}.
\]
Here \(t\) is denoted by \(d\) in DKT. Their Equation~(83) gives
\[
\begin{aligned}
  \Gamma(\rho,a,t)
    &=\frac{9\rho}{40}(6t)^{(a-\rho)/a}(1+1/t)\\
    &\ge\frac{9\zetaGap}{160}(6t)^{\zetaGap/2}(1+1/t)
     \ge2\ell(1+1/t).
\end{aligned}
\]
DKT's Lemma~6.2 bounds the finite interpolation ratio from below by
\(\Gamma(\rho,a,t)e^{-1677/10^6}>\ell\) when the local constraint
rank is nonzero; when it is zero, the same lemma gives a nonzero
interpolation space directly. Thus the interpolation
condition in the proof of their Theorem~G.2 holds for lists of size
\(\ell\), with the displayed finite choice of \(s\).

\paragraph{Length and characteristic.}
The interpolant in the proof of DKT's Theorem~6.7 has total degree
at most \(\ceil{2s/\rho}\le B\) in the derivative variables, and the
construction requires
\[
  n\ge\ceil{\max\left\{
     \frac{t+1}{\rho},\ \ceil{2s/\rho}+1,\ \frac1{1-\rho}
                    \right\}}.
\]
Our choices give \(t\ge\ell\), \(s\ge t+1\), and
\[
  \frac{t+1}{\rho}\le\frac{4(t+1)}{\zetaGap}\le B,\qquad
  \frac1{1-\rho}\le\frac4{3\zetaGap}<B,\qquad
  \ell\le B,\qquad 4/\zetaGap<B.
\]
Consequently \(n\ge B+2\) suffices. It also gives
\(p>n>\max\{h-1,t,B\}\), satisfying the characteristic
condition of DKT's Theorem~G.2. Finally,
\[
  t=O_{\zetaGap}(\ell^{2/\zetaGap}),\qquad
  B=O_{\zetaGap}(\ell^{4/\zetaGap}\log(2\ell)),
\]
so the theorem's hypothesis \(n\ge c_{\zetaGap}\ell^{9/\zetaGap}\)
ensures \(n\ge B+2\) after increasing \(c_{\zetaGap}\).

\paragraph{List size.}
DKT's Theorem~G.2, Equation~(147), now gives
\[
\begin{aligned}
  \#\{\text{candidate polynomials}\}
    &\le B^2\left(\frac{2\ell B}{a-\rho}\right)^t n^t
     =B^2\left(\frac{4\ell B}{\zetaGap}\right)^t n^t\\
    &\le n^{4t+2}\le n^{5t}
     \le n^{c_{\zetaGap}\ell^{3/\zetaGap}}.
\end{aligned}
\]
The second line uses \(B,\ell,4/\zetaGap\le n\) and \(t\ge2\);
the last inequality follows by increasing \(c_{\zetaGap}\) once
more. All constants depend only on \(\zetaGap\), not on \(p\).
\end{proof}

\end{document}